\numberwithin{equation}{section}
\newtheorem{thm}{Theorem}[section]
\newtheorem{prop}[thm]{Proposition}
\newtheorem{lem}[thm]{Lemma}
\newtheorem{cor}[thm]{Corollary}
\newcommand{\be}{\begin{equation}}
\newcommand{\ee}{\end{equation}}
\newcommand{\bib}{\bibitem}
\newcommand{\nc}{\newcommand}
\nc{\Cbb}{\mathbb{C}}
\nc{\Nbb}{\mathbb{N}}
\nc{\Rbb}{\mathbb{R}}
\nc{\Zbb}{\mathbb{Z}}
\nc{\Ph}{\hat{P}}
\nc{\Qh}{\hat{Q}}
\nc{\sh}{\hat{s}}
\nc{\deh}{\hat{\delta}}
\nc{\ip}{\mathcal{P}}
\nc{\Ac}{\mathcal{A}}
\nc{\Bc}{\mathcal{B}}
\nc{\Dc}{\mathcal{D}}
\nc{\Ec}{\mathcal{E}}
\nc{\Fc}{\mathcal{F}}
\nc{\Lc}{\mathcal{L}}
\nc{\Oc}{\mathcal{O}}
\nc{\Sc}{\mathcal{S}}
\nc{\Vc}{\mathcal{V}}
\nc{\Zc}{\mathcal{Z}}
\nc{\Pb}{\mathsf{P}}
\nc{\tr}{\mathrm{tr}}
\nc{\gh}{\hat{g}}
\nc{\id}{\mathrm{id}}
\nc{\wj}{\mathsf{w}}
\nc{\Fo}{\overline{F}}
\nc{\ko}{\overline{k}}
\nc{\Ko}{\overline{K}}
\nc{\wo}{\overline{w}}
\nc{\Wo}{\overline{W}}
\nc{\xo}{\overline{x}}
\nc{\Xo}{\overline{X}}
\nc{\yo}{\overline{y}}
\nc{\Yo}{\overline{Y}}
\nc{\at}{\tilde{a}}
\nc{\Tt}{\tilde{T}}
\nc{\io}{\overline{i}}
\nc{\ioa}{\overline{1}}
\nc{\iob}{\overline{2}}
\nc{\ioc}{\overline{3}}
\nc{\iod}{\overline{4}}
\definecolor{dblue}{rgb}{.61,.61,1}
\definecolor{lightblue}{rgb}{.61,.61,1}
\definecolor{altblue}{rgb}{.61,.61,1}
\begin{document}

\thispagestyle{empty}

\begin{center}

\scalebox{1.0}{\textbf{\LARGE Integrable models from singly generated planar algebras}}
\\[0.6cm]
{\LARGE Xavier Poncini and J{\o}rgen Rasmussen}
\\[0.3cm]
\textit{School of Mathematics and Physics, University of Queensland\\ 
St Lucia, Brisbane, Queensland 4072, Australia}
\\[0.3cm] 
\textsf{x.poncini\!\;@\!\;uq.edu.au\quad\ j.rasmussen\!\;@\!\;uq.edu.au}

\vspace{1.4cm}

{\large\textbf{Abstract}}\end{center}
Not all planar algebras can encode the algebraic structure of a Yang--Baxter integrable model 
described in terms of a so-called homogeneous transfer operator. 
In the family of subfactor planar algebras, we focus on the ones known as singly generated
and find that the only such planar algebras underlying homogeneous Yang--Baxter integrable models are
the so-called Yang--Baxter relation planar algebras. According to a result of Liu,
there are three such planar algebras: the well-known Fuss--Catalan and Birman--Wenzl--Murakami
planar algebras, in addition to one more which we refer to as the Liu planar algebra.
The Fuss--Catalan and Birman--Wenzl--Murakami algebras are known to underlie Yang--Baxter integrable models,
and we show that the Liu algebra likewise admits a Baxterisation. We also show that 
the homogeneous transfer operator describing a model underlied by a singly generated 
Yang--Baxter relation planar algebra is polynomialisable, 
meaning that it is polynomial in a spectral-parameter-independent element of the algebra.

\newpage

\tableofcontents

\newpage

\section{Introduction}
\label{Sec:Intro}

In \cite{Jones17}, Jones proposed the ``royal road" as a route to construct a conformal field theory 
from a so-called finite index subfactor, where one extracts, from the subfactor,
Boltzmann weights of a critical two-dimensional lattice model, and uses these to construct 
a quantum field theory in the scaling limit. Despite the simplicity and promise of this proposal, Jones stressed 
himself that the available mathematical tools are likely insufficient in all but the simplest examples. The present 
work is an effort to expand our toolkit such that new and more scenic routes may one day be passable.

To set the stage, we recall that a \textit{subfactor planar algebra} \cite{Jones99,Jones00,KS04,JonesNotes} 
is a shaded planar algebra, $(A_{n,\pm})_{n\in\Nbb_0}$, that has a certain set of simple and physically 
well-motivated properties, including an inner product on each $A_{n,\pm}$. A subfactor planar algebra is said to be 
\textit{singly generated} \cite{BJ00,BJ03,BJL17} if it is generated by only \textit{one} generator aside from the 
default Temperley--Lieb generators \cite{TL71,Jones83}, and it is called a \textit{Yang--Baxter relation} 
(YBR) \textit{planar algebra} \cite{Liu15} if it satisfies certain algebraic relations generalising the familiar braid 
relations. According to Liu \cite{Liu15}, there are three singly generated YBR planar algebras (as well as 
quotients thereof): the two-colour Fuss--Catalan (FC) \cite{BJ97} and the Birman--Wenzl--Murakami (BMW) 
\cite{Mur87,BW89} planar algebras, in addition to one more which we refer to as the \textit{Liu planar algebra}.

Relations of the YBR form are typically needed to construct spectral-parameter dependent solutions to the 
Yang--Baxter equations \cite{McGuire64,Yang67,Baxter71,BaxterBook}, through a procedure called Baxterisation 
\cite{Jones90}. Indeed, the FC and BMW planar algebras are known to underlie Yang--Baxter integrable models 
\cite{DiF98,CGX91}, while, to the best of our knowledge, no integrability observation has been presented in the 
case of the Liu planar algebra. We remedy this by Baxterising the Liu algebra within the algebraic integrability 
framework put forward by the authors in \cite{PR22}.

Our main result is a classification of singly generated planar algebras that can \textit{encode} the algebraic 
structure of a \textit{homogeneous Yang--Baxter integrable model}.
Here, up to technical details, a planar algebra is said to \textit{encode} the homogeneous Yang--Baxter integrability 
if no proper planar subalgebra can take its place (the precise definition is given in Section \ref{Sec:YBpol}),
while a \textit{homogeneous Yang--Baxter integrable model} is one that is described by 
a \textit{homogeneous transfer operator}.
Such a transfer operator is built in a particular `uniform' way using a single $R$-operator and a pair of 
$K$-operators, each parameterised using a single parameter $u\in\mathbb{C}$ (see \eqref{Tnu2}) such that a 
set of local relations, including \textit{generalised} Yang--Baxter equations
(see Proposition \ref{Prop:YBE}), are satisfied. Moreover, a transfer operator is said to be 
\textit{polynomialisable} \cite{PR22} if it is polynomial in a spectral-parameter-independent algebra element.
Our findings may now be summarised as follows.

\begin{thm}\label{thm:1}
    Let $A$ be a singly generated planar algebra. 
    Then, there exists a homogeneous Yang--Baxter integrable model with $A$ encoding the integrability
    if and only if $A$ is a Yang--Baxter relation planar algebra.
\end{thm}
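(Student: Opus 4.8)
The plan is to establish the two implications separately, relying on Liu's classification \cite{Liu15} for the forward construction and on a direct analysis of the generalised Yang--Baxter equations for the converse.

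For the direction asserting that a YBR planar algebra admits a homogeneous integrable model, I would invoke Liu's result that the only singly generated YBR planar algebras (up to quotients) are the Fuss--Catalan, Birman--Wenzl--Murakami, and Liu planar algebras, and treat the three cases in turn. For the Fuss--Catalan and BMW algebras, Baxterisations yielding homogeneous transfer operators of the form \eqref{Tnu2} are already available \cite{DiF98,CGX91}, so it remains only to recast them within the present framework. For the Liu algebra, I would posit an ansatz for the $R$-operator as a spectral-parameter-dependent linear combination of the identity, the Temperley--Lieb generator, and the single additional generator, and solve the generalised Yang--Baxter equations of Proposition \ref{Prop:YBE} to fix the coefficient functions; a compatible pair of $K$-operators is then obtained by solving the corresponding boundary relations. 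In all three cases the final step is to verify the \emph{encoding} condition: one checks that the generator enters the $R$-operator (and the $K$-operators, where needed) with a nonvanishing coefficient, so that the subalgebra generated by these operators is all of $A$ and no proper planar subalgebra can host the model.

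For the converse, suppose a homogeneous transfer operator \eqref{Tnu2} exists with $A$ encoding the integrability. The $R$-operator then satisfies the generalised Yang--Baxter equations of Proposition \ref{Prop:YBE}. I would expand $R(u)$ in a basis of the relevant box space of the singly generated algebra, separating the Temperley--Lieb part from the contribution of the single generator. The encoding hypothesis guarantees that the generator appears nontrivially, for otherwise $R(u)$ and the $K$-operators would lie in the Temperley--Lieb planar subalgebra, which would then take the place of $A$ and contradict encoding. Substituting the expansion into the Yang--Baxter equations and separating the spectral-parameter dependence from the algebraic structure, the functional part fixes the coefficient functions while the algebraic part produces a system of identities among products of the generator with the Temperley--Lieb elements. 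The aim is then to show that these identities are precisely the defining relations of a YBR planar algebra. Here I would exploit polynomialisability \cite{PR22}: exhibiting $R(u)$ as a polynomial in a spectral-parameter-independent element reduces the Yang--Baxter equations to a finite set of exchange relations for that element, which are exactly the braid-type relations characterising the YBR condition.

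The main obstacle I anticipate is the converse direction, specifically the step of showing that the consistency conditions extracted from the generalised Yang--Baxter equations are not merely necessary consequences of the YBR relations, but in fact equivalent to them. The difficulty is that this must be argued for an \emph{arbitrary} singly generated planar algebra rather than case by case, so one cannot simply appeal to the three known examples; instead one must demonstrate, at the level of the abstract algebraic relations, that a nondegenerate spectral-parameter-dependent solution of the Yang--Baxter equations forces the generalised braid relations to hold. Showing that the encoding condition excludes exactly the degenerate solutions that collapse into the Temperley--Lieb subalgebra is the delicate point that ties the two directions together.
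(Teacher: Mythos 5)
Your forward direction matches the paper: invoke Liu's classification (Theorem \ref{thm:LiuYBRClass}), recast the known FC and BMW Baxterisations \cite{DiF98,CGX91} in the homogeneous transfer-operator framework, construct the Liu Baxterisation by an ansatz in the basis $\{\mathds{1}_2,e,s\}$ (this is Proposition \ref{prop:BaxAlgL}), and check encoding via the nonvanishing of the coefficient of the extra generator. That part is sound.

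The converse has a genuine gap. First, the step you flag yourself as the main obstacle --- arguing for an \emph{arbitrary} singly generated planar algebra --- is resolved in the paper by two reductions you omit: (i) homogeneity of the transfer operator together with the encoding hypothesis forces the shading-reversal maps to square to the identity, so the algebra admits an unshaded description (Section \ref{Sec:Unshaded}); and (ii) every unshaded singly generated planar algebra is a quotient of the proto-singly-generated planar algebra $\mathrm{PS}^{(\epsilon)}(\alpha,\delta)$ (Proposition \ref{prop:UnSGeqPSG}). Without this universal ambient object there is no finite set of relations to compute against, and your ``system of identities'' has nowhere to live. Second, your proposed mechanism for extracting the YBR condition --- polynomialisability of $R(u)$ in a spectral-parameter-independent element, reducing the YBEs to braid-type exchange relations --- is not viable. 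In the paper, polynomialisability is Theorem \ref{cor:1}, a \emph{consequence} of Theorem \ref{thm:1} obtained afterwards via the spectral theorem applied to the commuting self-adjoint transfer operators (Corollary \ref{cor:genDiagFaith}, Corollary \ref{cor:Cxdiag}); it concerns $T_n(u)$, not $R(u)$, and nothing guarantees a priori that $R(u)$ is polynomial in a fixed element. Moreover the YBR condition is not a braid relation: it is the requirement that triple products such as $s_1s_2s_1$ decompose into the span of lower-degree words. The paper's actual argument (Proposition \ref{prop:SGYBIntYBR}) expands YBE$_1$ in the PSG basis and isolates the coefficient $r_s(u)r_s(v)y_s(u,v)$ of $(s_1s_2s_1-s_2s_1s_2)$: non-speciousness forces $r_s\neq 0$, spatial invertibility of the $Y$-operator forces $y_s\neq 0$ (else $y_e\neq0$ leads to a specious $R$), so either that term is linearly independent --- contradiction --- or it collapses into the span of the remaining words, which is precisely the YBR condition for a quotient of PSG. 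You would need to replace your polynomialisability step with this direct coefficient analysis.
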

\begin{thm}\label{cor:1}
	Let the homogeneous Yang--Baxter integrability of a model be encoded by a singly generated 
	Yang--Baxter relation planar algebra.
	Then, there exist algebra-parameter choices and a suitable $u$-domain such that the corresponding 
	transfer operator is polynomialisable.
\end{thm}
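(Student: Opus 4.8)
The plan is to combine Liu's classification with case-by-case Baxterisation data. By the result of Liu invoked in Section~\ref{Sec:Intro}, a singly generated Yang--Baxter relation planar algebra is (a quotient of) one of exactly three algebras: the Fuss--Catalan, the Birman--Wenzl--Murakami, or the Liu planar algebra. For the first two, explicit Baxterisations are available in the literature, and for the Liu algebra a Baxterisation is constructed earlier in this paper; in each case one thus has closed-form expressions for the single $R$-operator $R(u)$ and the pair of $K$-operators entering the homogeneous transfer operator~\eqref{Tnu2}. Accordingly, I would prove the statement by exhibiting, in each of the three cases, an explicit spectral-parameter-independent element $Q\in A$ together with scalar functions $c_k(u)$ such that the transfer operator takes the polynomial form $T(u)=\sum_k c_k(u)\,Q^k$, valid on a suitable domain of $u$ and for a suitable choice of the algebra parameters. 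Since the claim is only an existence statement, I am free to specialise to whichever favourable parameter values make this collapse possible, rather than arguing for generic parameters.

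The mechanism I would exploit is the low-degree polynomial relation satisfied by the single generator. In a Yang--Baxter relation planar algebra the extra generator (equivalently, the associated braid-like element) obeys a skein-type identity, so the Baxterised $R(u)$ can be written as a polynomial of fixed, small degree in that generator with $u$-dependent scalar coefficients; the same holds for the boundary operators. First I would fix the spectral variable and the algebra parameters so that these coefficient functions are well defined and non-degenerate, which is precisely the source of the ``suitable $u$-domain'' and ``algebra-parameter choices'' in the statement. The candidate element $Q$ I would take to be the first nontrivial coefficient in the expansion of $T(u)$ about a distinguished point --- for instance the value of $u$ at which $R$ degenerates to the identity --- so that $Q$ is manifestly $u$-independent and lies in $A$.

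The core step is then to show that all higher coefficients in the $u$-expansion of $T(u)$ are polynomials in this single $Q$. Here I would proceed by induction on the order of the expansion, using the generalised Yang--Baxter equations of Proposition~\ref{Prop:YBE} together with the inversion and boundary relations to reduce every product of $R$- and $K$-operators that arises. Because the transfer operator is \emph{homogeneous}, the same $R$ and the same pair of $K$-operators are repeated uniformly, so each coefficient is a word in a single repeated building block; the skein relation of the generator then collapses such words into polynomials in $Q$, and the minimal polynomial of $Q$ truncates the expansion to finite degree, yielding genuine polynomialisability in the sense of \cite{PR22}.

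The main obstacle will be the Liu algebra. For Fuss--Catalan the algebra factorises as a pair of commuting Temperley--Lieb algebras, and for Birman--Wenzl--Murakami the braid structure makes the reduction standard, so in both cases the collapse to $\Cbb[Q]$ is essentially inherited from known computations. The Liu algebra, by contrast, has no such familiar decomposition, and its Baxterisation --- being new to this paper --- must be handled directly: the delicate point is to verify that, after the parameter specialisation, the coefficient functions of its $R$-operator are compatible with a single-element expansion rather than genuinely requiring several independent generators. I expect this compatibility to hinge on a specific relation among the Liu-algebra structure constants, and confirming that such a relation holds on a non-empty parameter region is where the real work lies.
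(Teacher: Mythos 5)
Your overall skeleton---Liu's classification reducing to the three cases FC, BMW, Liu, each equipped with an explicit Baxterisation---matches the paper's, but the step where you actually establish polynomialisability contains a genuine gap, and it is precisely the step where the paper takes a completely different (and essentially abstract) route. You propose to exhibit an explicit $u$-independent element $Q$ (the first nontrivial coefficient in an expansion of $T_n(u)$ about a distinguished point) and then prove by induction, using the generalised Yang--Baxter equations and the skein relation of the generator, that every higher expansion coefficient is a polynomial in $Q$. This does not work as stated: the Yang--Baxter, inversion, and boundary relations give you \emph{commutativity} of the family $\{T_n(u)\}$, not membership of all its expansion coefficients in $\Cbb[Q]$. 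The higher coefficients are sums of words in the generators spread over all positions $i=1,\dots,n-1$, and the local quadratic/skein relation for $s$ collapses words at a \emph{fixed} position, not global linear combinations; there is no mechanism in your argument forcing these to lie in the subalgebra generated by one particular element. Indeed, the paper explicitly states that a polynomial integrability generator $b$ has \emph{not} been determined for any of the FC, BMW, or Liu models, so the explicit construction you aim for is, as far as the authors know, open. Your remark that the Fuss--Catalan case is "essentially inherited from known computations" has no support in the paper for the same reason.

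What the paper does instead is the following chain, whose key ingredient you are missing: (i) a singly generated YBR planar algebra is a subfactor planar algebra, so each $A_n$ is a finite-dimensional semisimple $^*$-algebra with a positive-definite trace form; (ii) for suitable algebra parameters and a suitable $u$-domain, the Baxterised $R$-operator is \emph{self-adjoint} with respect to $\cdot^*$ (this, not non-degeneracy of the coefficient functions, is the actual origin of the domain restrictions $u\in\Rbb\setminus\{\gamma^2-1\}$, $u\in\Rbb\setminus\{q^2,\omega\}$ with $\tau,q\in\Rbb$, and $|u|=1$, $u\neq-\Delta$ in Proposition \ref{prop:pol}); (iii) self-adjointness of the $R$- and $K$-operators implies self-adjointness of $T_n(u)$ (Proposition \ref{prop:SelfAdRK}), hence diagonalisability by the spectral theorem (Corollary \ref{cor:genDiagFaith}); (iv) a commuting one-parameter family of diagonalisable elements in a finite-dimensional semisimple algebra lies in $\Cbb[b]$ for \emph{some} $b$ (Corollary \ref{cor:Cxdiag}, quoting \cite{PR22}) --- an existence statement with no explicit $b$. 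To repair your proof you would either need to import this diagonalisability-plus-semisimplicity argument, or genuinely prove that your candidate $Q$ generates the commutative semisimple subalgebra containing all the $T_n(u)$, which amounts to a spectral non-degeneracy claim about $Q$ that you have not addressed.
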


As a unifying framework for describing the singly generated YBR planar algebras, inspired by the series of 
works \cite{BJ00, BJ03, BJL17, Liu15}, we find it convenient to introduce the unshaded
\textit{proto-singly-generated} (PSG) \textit{planar algebra} $(\mathrm{PS}_n)_{n\in\Nbb_0}$. Although
$\mathrm{PS}_n$ is infinite-dimensional for $n\geq3$ and does not encode a homogeneous Yang--Baxter 
integrable model, we demonstrate that the PSG planar algebra serves as an `ambient' algebra admitting 
quotients isomorphic to the FC, BMW, and Liu planar algebras. 

The paper is organised as follows.
In Section \ref{Sec:Alg}, we recall the basics of shaded planar algebras, including the subfactor, 
YBR, and Temperley--Lieb variants. Certain technical details can be found in Appendix A and Appendix B.

In Section \ref{Sec:YBIntYBR}, we introduce the PSG planar algebra $(\mathrm{PS}_n)_{n\in\Nbb_0}$. 
For each $n\in\Nbb$, we give a presentation of the corresponding algebra $\mathrm{PS}_n$. 

In Section \ref{Sec:YBim}, we review the algebraic integrability framework developed recently in \cite{PR22}.
Using \cite{LMP17}, we demonstrate that it suffices to consider 
unshaded planar algebras when addressing the integrability questions of our interest.
We then show that the PSG planar algebra cannot encode Yang--Baxter integrability unless additional 
conditions are imposed on its generators such that the corresponding planar algebra is YBR. 

In Section \ref{Sec:FC}, we review the FC planar algebra and the corresponding FC algebras $\mathrm{FC}_n$.
In the two-colour case, with the two colours having the same loop weight, the FC planar algebra is YBR, and we 
show that $\mathrm{FC}_n$ is isomorphic to a quotient of $\mathrm{PS}_n$ with parameters given in terms of the 
FC loop weight. We also recast the known Baxterisation \cite{DiF98} in our notation and find that it is the only 
nontrivial Baxterisation supported by our Yang--Baxter integrability framework.

In Section \ref{Sec:BMW}, we review the BMW planar algebra and the corresponding algebras 
$\mathrm{BMW}_n$. We show that $\mathrm{BMW}_n$ is isomorphic to a quotient of 
$\mathrm{PS}_n$ with parameters given in terms of the BMW parameters. We also recast
the known Baxterisation \cite{CGX91} in our notation and find that it is the only nontrivial Baxterisation 
supported by our Yang--Baxter framework. A description of $\mathrm{BMW}_n$ as a quotient of the
braid-semigroup algebra on $n$ strands is presented in Appendix \ref{app:BS}.

In Section \ref{Sec:Liu}, we review the planar algebra introduced by Liu in \cite{Liu15}, here denoted by 
$\mathrm{L}$. For each $n\in\Nbb$, we introduce the corresponding Liu algebra $\mathrm{L}_n$ 
and show that it is isomorphic to a quotient of $\mathrm{PS}_n$ with parameters given in terms of the loop weight 
appearing in $\mathrm{L}$. We also present the Baxterisation of $\mathrm{L}$ within our integrability 
framework \cite{PR22}. Some technical details are omitted but can be found in \cite{Poncini23}.

In Section \ref{Sec:pol}, we recall the notion of polynomial integrability and show that the three singly generated 
YBR planar algebras (FC, BMW, and Liu) all encode Yang--Baxter integrable models described in terms of 
polynomialisable transfer operators. 

Section \ref{Sec:Discussion} contains some concluding remarks.

Throughout, we let $\mathrm{i}$ denote the imaginary unit, $\Nbb$ the set of positive integers, 
and $\Nbb_0$ the set of nonnegative integers.

\section{Planar algebras}
\label{Sec:Alg}

Here, we review the basics of planar algebras \cite{Jones99,Jones00,JonesNotes}, 
including the definition of subfactor planar algebras and the subclass of Yang--Baxter relation planar algebras. 
We also present the Temperley--Lieb planar algebra \cite{Jones99} and the corresponding subfactor planar 
algebra, and recall its ubiquitous role. The usual Temperley--Lieb algebra \cite{TL71,Jones83}, and 
how it arises from the corresponding planar algebra, is discussed in Appendix \ref{app:TL}.

\subsection{Shaded planar algebras}
\label{subsec:YBRPA}

Informally, a complex \textit{shaded planar algebra} \cite{Jones99,Jones00,JonesNotes} is a collection of 
complex vector spaces $(A_{n,\pm})_{n\in\Nbb_0}$ where vectors can be `multiplied planarly' to form vectors,
and where $(A_{n,\pm})_{n\in\Nbb_0}$ is a common abbreviation for 
$(A_{n,\varepsilon})_{n\in\Nbb_0}^{\varepsilon\in\{+,-\}}$. 
A basis for $A_{n,\pm}$ consists of disks with $2n$ \textit{nodes} (connection points) on their boundary, 
whereby a boundary is composed of nodes and boundary intervals, and the boundary intervals are labelled 
alternatingly by $+$ or $-$. The specification of
the internal structure of the basis disks is a key part of the definition of any given planar algebra.
When disks are combined (`multiplied'), every node is connected to a single other node 
(possibly on the same disk) via non-intersecting strings, and \textit{shaded planar tangles} are the diagrammatic 
objects, defined up to ambient isotopy, that facilitate such combinations. The shading of a planar tangle forms 
a `checkerboard' pattern that `matches' the $+/-$ labelling of the boundary intervals.

In general, a planar tangle has the following features:
\begin{itemize}
    \item An exterior disk called the \textit{output} disk.
    \item A finite set of interior disks called \textit{input} disks.
    \item A finite set of non-intersecting strings connecting the nodes of disks.
    \item For each disk, a marked boundary interval.
    \item A checkerboard shading of the region outside the input disks.
\end{itemize}
An example of a planar tangle is given by
\begin{align}\label{equ:TtangleExample}
    \raisebox{-1.45cm}{\RotPlanarNKTangleExampleAltCNewULShaded}
\end{align}
We denote the output disk of the planar tangle $T$ by $D_0^T$ and the set of input disks by $\Dc_T$.
The number of nodes on the (exterior) boundary of a disk $D$ is denoted by $\eta(D)$ and is necessarily 
even, while the \textit{shading} of $D$, denoted by $\zeta(D)$, is $+$, respectively $-$, if the (exterior) 
marked boundary interval borders an unshaded, respectively shaded, region. 
The marked boundary intervals on the disks disambiguate the alignment of the 
input and output disks and are indicated by red rectangles, see (\ref{equ:TtangleExample}). 

Planar `multiplication' is induced by the action of the planar tangles as multilinear maps.
For a planar tangle $T$, this is denoted by 
\begin{align}
 \Pb_T\colon\bigtimes_{D\in\Dc_T}A_{\eta(D)/2,\, \zeta(D)}\to A_{\eta(D_0^T)/2,\, \zeta(D_0^T)}.
\label{PT}
\end{align}
Pictorially, $\Pb_T$ acts by filling in each of the interior disks $D\in\Dc_T$ with an element of the corresponding 
vector space $A_{\eta(D)/2,\,\zeta(D)}$, in such a way that the nodes match up, the marked intervals are aligned, 
and the shading is consistent. The details of the map $\Pb_T$ specify how one should remove the input disks in 
the picture and identify the image as a vector in $A_{\eta(D_0^T)/2,\,\zeta(D_0^T)}$. If $\Dc_T=\emptyset$, we 
simply write $\Pb_T()$ for the image under $\Pb_T$, and stress that $\Pb_T()$ is distinct from $T$.
Taking $T$ as in \eqref{equ:TtangleExample}, we present the example
\begin{align}
    T= \raisebox{-1.425cm}{\RotPlanarNKTangleExampleAltCNew}, \qquad\quad 
    \Pb_T(v_1,v_2,v_3)
    =  \raisebox{-1.425cm}{\RotPlanarNKTangleExampleAltCGenNew} \in A_{4,+},
    \label{equ:GenPlanarActionExample}
\end{align}
where $v_1\in A_{1,+}$, $v_2\in A_{2,-}$ and $v_3\in A_{3,+}$. Note that we have not 
specified any details about $(A_{n,\pm})_{n\in\Nbb_0}$ or about the action of the planar tangles. 
\medskip

\noindent
\textbf{Remark.}
Unlike in the picture of $T$ in (\ref{equ:GenPlanarActionExample}), disks in $\Dc_T$ are not labelled; however,
to apply the ordered-list notation for the vectors in $\Pb_T(v_1,v_2,v_3)$, it is convenient to label the disks
accordingly. Once drawn as in the second picture in (\ref{equ:GenPlanarActionExample}), no labelling is needed.
\medskip

Planar tangles can be `composed', and consistency between this composition and the associated 
multilinear maps is often referred to as {\em naturality}, see Appendix \ref{app:Nat} for details. 
By specifying the vector spaces $(A_{n,\pm})_{n\in\Nbb_0}$ and the action of planar tangles as 
multilinear maps \eqref{PT} such that naturality is satisfied, one arrives at a \textit{shaded planar algebra}. 
For each $n\in\Nbb_0$, the \textit{identity tangles} are introduced as
\begin{align}
    \mathrm{id}_{n,+} :=
    \raisebox{-0.7275cm}{\RotationPlanarZeroShadedPosSmall}\;,\qquad\quad
    \mathrm{id}_{n,-} :=\! \raisebox{-0.68cm}{\RotationPlanarZeroShadedNegSmall}\;, \qquad\quad
    \Pb_{\mathrm{id}_{n,\pm}}: A_{n,\pm}\to A_{n,\pm},
\label{idnpm}
\end{align}
each having $2n$ spokes. Likewise, the elementary \textit{rotation tangles} are introduced as
\begin{align}
    r_{n,1,+} :=  \!\raisebox{-0.68cm}{\RotPosPos}\;,\quad\;\;
    r_{n,1,-} := \raisebox{-0.7275cm}{\RotPosNeg}\;, \quad\;\;
    r_{n,-1,+} := \!\raisebox{-0.68cm}{\RotNegPos}\;,\quad\;\;
    r_{n,-1,-} := \raisebox{-0.7275cm}{\RotNegNeg}\;,
\label{equ:ElRot}
\end{align}
with corresponding linear maps $\Pb_{r_{n,1,\pm}}: A_{n,\pm}\to A_{n,\mp}$ 
and $\Pb_{r_{n,-1,\pm}}: A_{n,\pm}\to A_{n,\mp}$.
If $A_{n,\pm}$ has no null vectors, then $\Pb_{\mathrm{id}_{n,\pm}}$ acts as the identity operator.
Here, we say that a nonzero $v\in A_{n,\pm}$ is a \textit{null vector} if $\Pb_T(v)=0$ for every planar tangle $T$
for which $\Pb_T$ has domain $A_{n,\pm}$. Moreover, if neither $A_{n,+}$ nor $A_{n,-}$ has any null vectors, 
then $\Pb_{r_{n,-\ell,\mp}}\circ\Pb_{r_{n,\ell,\pm}}=\Pb_{\mathrm{id}_{n,\pm}}$ 
for each $\ell\in\{-1,1\}$. Details are presented in Appendix \ref{app:Unit}. 
\medskip

\noindent
\textbf{Remark.}
A \textit{zero planar algebra} \cite{JonesNotes}, where the vector spaces $(A_{n,\pm})_{n\in\Nbb_0}$ are arbitrary 
and all planar tangles act as the zero map, exclusively contains null vectors. We note 
(i) that each vector space of a planar algebra can be extended to include arbitrarily many null vectors, and 
(ii) for each planar algebra with null vectors, except for a zero planar algebra, 
there exists a corresponding planar algebra without null vectors (obtained by omitting them).
\medskip

A shaded planar algebra $(A_{n,\pm})_{n\in\Nbb_0}$ is not an algebra in the usual sense; 
however, it contains a countable number of (standard) algebras. Indeed, for each $n\in\Nbb_0$, the planar tangles
\begin{align}\label{equ:MultAndCoMultPlanar}
    M_{n,+}:=\raisebox{-1.43cm}{\includegraphics{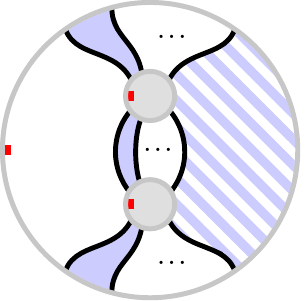}}, \quad\ 
    M_{n,-}:=\raisebox{-1.43cm}{\includegraphics{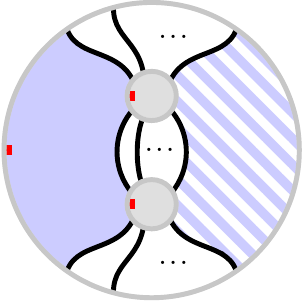}}, \quad\ 
    \Pb_{M_{n,\pm}}: A_{n,\pm}\times A_{n,\pm}\to A_{n,\pm},
\end{align}
induce a \textit{multiplication} on $A_{n,+}$ and $A_{n,-}$, respectively, and we write 
$vw=\Pb_{M_{n,\pm}}(v,w)\in A_{n,\pm}$ for $v,w\in A_{n,\pm}$, where $v$, respectively $w$, 
is replacing the lower, respectively upper, disk in $M_{n,\pm}$. 
\medskip

\noindent
\textbf{Remark.}
If the shading of a region is unspecified, as it may depend on the parity of $n$, 
we use the `banded' pattern illustrated in \eqref{equ:MultAndCoMultPlanar}. 
For simplicity, the region containing the `dots' is coloured white.
\medskip

\noindent
The ensuing algebra, $A_{n,\pm}$, is associative and, if it has no null vectors, unital, with unit
\be
    \mathds{1}_{n,+}:= \Pb_{\mathrm{Id}_{n,+}}(), \qquad
    \mathds{1}_{n,-}:= \Pb_{\mathrm{Id}_{n,-}}(), \qquad
    \mathrm{Id}_{n,+} := \raisebox{-0.675cm}{\includegraphics{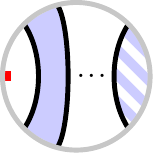}}, \qquad
    \mathrm{Id}_{n,-} := \raisebox{-0.675cm}{\includegraphics{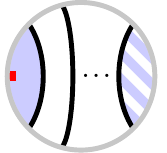}},
\label{Aunit}
\ee
whose dependence on $n$ may be ignored by writing $\mathds{1}_{\pm}$ for $\mathds{1}_{n,\pm}$. 
For details, see Appendix \ref{app:Unit}.

\subsection{Subfactor planar algebras}
\label{subsec:SFPA}

Special classes of planar algebras follow when additional structure is imposed on the vector spaces $A_{n,\pm}$
or on the multilinear maps $\Pb_T$. By doing so, we will encounter \textit{subfactor planar algebras} in the 
following.

In general, there are no constraints on the dimensions of the vector spaces $A_{n,\pm}$, but the 
planar algebra is called \textit{evaluable} if, for each shading $+/-$, $\mathrm{dim}(A_{0,\pm}) = 1$ 
and $\mathrm{dim}(A_{n,\pm}) < \infty$ for all $n\in\Nbb$. In that case, the \textit{evaluation map}
\be
 \mathrm{e}:A_{0,\pm}\to\Cbb,
\label{eval}
\ee
mapping the `empty disk' to the scalar $1$, provides an isomorphism, $A_{0,\pm}\cong\Cbb$.

For each $n\in\Nbb$, the planar tangles
\begin{align}\label{equ:TracePlanar}
        \tr_{n,+}^{(l)}\!:=\!\raisebox{-1.18cm}{\includegraphics{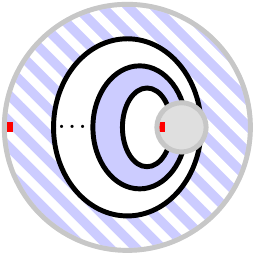}}, &&
        \tr_{n,-}^{(l)}\!:=\!\raisebox{-1.18cm}{\includegraphics{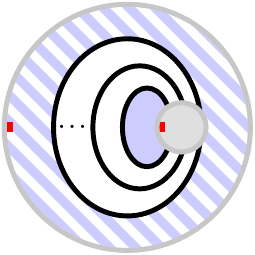}}, &&
        \tr_{n,+}^{(r)}\!:=\!\raisebox{-1.18cm}{\includegraphics{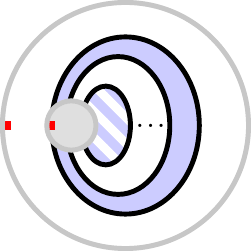}}, &&
        \tr_{n,-}^{(r)}\!:=\!\raisebox{-1.18cm}{\includegraphics{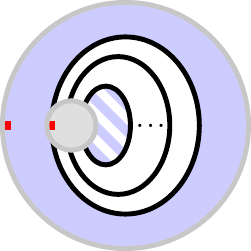}},
\end{align}
induce notions of \textit{left} and \textit{right traces}, respectively:
\begin{align}
 \Pb_{\tr_{n,\pm}^{(l)}}: A_{n,\pm} \to A_{0,\pm(-)^n}, \qquad
 \Pb_{\tr_{n,\pm}^{(r)}}: A_{n,\pm} \to A_{0,\pm}.
\end{align}
A planar algebra is said to be \textit{spherical} if
\be\label{equ:DiagTrace}
 \Pb_{\tr_{n,\pm}^{(l)}}=\Pb_{\tr_{n,\pm}^{(r)}}
\ee
for all $n\in\Nbb$. We note that sphericality requires $A_{0,+}\cong A_{0,-}$.

Let $\cdot^\dagger$ denote the operator that acts by reflecting a planar tangle about a line perpendicular to 
the marked exterior boundary interval, and let $^*:A_{n,\pm}\to A_{n,\pm}$, $n\in\Nbb_0$, denote a conjugate 
linear involution. Analogous to naturality, compatibility between the two maps manifests itself in a simple relation,
\begin{align}\label{equ:InvNat}
    \Pb_{T^{\dagger}}(v_1^*,\ldots,v_{|\mathcal{D}_T|}^*) 
    = \Pb_{T}(v_1,\ldots,v_{|\mathcal{D}_T|})^*,
\end{align}
which must hold for all planar tangles $T$ and all 
$(v_1,\ldots,v_{|\mathcal{D}_T|})\in\bigtimes_{D\in\mathcal{D}_{T}}A_{\eta(D)/2,\,\zeta(D)}$. 
A planar algebra $(A_{n,\pm})_{n\in\Nbb_0}$ endowed with the maps $\cdot^{\dagger}$ and $\cdot^{*}$
satisfying (\ref{equ:InvNat}) is known as \textit{involutive}. In that case,
\begin{align}\label{equ:IdnConj}
    \Pb_{\mathrm{Id}_{n,\pm}^{\dagger}}() = \Pb_{\mathrm{Id}_{n,\pm}}()^*,\qquad 
    \mathds{1}_{n,\pm} =  \mathds{1}_{n,\pm}^*,
\end{align}
and for $p\in A_{n,\pm}$, we have
\be
 p^2=p\qquad\Longrightarrow\qquad (p^*)^2=p^*,
\label{pstar}
\ee
with the indicated multiplication induced by $M_{n,\pm}$.

An involutive planar algebra $(A_{n,\pm})_{n\in\Nbb_0}$ admits the sesquilinear maps
\begin{align}\label{Markov}
    \langle\cdot\,, \cdot\rangle_{n,\pm}^{(c)}: A_{n,\pm}\times A_{n,\pm} \to A_{0,\pm(-)^{n\delta_{\scalebox{0.8}{\tiny $c,l$}}}}, 
    \qquad 
    (a,b)\mapsto \Pb_{\mathrm{tr}_{n,\pm}^{(c)}}(a^*b),
\end{align}
labelled by $c\in\{l,r\}$ and known as the \textit{left} and \textit{right trace map} for $c=l$ and $c=r$, 
respectively. Here, $\delta_{c,\, l}=1$ for $c=l$, and $\delta_{c,\, l}=0$ for $c=r$.
Composed with the evaluation map (\ref{eval}) for $\dim A_{0,\pm}=1$, we obtain the sesquilinear
\textit{trace forms}
\be
 \mathrm{e}\circ\langle\cdot\,, \cdot\rangle_{n,\pm}^{(c)}:A_{n,\pm}\times A_{n,\pm} \to\Cbb.
\label{form}
\ee
An involutive planar algebra inherits the qualifier \textit{positive (semi-)definite} 
if both trace forms enjoy it for all $n\in\Nbb_0$. If the involutive planar algebra is spherical, then the two
trace maps (and hence trace forms) are identical.

An evaluable, spherical and positive-definite planar algebra is called a \textit{subfactor planar algebra}.
It is implied that a positive-definite planar algebra is involutive, so a subfactor planar algebra 
$(A_{n,\pm})_{n\in\Nbb_0}$ is involutive, with each $A_{n,\pm}$ having the corresponding unique trace form 
\eqref{form} as inner product. In that case, we introduce
\be\label{equ:TrDash}
    \Pb'_{\mathrm{tr}_{n,\pm}^{(c)}}:=\mathrm{e}\circ\Pb_{\mathrm{tr}_{n,\pm}^{(c)}}
\ee
and refer to the map
\begin{align}\label{norm}
    A_{n,\pm} \to \Rbb, 
    \qquad 
    a\mapsto\sqrt{\Pb'_{\mathrm{tr}_{n,\pm}^{(c)}}(a^*a)},
\end{align}
as the \textit{trace norm}.
Moreover, the multiplication on $A_{n,\pm}$ is induced by the corresponding planar tangle
in \eqref{equ:MultAndCoMultPlanar}, and since evaluability implies $\dim A_{n,\pm}<\infty$, 
it follows that $A_{n,\pm}$ is a finite-dimensional semisimple algebra, see e.g.~\cite{Jones00}. 
Consequently, $A_{n,\pm}$ is isomorphic to a direct sum of matrix algebras, facilitating the use of 
linear-algebraic techniques in the analysis of subfactor planar algebras. As a simple but ubiquitous example of a 
subfactor planar algebra, we recall the \textit{Temperley--Lieb subfactor planar algebra} \cite{Jones99}
in Section \ref{subsec:TLPA} below.
\medskip

\noindent 
\textbf{Remark.} 
The positive-definiteness of subfactor planar algebras implies that they do not have any null vectors. 
Given the relevance of subfactor planar algebras to the classification in Theorem \ref{thm:1}, we will henceforth 
assume that $\Pb_{\mathrm{id}_{n,\pm}}$ is the identity operator and that the algebra with multiplication 
induced by $M_{n,\pm}$ is unital.

\subsection{Yang--Baxter relation planar algebras}
\label{Sec:YBRpa}

Here, we recall a variant of subfactor planar algebras that will play an important role in our discussion of 
Yang--Baxter integrability. 
Let $(A_{n,\pm})_{n\in\Nbb_0}$ be a subfactor planar algebra, with $B_{2,\pm}$ denoting a basis for $A_{2,\pm}$. 
Following \cite{Liu15}, a triple $(x,y,z)\in A_{2,-}\times A_{2,+} \times A_{2,-}$, respectively 
$(x,y,z)\in A_{2,+}\times A_{2,-} \times A_{2,+}$, is said to satisfy a \textit{Yang--Baxter relation} (YBR) if
\begin{align}\label{ybrdec}
 \raisebox{-1cm}{\ShadedPosYBRLHS} =\sum_{\substack{a,c\in B_{2,+}\\ b\in B_{2,-}}}
  \!\!C_{x,y,z}^{a,b,c}\
  \raisebox{-1cm}{\ShadedPosYBRRHS}\;, 
  &&\
 \raisebox{-1cm}{\ShadedNegYBRLHS} =\sum_{\substack{a,c\in B_{2,-}\\ b\in B_{2,+}}}
  \!\!D_{x,y,z}^{a,b,c}\
  \raisebox{-1cm}{\ShadedNegYBRRHS}\;,
\end{align}
respectively, for some $C_{x,y,z}^{a,b,c},D_{x,y,z}^{a,b,c}\in\Cbb$. A subfactor planar algebra is called a 
\textit{YBR planar algebra} if every triple of vectors in $A_{2,-}\times A_{2,+} \times A_{2,-}$ 
and $A_{2,+}\times A_{2,-} \times A_{2,+}$ satisfies a YBR.
\medskip

\noindent
\textbf{Remark.}
Although a YBR planar algebra is a \textit{subfactor} planar algebra, we are suppressing that qualifier,
in line with the convention in \cite{Liu15}. 
\medskip

While we adopt the form (\ref{ybrdec}) of the YBRs introduced in \cite{Liu15}, 
the invertibility of the elementary rotation tangles (see Corollary \ref{cor:ElRot} in Appendix \ref{app:pa})
ensures that the characterisation of a planar algebra as a YBR planar algebra does not depend on the particular 
choices of input disk markings (and consequently shadings) in (\ref{ybrdec}), on either side of any of the two YBRs. 
However, we do need a YBR for each shading of the \textit{output} disk, as in (\ref{ybrdec}).

In Section \ref{Sec:TYB}, we describe how one can associate a 
\textit{homogeneous Yang--Baxter integrable model} to any planar algebra satisfying a particular set of sufficient 
conditions, including YBRs. It is thus natural to expect that YBR planar algebras play an important role in the 
classification of Yang--Baxter integrable planar-algebraic models. 
Indeed, we find (Proposition \ref{prop:SGYBIntYBR} in Section \ref{Sec:Baxt}) that a so-called 
\textit{singly generated planar algebra} that is \textit{not} a YBR planar algebra \textit{does not} encode the 
structure of a homogeneous Yang--Baxter integrable model. 
Singly generated planar algebras are discussed in Section \ref{Sec:YBIntYBR}.

\subsection{Temperley--Lieb planar algebra}
\label{subsec:TLPA}

Let $\mathrm{T}_{n,\pm}$ denote the complex vector space spanned by disks with 
(i) $2n$ nodes where each node is connected to another node via a set of non-intersecting strings 
(defined up to ambient isotopy), and 
(ii) a $\pm$ checker-board shading (see Section \ref{subsec:YBRPA}). 
Examples of such \textit{Temperley--Lieb disks} are
\begin{align}
    \raisebox{-0.45cm}{\TLcircExIShaded} \in \mathrm{T}_{2,-},\qquad\quad
    \raisebox{-0.45cm}{\TLcircExIIShaded} \in \mathrm{T}_{3,+},\qquad\quad
    \raisebox{-0.45cm}{\TLcircExIIIShaded} \in \mathrm{T}_{5,-}.
\end{align}
The dimension of $\mathrm{T}_{n, \pm}$ is given by a Catalan number, as
\begin{align}
 \dim \mathrm{T}_{n,\pm}=\frac{1}{n+1}\binom{2n}{n}.
\end{align}

The \textit{Temperley--Lieb} (TL) \textit{planar algebra} $\mathrm{TL}(\delta)$ is the graded vector space 
$(\mathrm{T}_{n,\pm})_{n\in\mathbb{N}_0}$, together with the natural diagrammatic action of shaded planar 
tangles, with each loop replaced by a factor of the parameter $\delta\in\mathbb{C}$. 
To illustrate, we present the example
\begin{align}
    T =  \raisebox{-1.4275cm}{\FCExTangleAltShaded}\,,\qquad\quad
    \Pb_T\big(\!\raisebox{-0.175cm}{\RotRopFCxESTBlackAltColouredUnshadedShaded}\,,\!
    \raisebox{-0.175cm}{\RotRopFCxXSTBlackAltColouredUnshadedShaded}\,\big) 
    \,= \raisebox{-1.4275cm}{\FCExTangleInputAltColouredUnshadedShaded} 
    \,=\, \delta\raisebox{-0.43cm}{\FCExTangleInputIdColouredUnshadedShaded}\;.
\end{align}
From \cite{Jones99}, we know that $\mathrm{TL}(\delta)$ is spherical and involutive, 
with the involution $\cdot^*$ defined as the conjugate-linear map that acts by reflecting every disk about a line 
perpendicular to the marked boundary interval.

Let $\mathcal{T}$ denote the set of all $\delta$-values such that the planar algebra $\mathrm{TL}(\delta)$ is 
positive semi-definite. For each $\delta\in\mathcal{T}$, the \textit{TL subfactor planar algebra} 
$\mathsf{TL}(\delta)$ is then defined as the quotient of $\mathrm{TL}(\delta)$ by the kernel of the trace 
norm \eqref{norm}. According to \cite{Jones83},
\begin{align}\label{TTL}
    \mathcal{T} = \big\{2\cos\tfrac{\pi}{m+2}\,|\, m\in\mathbb{N}\big\}\cup [2,\infty).
\end{align}
For $\delta\in\big\{2\cos\tfrac{\pi}{m+2}\,|\, m\in\mathbb{N}\big\}$, the kernel of the trace norm is nontrivial, so 
$\mathrm{TL}(\delta)$ and $\mathsf{TL}(\delta)$ are non-isomorphic, while for $\delta>2$, 
$\mathrm{TL}(\delta)$ is positive-definite, in which case $\mathrm{TL}(\delta) \cong \mathsf{TL}(\delta)$. 

\medskip

\noindent
\textbf{Remark.} Here and throughout, the \textsf{sans-serif} font distinguishes a subfactor planar algebra,
such as $\mathsf{TL}(\delta)$, from the corresponding (not necessarily subfactor) planar algebra, here $\mathrm{TL}(\delta)$.
\medskip

Every subfactor planar algebra has a planar subalgebra isomorphic to the
Temperley--Lieb subfactor planar algebra \cite{Jones99, Jones83}. Accordingly, $A_{2,\pm}$ of a subfactor 
planar algebra  $(A_{n,\pm})_{n\in\Nbb_0}$ contains the two Temperley--Lieb disks
\begin{align}\label{TLvectors}
    \mathds{1}_{2,\pm}=\Pb_{\mathrm{Id}_{2,\pm}}(),\qquad
    e_{\pm}:=\Pb_{\Ec_{\pm}}(),
\end{align}
where
\begin{align}
 \mathrm{Id}_{2,+}=\raisebox{-0.285cm}{\TLAidIITangShadePos},\qquad
 \mathrm{Id}_{2,-}=\raisebox{-0.285cm}{\TLAidIITangShadeNeg},\qquad
 \Ec_{+}:=\raisebox{-0.285cm}{\TLAeProjTangShadeNeg},\qquad
 \Ec_{-}:=\raisebox{-0.285cm}{\TLAeProjTangShadePos},
\end{align}
with $\mathrm{Id}_{2,\pm}$ a special case of (\ref{Aunit}). In all but the degenerate case (see Remark after
(\ref{Un})), which we exclude in the following, the vectors in \eqref{TLvectors} are linearly independent so 
$\dim A_{2,\pm}\geq2$, while the Temperley--Lieb subfactor planar algebra itself has $\dim A_{2,\pm}=2$.

In a so-called \textit{singly generated} (subfactor) \textit{planar algebra}, $\dim A_{1,\pm}=1$ and 
$\dim A_{2,\pm}=3$, so $A_{2,\pm}$ has a basis consisting of the two Temperley--Lieb disks (\ref{TLvectors}) 
and \textit{one} additional disk, hence the terminology. Moreover, the vector spaces $A_{n,\pm}$ for $n>2$ are 
generated by the action of planar tangles on disks in $A_{2,\pm}$. Section \ref{Sec:YBIntYBR} below is devoted to 
the study of (unshaded) singly generated planar algebras.
\medskip

\noindent
\textbf{Remark.}
Although a singly generated planar algebra is a \textit{subfactor} planar algebra, we are suppressing that qualifier, 
in line with the convention in \cite{BJ00}.

\section{Proto-singly-generated algebra}
\label{Sec:YBIntYBR}

Subfactor planar algebras, including singly generated planar algebras, are shaded by default. 
The shading is necessary, in general, to relate a subfactor planar algebra 
to the so-called \textit{standard invariant} of the corresponding \textit{subfactor}; this story is outlined in \cite{Bisch02}.
However, the shading of a planar algebra $(A_{n,\pm})_{n\in\mathbb{N}_0}$ need not carry any nontrivial 
information. In that case, the shading can be ignored, giving rise to the corresponding
\textit{unshaded planar algebra} $(A_n)_{n\in\mathbb{N}_0}$, see e.g.~\cite{LMP17}.
\medskip

\noindent
\textbf{Remark.}
Omitting the subscript indicating the shading of a given shaded planar tangle, we are referring
to the corresponding unshaded version of the planar tangle. A similar convention is adopted for
the vector spaces $(A_n)_{n\in\mathbb{N}_0}$ spanned by unshaded disks.
\medskip

As will be clear after Section \ref{Sec:Unshaded}, the singly generated planar algebras relevant to the 
integrability questions of our interest necessarily admit an unshaded description. 
Accordingly, we will henceforth restrict to the class of \textit{unshaded} singly generated planar algebras.
About these, we have the following key result involving the \textit{proto-singly-generated planar algebra} 
$\mathrm{PS}^{(\epsilon)}(\alpha,\delta)$ constructed in Section \ref{Sec:Proto} and Section \ref{Sec:Gen} below.
This algebra was conceived in \cite{BJ00}.
\begin{prop}\label{prop:UnSGeqPSG}
    An unshaded singly generated planar algebra is a quotient of $\mathrm{PS}^{(\epsilon)}(\alpha,\delta)$
    for some $\epsilon,\alpha,\delta$.
\end{prop}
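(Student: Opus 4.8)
The plan is to exhibit, for a given unshaded singly generated planar algebra $A=(A_n)_{n\in\Nbb_0}$, a specific choice of parameters $\epsilon,\alpha,\delta$ together with a surjective planar-algebra homomorphism $\Phi\colon\mathrm{PS}^{(\epsilon)}(\alpha,\delta)\to A$; the claim then follows since $A\cong\mathrm{PS}^{(\epsilon)}(\alpha,\delta)/\ker\Phi$. First I would isolate the single generator. By definition, $\dim A_1=1$ and $\dim A_2=3$, and $A_2$ contains the two Temperley--Lieb disks $\mathds{1}_2$ and $e$, which span a two-dimensional subalgebra from which the loop parameter $\delta$ is read off. Using positive-definiteness, I would take $R$ to be a vector (normalised so as to fix $\alpha$) spanning the one-dimensional orthogonal complement of $\langle\mathds{1}_2,e\rangle$ in $A_2$; this $R$ is the canonical additional generator.

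Next I would record that this $R$ satisfies exactly the relations defining $\mathrm{PS}^{(\epsilon)}(\alpha,\delta)$. Self-adjointness $R^*=R$ can be arranged because the involution $\cdot^*$ preserves $\langle\mathds{1}_2,e\rangle$ and hence its orthogonal complement, acting there as a phase that positive-definiteness forces to be real. Orthogonality of $R$ to the Temperley--Lieb disks is equivalent, via the trace forms and sphericality, to $R$ being \emph{uncappable}, so that both single cappings of $R$ vanish in $A_1$. Since the elementary rotation tangle permutes $\mathds{1}_2$ and $e$ and hence preserves their orthogonal complement, $R$ is automatically a rotation eigenvector; its eigenvalue $\epsilon$ is a root of unity (the rotation has finite order and commutes with $\cdot^*$), and reality pins it down. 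Finally, $\alpha$ is fixed by the remaining closed-diagram invariant, say $\tr(R^2)$, the one piece of data at the $A_2$ level not determined by $\epsilon$ and $\delta$. These are precisely the defining data appearing in the construction of $\mathrm{PS}^{(\epsilon)}(\alpha,\delta)$.

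With the parameters matched, the generating property of $\mathrm{PS}^{(\epsilon)}(\alpha,\delta)$—that it is the unshaded planar algebra generated by a single $2$-box subject only to the relations above—yields a planar-algebra homomorphism $\Phi$ sending the PSG generator to $R$. I would then verify surjectivity using the singly generated hypothesis: each $A_n$ is generated by the planar action of tangles on $A_2=\langle\mathds{1}_2,e,R\rangle$, and all three basis vectors lie in the image of $\Phi$ (the first two coming from the Temperley--Lieb planar subalgebra common to both algebras). Hence $\Phi$ is onto, and the first isomorphism theorem for planar algebras gives the result.

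I expect the main obstacle to be the second step: verifying that the canonically chosen generator genuinely satisfies \emph{exactly} the relations imposed in the construction of $\mathrm{PS}^{(\epsilon)}(\alpha,\delta)$, rather than a weaker or stronger set. Concretely, one must check that self-adjointness, uncappability, and the rotation-eigenvalue condition survive the passage from the abstract inner-product characterisation of $R$ to the diagrammatic relations of the presentation, and that no further independent closed-diagram invariant beyond $\alpha$ arises at the level of $A_2$. This is where the finite-dimensionality $\dim A_2=3$, positive-definiteness, and sphericality do the essential work, ensuring that the one-dimensional complement is canonical and that its diagrammatic data reduce to the three scalars $\epsilon,\alpha,\delta$.
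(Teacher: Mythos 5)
Your overall route is the same as the paper's: identify the canonical third basis vector of $A_2$ as (a normalisation of) the spanning vector of the orthogonal complement of $\langle\mathds{1}_2,e\rangle$ with respect to the trace form, show it is uncappable and a rotation eigenvector satisfying a quadratic relation, and then invoke the universal property of the quotient of the universal planar algebra. The paper reaches the same element via the primitive-idempotent decomposition of the commutative semisimple algebra $\mathrm{PS}_2$, writing $s=p_1\ip_1+p_2\ip_2$ with $\Pb_{\tr_2}(s)=0$ and $p_1p_2=-1$; this is equivalent to your orthogonal-complement description but has the advantage of delivering $es=se=0$ and the precise quadratic relation $s^2=\mathds{1}_2-\tfrac{1}{\delta}e+\alpha s$ immediately, together with the positivity facts ($\Pb'_{\tr_2}(\ip_i)>0$, $\ip_i^*=\ip_i$) that make the normalisation and self-adjointness arguments clean.

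Two concrete points in your second step need repair. First, $\tr(R^2)$ does not fix $\alpha$: with the normalisation giving $R^2=\mathds{1}_2-\tfrac{1}{\delta}e+\alpha R$ and $\tr(R)=\tr(eR)=0$, one has $\tr(R^2)=\delta^2-1$ independently of $\alpha$; the invariant you want is $\tr(R^3)=\alpha(\delta^2-1)$, or simply the coefficient of $R$ in the quadratic relation. Second, ``reality pins down'' the rotation eigenvalue is misleading and, read literally, false: self-adjointness of $R$ only forces $|\epsilon|=1$, and all four values $\epsilon\in\{-1,1,-\mathrm{i},\mathrm{i}\}$ genuinely occur (the Liu planar algebra has $\epsilon=\pm\mathrm{i}$ with $R^*=R$). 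What you are missing is the compatibility constraint between $\alpha$ and $\epsilon$: applying $\Pb_{r_{2,2}}$ to the quadratic relation gives $\alpha(\epsilon^2-1)s=0$, so $\alpha=0$ whenever $\epsilon=\pm\mathrm{i}$. Without this, you cannot conclude that the pair $(\alpha,\epsilon)$ you extract lies in the parameter range for which $\mathrm{PS}^{(\epsilon)}(\alpha,\delta)$ is actually defined, and the target of your surjection $\Phi$ would not exist. With these two corrections your argument closes up and coincides with the paper's.
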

\noindent
Indeed, $\mathrm{PS}^{(\epsilon)}(\alpha,\delta)$ is defined in terms of relations that, 
for some $\epsilon,\alpha,\delta$, are satisfied by any given unshaded singly generated planar algebra.
Here, $\delta$ is the loop weight arising in
\be
 e^2=\delta e,
\ee
where
\be
 e:=\Pb_{\Ec}(),\qquad \Ec:=\raisebox{-0.285cm}{\TLAeProjTang},
\ee
is the unshaded Temperley--Lieb generator. We note that $e^*=e$.

\subsection{Planar algebra}
\label{Sec:Proto}

We now introduce a planar algebra whose vector spaces are spanned by planar tangles with labelled disks, 
and where planar tangles act on these vector spaces in the natural way.
For each $n\in\Nbb_0$, let $S_n$ denote a set whose elements label disks with 
$2n$ nodes. With $S := \bigsqcup_{n\in\Nbb_0}S_n$, an \textit{$S$-labelled tangle} is thus a planar tangle 
whose input disks each have been labelled by an element of $S$. 
For such a label set $S$, the \textit{unshaded universal planar algebra} 
consists of the vector spaces $(A_n(S))_{n\in \Nbb_0}$ where, for each $n$, $A_n(S)$ is spanned by all 
$S$-labelled tangles with $2n$ nodes on their output disk, together with the planar-tangle action 
colloquially named `what you see is what you get', illustrated in \eqref{equ:SGTangActionEx}, 
see also \cite{Jones99,KT11}. We note that the elements of $S$ have no additional structure. Accordingly, 
the list of cardinalities, $|S_0|,|S_1|,|S_2|,\ldots$, characterises a universal planar algebra.

A universal planar algebra is infinite-dimensional. Indeed, even if $S_k=\emptyset$ for all $k\in\Nbb_0$, 
then each $A_n(S)$ is spanned by the corresponding set of (unshaded) Temperley--Lieb disks, 
together with those same disks but with all possible combinations of loops. To tame the dimensionality of 
a universal planar algebra, relations are imposed on $(A_n(S))_{n\in \Nbb_0}$. For any set $C$ of vectors in 
$(A_n(S))_{n\in \Nbb_0}$, we thus let $I(C)$ denote the planar ideal generated by $C$, 
and $(A_n(S,C))_{n\in \Nbb_0}$ the corresponding \textit{quotient planar algebra}.

We refer to a (universal) quotient planar algebra $(A_n(S,C))_{n\in \Nbb_0}$ as a 
\textit{proto-singly-generated} (PSG) \textit{planar algebra} if $S$ and $C$ satisfy
\begin{align}\label{AAA}
    \mathrm{dim}\,A_0(S,C) = \mathrm{dim}\,A_1(S,C) = 1, \qquad
    \mathrm{dim}\,A_2(S,C) = 3,\qquad
    |S_n| = \delta_{n,2},
\end{align}
such that $A_0(S,C)$, $A_1(S,C)$, and $A_2(S,C)$ are as in an unshaded singly generated planar algebra. 
Accordingly, these vector spaces are (i) `spherical': satisfying \eqref{equ:DiagTrace} for $n = 0,1,2$; 
(ii) `involutive': closed under $\cdot^*$, satisfying (\ref{equ:InvNat}) for all planar tangles $T$ with 
$\eta(D)\in\{0,2,4\}$ for all $D\in\mathcal{D}_T\cup \{D_0^T\}$; and
(iii) `positive-definite': the trace form \eqref{form} being positive-definite for $n = 0,1,2$.
The PSG planar algebra is otherwise generated by the action of the planar tangles, 
with no further relations imposed.
\medskip 

\noindent 
\textbf{Remark.} 
A PSG planar algebra is not evaluable (since $\dim A_n(S,C)=\infty$ for $n>2$) nor necessarily having
a positive-definite trace form for each $n$. It follows that additional structure must be imposed on $A_n(S,C)$ for 
$n>2$ to obtain a singly generated planar algebra. 
\medskip

Since $A_1(S,C)$ is positive-definite, we have
\be
 0<\Pb'_{\tr_1}(\mathds{1}_1^*\mathds{1}_1) 
 =\delta,
\label{tr1}
\ee
where $\Pb'_{\tr_1}$ is defined as in \eqref{equ:TrDash}. 
With that, the positive-definiteness of $A_2(S,C)$ similarly implies that
\be
 0<\Pb'_{\tr_2}\big([\mathds{1}_2-\tfrac{1}{\delta}e]^*[\mathds{1}_2-\tfrac{1}{\delta}e]\big)
  =\delta^2-1,
\label{tr2}
\ee
from which it then follows that
\be
 \delta>1.
\label{d1}
\ee
The evaluations in (\ref{tr1}) and (\ref{tr2}) involve the Jones--Wenzl idempotent $\wj_n$ (\ref{wj})
for $n=1$ and $n=2$, respectively: $\wj_1 = \mathds{1}_1$ and $\wj_2 = \mathds{1}_2 - \frac{1}{\delta}e$.
For general $n\in\Nbb$, we have $\wj_n^*=\wj_n$ and
\be
  \Pb'_{\tr_n}(\wj_n^*\wj_n)
  =\Pb'_{\tr_n}(\wj_n)
  =U_n\big(\tfrac{\delta}{2}\big)
   =\prod_{k=1}^n\Big(\delta-2\cos\frac{k\pi}{n+1}\Big),
\label{Un}
\ee
where $U_n$ is the $n^{\mathrm{th}}$ Chebyshev polynomial of the second kind.
We note that the first equality in (\ref{Un}) is a special case of Lemma \ref{Lem:ppp} below.
\medskip

\noindent
\textbf{Remark.}
If $(A_n(S,C))_{n\in\mathbb{N}_0}$ is positive semi-definite for a given value of $\delta$, 
then one obtains a well-defined subfactor planar algebra by quotienting out the ideal generated by all the 
vectors $a\in A_n(S,C)$ for which $\Pb'_{\tr_n}(a^*a)=0$ for all $n\in\mathbb{N}_0$. 
In the degenerate case $\delta=1$, for example, the Temperley--Lieb planar subalgebra is 
trivialised by quotienting out the ideal generated by $\mathds{1}_2-\tfrac{1}{\delta}e$
appearing in (\ref{tr2}), whereby $\mathrm{dim}\,A_2(S,C)$ reduces to $2$, c.f.~(\ref{AAA}).
\medskip

The conditions $C$ are determined in Section \ref{Sec:Gen} below, where we find that the PSG planar algebra
is unique, up to specification of parameters, including the loop weight $\delta$.
From here onward, we accordingly opt for the abridged notation $\mathrm{PS}_n\equiv A_n(S,C)$, $n\in\Nbb_0$,
with $S$ and $C$ as above.

\subsection{Defining relations}
\label{Sec:Gen}

Here, we determine the relations satisfied by the vectors in a distinguished $\mathrm{PS}_2$-basis of the form 
$\{\mathds{1}_2, e, s\}$. Taking inspiration from the classification approach in \cite{Liu15}, selecting $s$ as 
conveniently as possible is key in the following.
\begin{lem}\label{Lem:ppp}
Let $(A_{n,\pm})_{n\in\Nbb_0}$ be a subfactor planar algebra and $\{p_1,\ldots,p_k\}$ a basis for $A_{n,\pm}$ 
for some $n$ and shading $+/-$, and suppose $\{p_1,\ldots,p_k\}$ is a complete set of mutually orthogonal 
idempotents, with multiplication induced by $M_{n,\pm}$. 
Then, $\Pb'_{\tr_{n,\pm}}(p_i)>0$ and $p_i^*=p_i$ for all $i=1,\ldots,k$.
\end{lem}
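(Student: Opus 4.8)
The plan is to exploit the fact that a basis of mutually orthogonal idempotents forces $A_{n,\pm}$ to be a \emph{commutative} algebra, and then to pin down the behaviour of the involution and of the trace on its primitive idempotents. Concretely, if $\{p_1,\dots,p_k\}$ is a basis with $p_ip_j=\delta_{ij}p_i$ and $\sum_i p_i=\mathds{1}$, then every product in $A_{n,\pm}$ is diagonal in this basis, so $A_{n,\pm}\cong\Cbb^k$ as a unital algebra, with the $p_i$ being exactly its (unique) primitive idempotents. My first step is to record this identification, since it reduces the whole statement to elementary structure theory of $\Cbb^k$ together with positive-definiteness.

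The second step is to show that the involution $\cdot^*$ permutes the $p_i$. Applying the compatibility relation (\ref{equ:InvNat}) to the multiplication tangle $M_{n,\pm}$ --- whose $\dagger$-reflection interchanges the two input disks --- yields $(vw)^*=w^*v^*$, so $\cdot^*$ is a conjugate-linear algebra anti-automorphism of $A_{n,\pm}$; on the commutative algebra $\Cbb^k$ this coincides with a conjugate-linear automorphism. By (\ref{pstar}) each $p_i^*$ is again an idempotent, by (\ref{equ:IdnConj}) we have $\mathds{1}^*=\mathds{1}$, and an automorphism sends primitive idempotents to primitive idempotents; hence $\{p_1^*,\dots,p_k^*\}$ is again a complete set of mutually orthogonal primitive idempotents. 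By uniqueness of the primitive idempotents of $\Cbb^k$, there is a permutation $\sigma$ with $p_i^*=p_{\sigma(i)}$.

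The final step produces both conclusions at once from positive-definiteness. Since $\{p_i\}$ is a basis, each $p_i\neq0$, so the trace form gives $0<\Pb'_{\tr_{n,\pm}}(p_i^*p_i)$. Substituting $p_i^*=p_{\sigma(i)}$ and using orthogonality,
\begin{align*}
 0<\Pb'_{\tr_{n,\pm}}(p_i^*p_i)=\Pb'_{\tr_{n,\pm}}(p_{\sigma(i)}p_i)=\delta_{\sigma(i),i}\,\Pb'_{\tr_{n,\pm}}(p_i).
\end{align*}
The leftmost quantity is strictly positive, so $\delta_{\sigma(i),i}$ cannot vanish; thus $\sigma(i)=i$, which gives $p_i^*=p_i$, and the chain then reduces to $0<\Pb'_{\tr_{n,\pm}}(p_i)$, the remaining claim. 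I expect the main obstacle to be the second step: correctly translating the diagrammatic involution into the algebraic statement that $\cdot^*$ is an (anti-)automorphism permuting the primitive idempotents, which is precisely what makes the one-line positivity computation applicable. Everything after that is immediate, and notably the argument needs neither the traciality of $\Pb'_{\tr_{n,\pm}}$ nor any semisimplicity input beyond commutativity.
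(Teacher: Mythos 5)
Your proof is correct and follows essentially the same route as the paper's: both hinge on expanding $p_i^*$ over the idempotent basis and on the positivity of $\Pb'_{\tr_{n,\pm}}(p_i^*p_i)$, which in one stroke forces $p_i^*=p_i$ and $\Pb'_{\tr_{n,\pm}}(p_i)>0$. The only cosmetic difference is that you reach $p_i^*=p_{\sigma(i)}$ via the $\Cbb^k$ identification and the anti-automorphism property of $\cdot^*$, whereas the paper gets the same conclusion by noting that the coefficients of $p_i^*$ lie in $\{0,1\}$ (from (\ref{pstar})) and then invoking $\mathds{1}_{n,\pm}^*=\mathds{1}_{n,\pm}$.
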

\begin{proof}
By construction,
\be
 p_i^*=\sum_{j=1}^kc_{ij}p_j,\qquad i=1,\ldots,k,
\ee
for some scalars $c_{ij}$, where, by (\ref{pstar}),
\be
 c_{ij}\in\{0,1\},\qquad \forall\,i,j\in\{1,\ldots,k\},
\ee
while the positive-definiteness implies that
\be
 0<\Pb'_{\tr_{n,\pm}}(p_i^*p_i)=c_{ii}\Pb'_{\tr_{n,\pm}}(p_i),\qquad i=1,\ldots,k,
\ee
so $c_{11}=\cdots=c_{kk}=1$ and
\be
 \Pb'_{\tr_{n,\pm}}(p_i)>0,\qquad i=1,\ldots,k.
\ee
From
\be
 0=\mathds{1}_{n,\pm}^*-\mathds{1}_{n,\pm}=\sum_{i,j=1}^kc_{ij}p_j-\sum_{j=1}^kp_j,
\ee
it then follows that $c_{ij}=\delta_{ij}$, hence $p_i^*=p_i$ for all $i=1,\ldots,k$.
\end{proof}

Now, let $\mathrm{PS}_2$ be endowed with the multiplication induced by the \textit{unshaded} companion 
to \eqref{equ:MultAndCoMultPlanar}. Since $\dim \mathrm{PS}_0=1$, the idempotent 
$\ip_0:=\frac{1}{\delta}e$ satisfies $\dim(\ip_0\mathrm{PS}_2\ip_0)=1$ and is therefore primitive. 
By assumption, $\mathrm{PS}_2$ is positive-definite, hence semisimple, and because 
$\{\mathds{1}_2,\ip_0\}\subset \mathrm{PS}_2$ and $\dim\mathrm{PS}_1=1$, it follows that 
$\mathrm{PS}_2$ is commutative, see e.g.~\cite{JonesNotes}. The set $\{\ip_0\}\subset \mathrm{PS}_2$ can thus 
be extended to a $\mathrm{PS}_2$-basis, $\{\ip_0,\ip_1,\ip_2\}$, consisting of a complete set of mutually 
orthogonal (and primitive) idempotents, where we note that
\be
 \delta^2
 =\Pb'_{\tr_2}(\mathds{1}_2)
 =1+\Pb'_{\tr_2}(\ip_1)+\Pb'_{\tr_2}(\ip_2).
\ee
By Lemma \ref{Lem:ppp}, we also have
\be
 \Pb'_{\tr_2}(\ip_1)>0,\qquad \Pb'_{\tr_2}(\ip_2)>0.
\label{PsP}
\ee

We now introduce
\begin{align}
\label{SPQ}
 s= p_1\ip_1 + p_2\ip_2,
\end{align}
where $p_1,p_2\in\mathbb{C}$, and for $\{\mathds{1}_2,e,s\}$ to be a basis for $\mathrm{PS}_2$, it must hold 
that $p_1\neq p_2$. It follows that
\begin{align}\label{equ:QuadRelsSG}
 es=0=se,\qquad s^2 = (p_1+p_2)s -p_1p_2\big(\mathds{1}_2 -\tfrac{1}{\delta}e\big),
\end{align}
hence $\Pb^2_{r_{2,1}}(s)e=0$, and that
\be
 \Pb_{\tr_2}(s)=p_1\Pb_{\tr_2}(\ip_1)+p_2\Pb_{\tr_2}(\ip_2).
\ee
For convenience, we choose $p_1,p_2$ such that
\be
 \Pb_{\tr_2}(s)=0,
\label{trs}
\ee
noting that (\ref{PsP}) then ensures that $p_1\neq p_2$ (as required) and implies that $p_1,p_2\neq0$. Without 
loss of generality, we may further choose the normalisation of $s$ such that $p_1p_2=-1$, thereby obtaining
\be
 s^2=\mathds{1}_2-\tfrac{1}{\delta}e+\alpha s,
\ee
where $\alpha:=p_1+p_2$, noting that $\alpha$ can take any value in 
$\Cbb\setminus\{-2\mathrm{i},2\mathrm{i}\}$.

By construction,
\be
 \Pb_{r_{2,1}}(s)=\epsilon_{\mathds{1}}\mathds{1}_2+ \epsilon_e e + \epsilon s
\ee
for some $\epsilon_{\mathds{1}},\epsilon_e,\epsilon\in\Cbb$, while (\ref{trs}) implies that
\be
 \Pb_{r_{2,1}}(s)e=0
\label{se0}
\ee
(since $\delta\neq0$) and, by sphericality, that $\Pb^3_{r_{2,1}}(s)e=0$.
Using $\Pb_{r_{2,1}}^4\!=\mathrm{id}$ and $\delta^2\neq1$, it follows that
\be
 \epsilon_{\mathds{1}}=\epsilon_e=0,\qquad \epsilon^4=1.
\label{ccc}
\ee
Subsequently, recalling that $\Pb_{r_{2,2}}=\Pb_{r_{2,1}}^2$, the relations
$\Pb_{r_{2,2}}(\mathds{1}_2)=\mathds{1}_2$ and $\Pb_{r_{2,2}}(e)=e$ imply that
\be
 0=\Pb_{r_{2,2}}(s^2)-\Pb_{r_{2,2}}(s)\,\Pb_{r_{2,2}}(s)=\alpha(\epsilon^2-1)s,
\ee
so we must have $\alpha=0$ if $\epsilon^2=-1$. This requirement may be implemented by setting
\be
 \alpha=(1+\epsilon^2)\hat{\alpha},
\ee
where $\hat{\alpha}\in\Cbb\setminus\{-\mathrm{i},\mathrm{i}\}$. 

Under the conjugate linear involution $\cdot^*$, we have
\begin{align}\label{conj}
    \mathds{1}_2^*=\mathds{1}_2,\qquad 
    e^* = e,\qquad 
    s^* =\frac{p_1\bar{p}_2-\bar{p}_1p_2}{p_1-p_2}\big(\mathds{1}_2 - \tfrac{1}{\delta}e\big) 
    +\frac{\bar{p}_1-\bar{p}_2}{p_1-p_2}\,s,
\end{align}
where $\bar{p}$ denotes the complex conjugate to the scalar $p$.
Using $p_1p_2 = -1$, it follows that, for $\alpha\in\mathbb{R}$, we have $p_1,p_2\in\mathbb{R}$, hence $s^*=s$.

Expressing $s$ diagrammatically as
\be
 s=\!\raisebox{-0.285cm}{\SOpSG} ,
\ee
the analysis above implies that
\begin{align}\label{equ:SQuadSG}
    \raisebox{-0.325cm}{\LLbraidAlt}
    = \raisebox{-0.285cm}{\TLAidII}  
    \!- \tfrac{1}{\delta}\raisebox{-0.285cm}{\TLAeProj}
    \!+\alpha\raisebox{-0.285cm}{\SOpSG} ,\qquad
    \raisebox{-0.285cm}{\SOpSGRotI} \!= \epsilon\raisebox{-0.285cm}{\SOpSG}, \qquad 
    \epsilon \in \{-1,1,-\mathrm{i},\mathrm{i}\}.
\end{align}
For each $(\alpha,\epsilon)$ and $\delta>1$, where 
$(\alpha,\epsilon)\in\big(\Cbb\setminus\{-2\mathrm{i},2\mathrm{i}\}\big)\times\{-1,1\}$ 
or $(\alpha,\epsilon)\in\{0\}\times\{-\mathrm{i},\mathrm{i}\}$, we thus have a PSG planar algebra 
$(\mathrm{PS}_n)_{n\in\Nbb_0}$, where a basis for $\mathrm{PS}_2$ is given diagrammatically by
\begin{align}\label{equ:DiagBasisSG}
   B_2 = \left\{\!\raisebox{-0.285cm}{\TLAidII},
   \raisebox{-0.285cm}{\TLAeProj},
   \raisebox{-0.285cm}{\SOpSG} \!\right\} .
\end{align}
To illustrate the corresponding action of planar tangles, we have
\begin{align}\label{equ:SGTangActionEx}
    T =  \raisebox{-1.4275cm}{\PGExTangleAlt}\,,\qquad
    \Pb_T\big(\!\raisebox{-0.179cm}{\RotRopTLxBRSTAltAlt}, 
    \raisebox{-0.179cm}{\RotRopTLxESTAltAltAlt}, \raisebox{-0.179cm}{\RotRopTLxESTAltAltVIII}\,\big) 
    = \raisebox{-1.4275cm}{\PGExTangleInputAlt} \,= \delta \raisebox{-0.43cm}{\PGExTangleInputId}\;.
\end{align}

We note that the conditions 
\be
 \Pb^\ell_{r_{2,1}}(s)e=0,\qquad \ell=0,1,2,3,
\label{Pse}
\ee
where $\Pb^0_{r_{2,1}}$ is the identity map on $\mathrm{PS}_2$, correspond to the 
following diagrammatic relations in $\mathrm{PS}_1$:
\vspace{-.1cm}
\be
 \raisebox{-0.43cm}{\SOpSGCappedIAlt} \!=\!
 \raisebox{-0.43cm}{\SOpSGCappedIIIAlt}  \!=\! 
 \raisebox{-0.43cm}{\SOpSGCappedIIAlt}  \!=\! 
 \raisebox{-0.43cm}{\SOpSGCappedIVAlt}  \!=\, 0.
\ee
In fact, this property of $s$ was a motivating factor behind (\ref{trs}). Moreover, the tracelessness of $s$ 
allows us to represent the trace-form inner product relative to the ordered basis $\{\mathds{1}_2,e,s\}$ as
\be
 \begin{pmatrix} \delta^2&\delta&0\\ \delta&\delta^2&0\\ 0&0&\delta^2-1\end{pmatrix},
\ee
confirming the positive-definiteness of $\mathrm{PS}_2$ for $\delta>1$, c.f.~(\ref{d1}).

To summarise, the PSG planar algebra $\mathrm{PS}^{(\epsilon)}(\alpha,\delta)$
is the quotient planar algebra $(A_n(S,C^{(\epsilon)}(\alpha, \delta)))_{n\in\mathbb{N}_0}$, where 
\be
 S= \bigsqcup_{n\in\Nbb_0}S_n,\qquad 
 S_2 = \{\!\!\!\raisebox{-0.1cm}{\scalebox{1.75}{\PSGLabelII}}\!\!\!\},\qquad
 S_k = \emptyset,\qquad 
 \forall\,k\neq 2, 
\label{equ:PSGLabs}
\ee
and
\be
    C^{(\epsilon)}(\alpha, \delta)= \Big\{\raisebox{-0.285cm}{\BMWloopAlt} 
    \!-\delta\raisebox{-0.285cm}{\BMWemptyAlt},\!\raisebox{-0.285cm}{\LtwistIIAltUni}, 
    \raisebox{-0.285cm}{\SOpSGRotI} \!- \epsilon\raisebox{-0.285cm}{\SOpSG}, 
    \raisebox{-0.325cm}{\LLbraidAlt}
    \!-\raisebox{-0.285cm}{\TLAidII}  
    \!+\tfrac{1}{\delta}\raisebox{-0.285cm}{\TLAeProj}
    \!-\alpha\raisebox{-0.285cm}{\SOpSG} \!\Big\},
\label{equ:PSGRels}
\ee
with $\delta>1$, and $(\alpha,\epsilon)\in\big(\Cbb\setminus\{-2\mathrm{i},2\mathrm{i}\}\big)\times\{-1,1\}$ or 
$(\alpha,\epsilon)\in\{0\}\times\{-\mathrm{i},\mathrm{i}\}$. 

With the parameters as above, $\mathrm{PS}^{(\epsilon)}(\alpha,\delta)$ is the unique
planar algebra satisfying the conditions outlined in the paragraph containing \eqref{AAA}.
It follows that any unshaded singly generated planar algebra can be obtained by possibly specialising the 
parameters $\epsilon,\alpha,\delta$ and by taking a quotient in such a way that each $\mathrm{PS}_n$, 
$n\in \{3,4,\ldots\}$, is spherical, involutive and positive-definite. 
These observations conclude the proof of Proposition \ref{prop:UnSGeqPSG}.

\subsection{Presentation}
\label{Sec:PSpresentation}

We proceed to describe the algebra that arises when endowing the vector space $\mathrm{PS}_n$ with 
the multiplication induced by the unshaded companion to \eqref{equ:MultAndCoMultPlanar}.
For each $n\in\Nbb$, $\delta > 1$, and each pair 
$(\alpha,\epsilon)\in\big(\Cbb\setminus\{-2\mathrm{i},2\mathrm{i}\}\big)\times\{-1,1\}$
or $(\alpha,\epsilon)\in\{0\}\times\{-\mathrm{i},\mathrm{i}\}$, the \textit{proto-singly-generated} (PSG) 
\textit{algebra} $\mathrm{PS}_n^{(\epsilon)}(\alpha,\delta)$ is thus defined as the unital 
associative algebra $\langle e_i,s_i\,|\,i=1,\ldots,n-1\rangle$ subject to the relations
\be
\begin{array}{rllll}
  & s_i^2=\mathds{1}-\tfrac{1}{\delta}e_i+\alpha s_i,\qquad 
  & e_is_i= s_ie_i = 0,\qquad
  & s_is_j=s_js_i,\qquad
  &  |i-j|>1,
\\[.15cm]
  & e_ie_{i\pm1}e_i= e_i,\qquad
  & e_ie_{i\pm1}s_i=\epsilon^{\pm1}e_is_{i\pm1},\qquad
  & s_ie_{i\pm1}e_i=\epsilon^{\mp1}s_{i\pm1}e_i,
  &
\end{array}
\label{PSrel}
\ee
with $\mathds{1}$ denoting the unit. Following from (\ref{PSrel}), we also have the relations
\be
 e_i^2=\delta e_i,\qquad 
 e_ie_j=e_je_i,\qquad 
 e_is_j=s_je_i,\qquad
 |i-j|>1,
\ee
\be
  e_is_{i\pm1}s_i=e_i\big(\epsilon^{\mp1}(e_{i\pm1}-\tfrac{1}{\delta}\mathds{1})+\alpha s_{i\pm1}\big),\qquad
  s_is_{i\pm1}e_i=\big(\epsilon^{\pm1}(e_{i\pm1}-\tfrac{1}{\delta}\mathds{1})+\alpha s_{i\pm1}\big)e_i,
\ee
and
\be
  e_is_{i\pm1}e_i=0,\qquad s_ie_{i+1}s_i=s_{i+1}e_is_{i+1}.
\ee
We note that it suffices to list one of the two relations $e_is_i= 0$ and $s_ie_i = 0$ in (\ref{PSrel}).

The generators of $\mathrm{PS}_n^{(\epsilon)}(\alpha,\delta)$ are represented diagrammatically as
\begin{align}
    \mathds{1}\,\leftrightarrow \raisebox{-0.885cm}{\BMWRecIdOdd},\qquad
    e_i\,\leftrightarrow \raisebox{-0.885cm}{\BMWRecEOdd},\qquad 
    s_i\,\leftrightarrow \raisebox{-0.885cm}{\LRecBOdd},
\label{1es}
\end{align}
with the marked boundary interval linking the two horizontal edges via an invisible arc on the left.
Multiplication is then implemented by vertical concatenation, where the diagram representing
the product $ab$ is obtained by placing the diagram representing $b$ atop the one representing $a$. 

\medskip

\noindent
\textbf{Remark.}
The PSG planar algebra `includes' the PSG algebras but not the other way around. 
A planar algebra $(A_n)_{n\in\mathbb{N}_0}$ offers a consistent way to define operations that are
not accessible to the individual algebras $A_n$ themselves, such as (unshaded versions of) 
the rotations \eqref{equ:ElRot} and traces \eqref{equ:TracePlanar}.
\medskip

We stress that there are no nontrivial relations involving $s_is_{i\pm 1}s_i$ without also involving terms with four or 
more $s$-factors. Accordingly, for $n=3$, there are infinitely many linearly independent vectors of the form
$(s_1s_2)^k$, hence $\dim \mathrm{PS}_3=\infty$, manifesting the non-evaluability of the planar algebra.

\section{Yang--Baxter integrable models}
\label{Sec:YBim}

We now review the integrability framework developed in \cite{PR22}, and introduce 
\textit{homogeneous transfer operators}, \textit{generalised Yang--Baxter equations}, and 
\textit{homogeneous Yang--Baxter integrability}. We also show that a singly generated planar algebra 
\textit{encoding} a homogeneous Yang--Baxter integrable model must 
(i) admit an unshaded description, and (ii) be YBR.

\subsection{Transfer operators}
\label{Sec:TYB}

Here, we recall the planar-algebraic Yang--Baxter integrability framework developed recently in \cite{PR22}.
Although we will be using terminology usually reserved for shaded planar algebras, including subfactor and YBR 
planar algebras, we will accordingly and henceforth work with unshaded versions of these planar algebras. 

For each $n\in\Nbb$, we follow \cite{PR22} and introduce the \textit{transfer tangle}
\be
 T_n:= \raisebox{-0.945cm}{\PlanarTemplateTransferAltRotRALTBNew}
\label{Tn}
\ee
and the parameterised algebra elements
\be
 K(u):= \sum_{a\in B_1} k_a(u)\,a =\!\raisebox{-0.45cm}{\RotKopBlueOp}\,,\quad
 R(u):= \sum_{a\in B_2} r_a(u)\,a = \!\raisebox{-0.345cm}{\RotRopGreenOpBBTLST}\,,\quad
 \Ko(u):= \sum_{a\in B_1} \ko_a(u)\,a = \!\raisebox{-0.45cm}{\RotKopYellowOp}\,,
\label{RKK}
\ee
where $B_n$ denotes a basis for $A_n$, while $k_a,r_a,\ko_a:\Omega\to\Cbb$, 
with $\Omega\subseteq\Cbb$ a suitable domain. We refer to $u$ parameterising the operators in \eqref{RKK},
as the corresponding \textit{spectral parameter}.
\medskip

\noindent 
\textbf{Remark.} The set $\Omega$ indicates a domain over which $R(u)$, $K(u)$, and $\Ko(u)$ are well-defined.
Typically, $\Omega$ contains an open set in $\mathbb{C}$, allowing power-series expansions of 
$R(u)$, $K(u)$, and $\Ko(u)$.
\medskip

\noindent
The associated \textit{homogeneous transfer operator} is defined as
\be
 T_n(u):=\Pb_{T_n}\big(K(u),R(u),\ldots,R(u),\Ko(u)\big),
\label{Tnu}
\ee
where $K(u)$ and $\Ko(u)$, respectively, are inserted into the left- and rightmost disk spaces in (\ref{Tnu}), 
while a copy of $R(u)$ is inserted into each of the remaining $2n$ disks, and we note that
\be
 T_n^{\dagger} = T_n,\qquad
 T_n(u)=\!\raisebox{-0.945cm}{\RotMarkRedTransferFilteredCrossIsoSingleRRotRALTBNew}\!\in A_n.
\label{Tnu2}
\ee
We refer to $R(u)$ as the corresponding \textit{$R$-operator}, and to $K(u)$ and $\Ko(u)$ as the 
corresponding \textit{$K$-operators}. For convenience, here we have adopted a slightly different 
convention compared to the definition of the transfer tangle in \cite{PR22}.
As in \cite{PR22}, the homogeneous transfer operator \eqref{Tnu} may be thought of as a double-row 
transfer operator on the strip, that is, a particular generalisation of a Sklyanin-type transfer matrix \cite{Sklyanin88}.
\medskip

\noindent
\textbf{Remark.}
More general transfer operators may be constructed, for example by including `inhomogeneities' 
at the level of the $R$-operator. \textit{Spectral inhomogeneities} are thus introduced by varying the spectral parameter 
of the $R$-operator depending on its position within the transfer tangle, while \textit{algebraic inhomogeneities} 
are introduced by varying the parameterisation in the construction of the $R$-operator (as an element of $A_2$) 
depending on its position within the transfer tangle, thereby introducing more than one $R$-operator. 
We refer to transfer operators with any of these features as \textit{inhomogeneous}. 
However, as we will exclusively consider homogeneous transfer operators, we often omit the qualifier ``homogeneous". 
\begin{prop}\label{prop:SelfAdRK}
If the $R$- and $K$-operators are self-adjoint with respect to the involution $\cdot^*$,
then so is the transfer operator $T_n(u)$ for each $n\in\Nbb$.
\end{prop}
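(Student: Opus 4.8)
The plan is to apply the conjugate-linear involution $\cdot^*$ directly to the defining expression \eqref{Tnu} and transport it through the planar action by means of the involutive-naturality relation \eqref{equ:InvNat}. The transfer operator is the image of the single tangle $T_n$ evaluated on the input list $(K(u),R(u),\ldots,R(u),\Ko(u))$, so I would first record that \eqref{equ:InvNat} rewrites $T_n(u)^*$ as the action of the reflected tangle $T_n^\dagger$ on the list of starred inputs, with each input sent to the same slot it originally occupied.

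Explicitly, I would write the chain
\be
 T_n(u)^* = \Pb_{T_n}\big(K(u),R(u),\ldots,R(u),\Ko(u)\big)^*
 = \Pb_{T_n^\dagger}\big(K(u)^*,R(u)^*,\ldots,R(u)^*,\Ko(u)^*\big),
\ee
where the second equality is precisely \eqref{equ:InvNat}. The self-duality of the transfer tangle recorded in \eqref{Tnu2}, namely $T_n^\dagger = T_n$, then lets me replace $T_n^\dagger$ by $T_n$, while the self-adjointness hypotheses $K(u)^* = K(u)$, $R(u)^* = R(u)$, and $\Ko(u)^* = \Ko(u)$ replace every starred input by the input itself. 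Combining these two observations gives $\Pb_{T_n^\dagger}(K(u)^*,\ldots,\Ko(u)^*) = \Pb_{T_n}(K(u),\ldots,\Ko(u)) = T_n(u)$, which is the assertion, uniformly in $n\in\Nbb$.

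I do not anticipate a genuine obstacle here; the argument is a direct application of \eqref{equ:InvNat}. The two points I would verify carefully are, first, the tangle identity $T_n^\dagger = T_n$, on which the whole computation hinges --- this is the manifest reflection symmetry of the diagram displayed in \eqref{Tnu2}, and I would check that the reflection carries the slot holding $K(u)$ and the slot holding $\Ko(u)$ consistently with the input indexing used in \eqref{equ:InvNat}. Second, although $\cdot^*$ is only conjugate-linear, no complex conjugation of the coefficient functions $k_a$, $r_a$, $\ko_a$ in \eqref{RKK} needs to be tracked separately, since that conjugation is already absorbed into the hypotheses that the operators $K(u)$, $R(u)$, $\Ko(u)$ are self-adjoint.
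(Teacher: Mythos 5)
Your argument is correct and coincides with the paper's own proof: both apply the involutive-naturality relation \eqref{equ:InvNat} to rewrite $T_n(u)^*$ as $\Pb_{T_n^\dagger}$ acting on the starred inputs, then invoke $T_n^\dagger=T_n$ from \eqref{Tnu2} together with the self-adjointness of the $R$- and $K$-operators. Your additional remarks on the slot-matching under reflection and on the conjugate-linearity being absorbed into the hypotheses are sound but not needed beyond what the paper records.
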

\begin{proof}
Using \eqref{equ:InvNat}, we have
\begin{align}
        T_n(u)^* = \Pb_{T_n^{\dagger}}\big(K(u)^*,R(u)^*,\ldots,R(u)^*,\Ko(u)^*\big) =  T_n(u),
\end{align}
where the second equality follows from (\ref{Tnu2}) and the self-adjointness of the $R$- and $K$-operators. 
\end{proof}
\noindent
We denote the regular representation of $A_n$ by
\be
 \rho_n: A_n\to \mathrm{End}(A_n).
\ee
\begin{cor}\label{cor:genDiagFaith}
Let $(A_n)_{n\in\Nbb_0}$ be a subfactor planar algebra, and suppose the $R$- and $K$-operators 
are self-adjoint with respect to the trace form. Then, $\rho_n(T_n(u))$ is diagonalisable for all $n\in\Nbb$.
\end{cor}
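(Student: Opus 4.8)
The plan is to combine the self-adjointness of the transfer operator (Proposition~\ref{prop:SelfAdRK}) with the fact that subfactor planar algebras carry a positive-definite inner product, and to recall that the regular representation of a finite-dimensional semisimple algebra is itself semisimple. First I would observe that, by Proposition~\ref{prop:SelfAdRK}, the hypothesis that $R(u)$ and $K(u)$ are self-adjoint with respect to the trace form (equivalently, with respect to $\cdot^*$, since the trace form is the $\cdot^*$-inner product) forces $T_n(u)^*=T_n(u)$ for every $n\in\Nbb$. So $T_n(u)$ is a self-adjoint element of $A_n$.

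Next I would make precise the claim that a self-adjoint element acts diagonalisably in the regular representation. Since $(A_n)_{n\in\Nbb_0}$ is a subfactor planar algebra, each $A_n$ is a finite-dimensional semisimple algebra equipped with the positive-definite trace form \eqref{form} as an inner product, for which $\cdot^*$ is an adjoint-compatible involution: concretely, $\Pb'_{\tr_n}((ab)^*c)=\Pb'_{\tr_n}(b^*(a^*c))$ realises left multiplication by $a^*$ as the adjoint of left multiplication by $a$. Hence $\rho_n(a^*)=\rho_n(a)^\dagger$ where $\dagger$ is the Hilbert-space adjoint on $A_n$ taken with respect to the trace form. Applying this with $a=T_n(u)$ and using $T_n(u)^*=T_n(u)$ gives $\rho_n(T_n(u))^\dagger=\rho_n(T_n(u))$, so $\rho_n(T_n(u))$ is a normal (indeed self-adjoint) operator on the finite-dimensional inner-product space $A_n$. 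By the spectral theorem, it is therefore diagonalisable, which is exactly the conclusion.

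The main obstacle, and the only step that is not purely formal, is verifying that $\rho_n$ intertwines $\cdot^*$ with the trace-form adjoint $\dagger$. This rests on the trace form being a faithful, $\cdot^*$-compatible, associative bilinear pairing: faithfulness (equivalently positive-definiteness) guarantees that $\rho_n$ is injective and that the pairing is nondegenerate, while associativity of the trace together with $(ab)^*=b^*a^*$ yields $\langle \rho_n(a)b,\,c\rangle_{n}=\langle b,\,\rho_n(a^*)c\rangle_{n}$, identifying $\rho_n(a^*)$ with $\rho_n(a)^\dagger$. I would spell out this adjointness computation explicitly, since it is where the specific structure of the subfactor planar algebra (positive-definiteness and involutivity) enters; everything after it is a direct invocation of the spectral theorem.

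One point worth flagging is that the argument is uniform in $u$, so no continuity or analyticity of the $u$-dependence is needed: diagonalisability holds pointwise in $u$ wherever $R(u)$ and $K(u)$ are self-adjoint. I would therefore state the proof for fixed but arbitrary $u$ in the relevant domain $\Omega$, noting that self-adjointness of the building blocks is the only hypothesis actually used.
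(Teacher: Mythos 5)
Your proof is correct and follows essentially the same route as the paper: Proposition~\ref{prop:SelfAdRK} gives $T_n(u)^*=T_n(u)$, and the spectral theorem applied to $\rho_n(T_n(u))$ on the inner-product space $(A_n,\langle\cdot,\cdot\rangle_n)$ gives diagonalisability. The only difference is that you spell out the adjointness identity $\rho_n(a^*)=\rho_n(a)^\dagger$ via the trace property, a step the paper leaves implicit; this is a worthwhile clarification but not a different argument.
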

\begin{proof}
Proposition \ref{prop:SelfAdRK} and the self-adjointness of the $R$- and $K$-operators imply that $T_n(u)$ is 
self-adjoint. By the spectral theorem, $\rho_n(T_n(u))$ is therefore diagonalisable. 
\end{proof}
With reference to the unshaded versions of the elementary rotation tangles in \eqref{equ:ElRot}, 
we say that the $R$- and $K$-operators (\ref{RKK}) are \textit{crossing symmetric} if
\be
    \Pb_{r_{1,1}}[K(u)] = \tilde{c}_K(u) K(c_K(u)),\quad 
    \Pb_{r_{2,1}}[R(u)] = \tilde{c}_R(u) R(c_R(u)),\quad
    \Pb_{r_{1,1}}[\Ko(u)] = \tilde{c}_{\Ko}(u) \Ko(c_{\Ko}(u)),
\ee
for some scalar functions $\tilde{c}_K,c_K,\tilde{c}_R,c_R,\tilde{c}_{\Ko},c_{\Ko}:\Omega\to\Cbb$.

\subsection{Yang--Baxter integrability}
\label{Sec:YBpol}

We say that a model described by the transfer operator $T_n(u)$ in (\ref{Tnu}) is \textit{integrable} on $\Omega$ if
\be
 [T_n(u),T_n(v)]=0,\qquad \forall\,u,v\in\Omega,
\label{TnTn}
\ee
with $\Omega\subseteq\Cbb$ a suitable domain. The following result is established in \cite{PR22} using standard
diagrammatic manipulations \cite{BPOB96}.

\begin{prop}
\label{Prop:YBE}
Let the parameterisations in (\ref{RKK}) be given, and suppose there exist
\begin{align}
  \raisebox{-0.345cm}{\RotRopDarkRedOpBBi} :=\sum_{a\in B_2} \yo^{(i)}_a(u,v)\,a,\qquad\quad
  &\raisebox{-0.345cm}{\RotRopRedOpBBi} :=\sum_{a\in B_2} y^{(i)}_a(u,v)\,a,
\label{XY}
\end{align}
where $\yo^{(i)}_a$ \!and\, $y^{(i)}_a$, $i=1,2,3$, are scalar functions defined for 
all $u,v\in\Omega\subseteq\Cbb$, such that the following three sets of relations are satisfied:
\begin{enumerate}
\item[$\bullet$] Inversion identities
\begin{align}
  \raisebox{-1.05cm}{\FramedInvLHS}\; =\! \raisebox{-1.05cm}{\FramedInvRHS}
  \qquad\quad (i=1,2,3)
\label{Invs}
\end{align}
\item[$\bullet$] Yang--Baxter equations
\be
 \raisebox{-1cm}{\FramedYBELHSI}\ 
 =\raisebox{-1cm}{\FramedYBERHSI}
\;\;\
 \raisebox{-1cm}{\FramedYBELHSII}\ 
 =\raisebox{-1cm}{\FramedYBERHSII} 
\;\;\
 \raisebox{-1cm}{\FramedYBELHSIII}\ 
 =\raisebox{-1cm}{\FramedYBERHSIII}
\label{YBE}
\ee
\item[$\bullet$] Boundary Yang--Baxter equations
\begin{align}
 \raisebox{-1.925cm}{\RotMarkRedBYBEbFilteredLeftALTBFramed}\ \;
 =\; \raisebox{-1.925cm}{\RotMarkRedBYBEbFilteredRightALTBFramed}
\qquad\qquad\qquad
 \raisebox{-1.925cm}{\RotMarkRedBYBEyFilteredLeftALTBFramed}\ \;
 =\; \raisebox{-1.925cm}{\RotMarkRedBYBEyFilteredRightALTBFramed}
\label{BYBEs}
\end{align}
where
\begin{align}
 \raisebox{-0.345cm}{\RotRopDarkRedOpBBid}\; =\sum_{a\in B_2} \yo^{(1)}_a(v,u)\,a,\qquad\quad
 &\raisebox{-0.345cm}{\RotRopRedOpBBid}\; =\sum_{a\in B_2} y^{(1)}_a(v,u)\,a.
\label{YBE4}
\end{align}
\end{enumerate}
Then, $[T_n(u),T_n(v)]=0$ for all $u,v\in\Omega$.
\end{prop}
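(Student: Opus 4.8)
The plan is to prove the commutativity relation \eqref{TnTn} by the standard diagrammatic \emph{railroad} (or \emph{train}) argument for double-row transfer operators \cite{BPOB96,PR22}. First I would represent the product $T_n(u)T_n(v)$ as the vertical concatenation of two copies of the transfer tangle \eqref{Tnu}, one carrying spectral parameter $u$ and the other $v$. Since each transfer operator is a double row of $2n$ $R$-operators capped on the left and right by the $K$-operators, the product is a diagram built from two stacked double rows, and the entire argument will consist of isotoping a single intertwiner across this diagram using the three supplied families of local relations.

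The next step is to insert, at one end of the diagram, a cancelling pair of the intertwiners $Y^{(i)}(u,v)$ and $\Yo^{(i)}(u,v)$ from \eqref{XY}. By the inversion identities \eqref{Invs}, such a pair composes to a (scalar multiple of the) identity, so the insertion leaves $T_n(u)T_n(v)$ unchanged while making the diagram amenable to the railroad manipulation. I would then propagate one factor of the pair across the whole diagram: in the bulk, each time the intertwiner meets a pair of $R$-operators it is pushed through by the appropriate one of the three Yang--Baxter equations \eqref{YBE}, and applied successively along the row this slides the intertwiner from one side to the other while exchanging the two spectral parameters carried by the stacked rows. At each of the two boundaries, where the intertwiner meets a $K$-operator, the boundary Yang--Baxter equations \eqref{BYBEs} (with the companion intertwiners $Y^{(1)}(v,u),\Yo^{(1)}(v,u)$ of \eqref{YBE4}) allow it to pass through.

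Having traversed both boundaries and the full bulk, the net effect of these moves is to interchange $u\leftrightarrow v$ between the two rows, producing the diagram for $T_n(v)T_n(u)$ with the intertwiner pair now reassembled at the opposite end. A second application of the inversion identities \eqref{Invs} removes this pair, the scalar prefactors cancelling those introduced in the insertion step, and one arrives at $T_n(u)T_n(v)=T_n(v)T_n(u)$, which is \eqref{TnTn}.

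I expect the main obstacle to be the boundary and species bookkeeping rather than any single hard estimate: one must track the three distinct intertwiner species $i=1,2,3$, ensure that the particular Yang--Baxter equation invoked at each crossing matches the local configuration produced by the planar tangle \eqref{Tnu}, correctly pair the two boundary Yang--Baxter equations \eqref{BYBEs} with the $K$- and $\Ko$-operators at the respective ends, and verify that the scalar factors generated by the inversion identities at the two insertions cancel exactly. Once the boundary intertwiners are matched to the correct boundary Yang--Baxter equations, the repeated bulk railroad moves are routine, so the essence of the proof is organising the sequence of local moves so that they compose consistently across the full double-row diagram.
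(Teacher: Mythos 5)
Your proposal is the standard railroad/train argument for double-row transfer operators, which is exactly the route the paper takes: it does not spell out a proof but defers to \cite{PR22} and the diagrammatic manipulations of \cite{BPOB96}, and your outline (insert a $Y\overline{Y}$ pair via the inversion identities, propagate through the bulk with the three YBEs, pass the $K$-operators with the boundary YBEs, then remove the pair) is precisely that argument. The only small refinement is that the inversion identities \eqref{Invs} are already normalised so that the composite equals the identity tangle exactly, so no residual scalar bookkeeping is needed at the insertion and removal steps.
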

\noindent
\textbf{Remark.}
As they appear in (\ref{YBE}), we label the equations YBE$_1$, YBE$_2$, and YBE$_3$.
\medskip

We denote the `auxiliary' operators in (\ref{XY}) by
\be
 \overline{Y\!}_i(u,v)=\sum_{a\in B_2} \yo^{(i)}_a(u,v)\,a,\qquad
 Y_i(u,v)=\sum_{a\in B_2} y^{(i)}_a(u,v)\,a,
\ee
and will refer to them as \textit{$Y$-operators}, as short for `YBE operators'. We refer to $\overline{Y\!}_i(u,v)$ as 
the \textit{spatial inverse} of $Y_i(u,v)$ if the pair satisfy (\ref{Invs}), 
and stress that pairs for different $i=1,2,3$ are, in general, independent of one another. Since the $Y$-operators need not be 
expressible in terms of the $R$-operator itself, we say that the inversion identities \eqref{Invs}, 
the YBEs (\ref{YBE}), and the boundary YBEs (\ref{BYBEs}) are \textit{generalised}.
We refer to YBE$_1$ under the specialisation $Y_1(u,v) = R(uv)$ as the \textit{standard} YBE.
\medskip

Traditionally, a model is referred to as \textit{Yang--Baxter integrable} if the $R$- and $K$-operators satisfy a set 
of local relations, including YBEs, that imply (\ref{TnTn}). The corresponding $R$- and $K$-operators are then 
said to provide a \textit{Baxterisation} \cite{Jones90}. Based on our homogeneous transfer operator (\ref{Tnu}), 
a prototypical set of such local relations is presented in Proposition \ref{Prop:YBE}. We accordingly refer to the 
ensuing integrability as \textit{homogeneous Yang--Baxter integrability}, and say that the corresponding $R$- and 
$K$-operators provide a \textit{homogeneous Baxterisation}. 
We view such a Baxterisation as \textit{specious} if
\be
 K(u)=k(u)a_1,\qquad R(u)=r(u)a_2,\qquad  \Ko(u)=\ko(u)\overline{a}_1,
\label{specious}
\ee
where $a_1,\overline{a}_1\in A_1$, $a_2\in A_2$ and $k,r,\ko:\Omega\to\Cbb$,
because, in that case, we have
\be
 T_n(u)=k(u)\ko(u)r^{2n}(u)a,\qquad a=\Pb_{T_n}\big(a_1,a_2,\ldots,a_2,\overline{a}_1\big)\in A_n,
\label{Tnukk}
\ee
from which (\ref{TnTn}) trivially follows. In the following, we will disregard specious Baxterisations. 
\medskip

\noindent
\textbf{Remark.}
As indicated in the Remark following \eqref{Tnu2}, a model may be described by a transfer operator with \textit{inhomogeneities}.
Adopting the approach in Proposition \ref{Prop:YBE}, one would then seek to use a modified set of sufficient conditions 
to establish the Yang--Baxter integrability of the model. Accordingly, such models are absent in 
the `homogeneous specialisation' discussed here.
\medskip

We say that a planar algebra $(A_n)_{n\in\Nbb_0}$ \textit{encodes} the homogeneous Yang--Baxter 
integrability if $\{K(u),$ $\overline{K}(u)\,|\,u\in\Omega\}$ and $\{R(u)\,|\,u\in\Omega\}$ together with the 
action of planar tangles generate the \textit{full} vector spaces $A_1$ and $A_2$, respectively, not just subspaces.
This notion will play a key role in subsequent sections.

As indicated, our focus will be on singly generated planar algebras, where $\dim A_1=|B_1|=1$.
We may accordingly normalise the $K$-operators in (\ref{RKK}) so that they equal the unit element,
\be
  K(u)=\Ko(u)=\mathds{1}_1.
\label{KK0}
\ee
We have opted to keep the exposition above general, but will be assuming that (\ref{KK0}) holds in the following.
In particular, the boundary YBEs (\ref{BYBEs}) now reduce to
\be
 \Pb_{r_{2,-1}}[\overline{Y}_{\!2}(u,v)]\overline{Y}_{\!1}(u,v)
  =\overline{Y}_{\!1}(v,u)\Pb_{r_{2,1}}[\overline{Y}_{\!3}(u,v)],\qquad
 \Pb_{r_{2,1}}[Y_2(u,v)]Y_1(u,v)=Y_1(v,u)\Pb_{r_{2,-1}}[Y_3(u,v)].
\label{bYBER}
\ee
\begin{prop}\label{Prop:YR}
Let $Y_1(u,v)=R(uv)$ solve $\mathrm{YBE}_1$.
Then,
\be
 Y_2(u,v)=\Pb_{r_{2,-1}}\big[R(\tfrac{u}{v})\big]\qquad\text{and}\qquad
 Y_3(u,v)=\Pb_{r_{2,1}}\big[R(\tfrac{u}{v})\big]
\label{YPR}
\ee
solve YBE$_2$ and YBE$_3$, respectively.
\end{prop}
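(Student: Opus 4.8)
```latex
\textbf{Proof proposal.}
The plan is to take the hypothesis that $Y_1(u,v)=R(uv)$ solves $\mathrm{YBE}_1$ and to
transport that single diagrammatic identity to the other two by applying the elementary
rotation tangles, using their invertibility. The starting point is that $\mathrm{YBE}_1$,
read diagrammatically as in \eqref{YBE}, is an equation in $A_3$ involving three copies of
$R$-type operators joined by the braiding tangle. My first step would be to write out this
equation explicitly in terms of $\Pb_{r_{2,\pm1}}$, using that the spatial inverse and the
rotated operators are all obtained from $R$ by the substitutions in \eqref{YPR}; concretely,
$Y_2(u,v)$ and $Y_3(u,v)$ are the $\pm1$ rotations of $R(\tfrac{u}{v})$, so the content of
$\mathrm{YBE}_2$ and $\mathrm{YBE}_3$ is just the rotated picture of $\mathrm{YBE}_1$ under a
corresponding renaming of the spectral parameters.

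The second step is the parameter bookkeeping. In $\mathrm{YBE}_1$ the operator carries the
argument $uv$, whereas the auxiliary operators of \eqref{YPR} carry $\tfrac{u}{v}$; so I would
verify that, after rotating the whole $\mathrm{YBE}_1$ tangle and relabelling the spectral
parameters, the multiplicative combinations appearing on each of the three strands match those
demanded by $\mathrm{YBE}_2$ and $\mathrm{YBE}_3$ respectively. The key algebraic input here is
Corollary \ref{cor:ElRot} in the appendix (invertibility of the rotation tangles), together with
$\Pb_{r_{2,2}}=\Pb_{r_{2,1}}^2$ and $\Pb_{r_{2,1}}^4=\mathrm{id}$, which let me move a rotation
freely across the braiding generator and thereby turn one YBE into another without loss of
information.

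The third step is to confirm that applying $\Pb_{r_{2,-1}}$ (respectively $\Pb_{r_{2,1}}$) to
every disk in the $\mathrm{YBE}_1$ tangle produces exactly the $\mathrm{YBE}_2$
(respectively $\mathrm{YBE}_3$) tangle of \eqref{YBE}, with the auxiliary operators identified
as in \eqref{YPR}. Because rotation is a natural, invertible planar operation, an equality of two
tangles in $\mathrm{YBE}_1$ is preserved under the uniform rotation, so the rotated equality holds
term by term. Reading off the resulting identity and comparing with the diagrammatic definition of
$\mathrm{YBE}_2$ and $\mathrm{YBE}_3$ then yields the claim.

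The main obstacle I anticipate is purely diagrammatic rather than conceptual: one must check that
the rotation does not merely permute strands but actually carries the crossing and the marked
intervals into the precise configuration drawn in \eqref{YBE}, including the correct handedness of
the braiding. In particular, verifying that $\Pb_{r_{2,-1}}$ versus $\Pb_{r_{2,1}}$ lands on
$\mathrm{YBE}_2$ versus $\mathrm{YBE}_3$ (and not the reverse) requires careful tracking of
orientations, and the spectral-parameter substitution $uv\mapsto \tfrac{u}{v}$ must be shown
compatible with that orientation choice. Once the orientations and markings are pinned down, the
remaining verification is routine rotation algebra.
```
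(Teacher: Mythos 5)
Your proposal is correct and follows essentially the same route as the paper: the paper's proof likewise invokes Corollary \ref{cor:ElRot} and observes that rotating the YBEs (it applies $\Pb_{r_{3,-1}}$ and $\Pb_{r_{3,1}}$ to YBE$_2$ and YBE$_3$, the equations being elements of $A_3$) reduces them to YBE$_1$ with the auxiliary operators as in \eqref{YPR}. The only refinement to note is that the operative move is a single level-$3$ rotation of the whole equation rather than level-$2$ rotations applied disk by disk, the latter being the induced effect on the inputs.
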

\begin{proof}
By Corollary \ref{cor:ElRot} in Appendix \ref{app:pa}, $\Pb_{r_{n,k}}$ is invertible for all $n,k\in\Nbb$, 
so applying one of these maps to 
both sides of an equation yields an equivalent relation. If $Y_1(u,v)=R(uv)$, then applying $\Pb_{r_{3,-1}}$ 
and $\Pb_{r_{3,1}}$ to YBE$_2$ and YBE$_3$, respectively, yields conditions equivalent to YBE$_1$,
with $Y_2(u,v)=\Pb_{r_{2,-1}}[R(u/v)]$ and $Y_3(u,v)=\Pb_{r_{2,1}}[R(u/v)]$.
\end{proof}

\noindent
\textbf{Remark.}
As we will see in sections \ref{Sec:FC}--\ref{Sec:Liu}, the singly generated planar algebras admit Baxterisations, 
where the $Y$-operators are of the form presented in Proposition \ref{Prop:YR}.
While the present work thus does not exploit fully the generality of the YBEs \eqref{YBE}, we stress that, 
in the general setting, the generalised YBEs are inequivalent to the standard YBE.

\subsection{Unshaded planar algebras}
\label{Sec:Unshaded}

A key observation for us is that the singly generated planar algebras that \textit{do not} admit an unshaded 
description \textit{cannot} encode a homogeneous Yang--Baxter integrable model within the algebraic integrability 
framework outlined in Section \ref{Sec:TYB} and Section \ref{Sec:YBpol}. 

To see this, let $(A_{n,\pm})_{n\in\mathbb{N}_0}$ denote a singly generated planar algebra encoding a 
homogeneous Yang--Baxter integrable model, and consider the following linear maps that reverse the shading 
on the vectors in $A_{n,+}$ and $A_{n,-}$:
\begin{align}
    \iota_{n,+}:A_{n,+} \to A_{n,-}, \qquad 
    \raisebox{-0.505cm}{\IgPosPre}\,\mapsto \raisebox{-0.345cm}{\IgNegPre}\; ; \qquad\quad
    \iota_{n,-}:A_{n,-} \to A_{n,+}, \qquad 
    \raisebox{-0.345cm}{\IgNegPre}\,\mapsto \raisebox{-0.505cm}{\IgPosPre}\; ,
\end{align}
here illustrated for $n=2$.
Following \cite{LMP17}, there exists an unshaded planar algebra $(A_n)_{n\in\mathbb{N}_0}$ corresponding to 
$(A_{n,\pm})_{n\in\mathbb{N}_0}$ if and only if the map $\iota_{n,\mp}\circ \iota_{n,\pm}$ acts as the identity 
on $A_{n,\pm}$ for all $n\in\mathbb{N}_0$. 
Now, consider a shading consistent with the transfer operator \eqref{Tnu2},
\begin{align}
    T_n(u)=\!\raisebox{-0.945cm}{\includegraphics{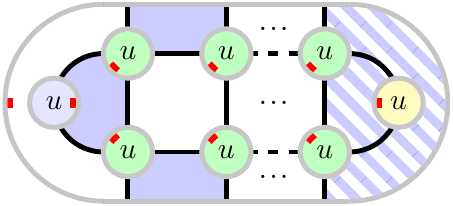}},
\end{align}
where 
\begin{align}
    R_+(u) :=  \raisebox{-0.505cm}{\RotRopGreenOpBBTLSTPos}\, \in A_{2,+}, \qquad\quad 
    R_-(u) := \raisebox{-0.345cm}{\RotRopGreenOpBBTLSTNeg}\in A_{2,-},
\end{align}
are shaded $R$-operators which, by homogeneity, satisfy 
\begin{align}\label{equ:iotaII}
    \iota_{2,\pm}[R_{\pm}(u)] = R_{\mp}(u), \qquad \forall\,u\in\Omega,
\end{align}
and the $K$-operators are shaded units, c.f.~\eqref{KK0}. As the planar algebra \textit{encodes} the integrability of the model, 
$\{R_{\pm}(u)\,|\,u\in\Omega\}$ together with the action of planar tangles, generates
the vector space $A_{2,\pm}$. Using \eqref{equ:iotaII}, it follows that $\iota_{2,\mp}\circ\iota_{2,\pm}$ acts as the 
identity on $A_{2,\pm}$, so the corresponding singly generated planar algebra $(A_{n,\pm})_{n\in\mathbb{N}_0}$
admits an unshaded description. 

For our homogeneous Yang--Baxter integrability purposes, it thus 
suffices to consider \textit{unshaded} planar algebras only. We stress that a shaded planar algebra not admitting 
an unshaded description could, in fact, encode the structure of an integrable model; however, the corresponding 
transfer operator would necessarily be \textit{inhomogenous}, see the Remark following (\ref{Tnu2}) and the 
comments made in Section \ref{Sec:Discussion}.

\subsection{Singly generated planar algebras}
\label{Sec:Baxt}

With reference to the notion of homogeneous Yang--Baxter integrability outlined above,
including the definition (\ref{Tnu}) of homogeneous transfer operators and non-specious Baxterisations, 
we have the following result.
\begin{prop}\label{prop:SGYBIntYBR}
    Singly generated planar algebras that encode homogeneous Yang--Baxter integrability are YBR.
\end{prop}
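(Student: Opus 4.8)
The plan is to reduce the statement, using the structure theory already in place, to the verification of a single Yang--Baxter relation for the triple $(s,s,s)$, and then to extract that relation from the Yang--Baxter equation supplied by the Baxterisation. By Section \ref{Sec:Unshaded} it suffices to work with an unshaded singly generated planar algebra, and by \eqref{KK0} we may take $K(u)=\overline{K}(u)=\mathds{1}_1$. By Proposition \ref{prop:UnSGeqPSG} the algebra is a quotient of some $\mathrm{PS}^{(\epsilon)}(\alpha,\delta)$, so $A_2$ carries the basis $B_2=\{\mathds{1}_2,e,s\}$ and we may write $R(u)=r_{\mathds{1}}(u)\mathds{1}_2+r_e(u)e+r_s(u)s$. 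Since the Temperley--Lieb subalgebra spanned by $\{\mathds{1}_2,e\}$ is closed under the action of all planar tangles, the encoding hypothesis --- that $\{R(u)\,|\,u\in\Omega\}$ together with planar tangles generates the \emph{full} space $A_2$ --- forces $r_s(u_0)\neq0$ for some $u_0\in\Omega$.

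Next I would reduce the YBR condition to a single triple. A YBR \eqref{ybrdec} is multilinear in $(x,y,z)$, so it suffices to establish one for every triple of basis vectors drawn from $B_2$. Every such triple involving at least one factor of $\mathds{1}_2$ or $e$ satisfies a YBR as a direct consequence of the defining relations \eqref{PSrel}: for instance $e_1s_2e_1=0$, $s_1e_2s_1=s_2e_1s_2$, $e_1e_2e_1=e_1=\mathds{1}_2\,e_1\,\mathds{1}_2$, and the companions of $s_is_{i\pm1}e_i$, all of which rewrite the forward zigzag $a_1b_2c_1$ in the reversed form $\sum a'_2b'_1c'_2$. The only triple not covered is $(s,s,s)$, whose YBR amounts to expressing $s_1s_2s_1\in\mathrm{PS}_3$ as a linear combination $\sum_{a,b,c\in B_2}C^{abc}\,a_2b_1c_2$ of reversed zigzags. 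This is precisely the relation absent from the generic PSG algebra, in which $s_1s_2s_1$ cannot be so rewritten and $\dim\mathrm{PS}_3=\infty$.

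I would obtain this missing relation from the integrability. A non-specious Baxterisation supplies, through Proposition \ref{Prop:YBE}, a solution of $\mathrm{YBE}_1$ in $A_3$; expanding both sides in the PSG basis and isolating the top-degree-in-$s$ contribution, the hexagonal structure of the equation places $s_1s_2s_1$ on one side and $s_2s_1s_2$ on the other with coefficients agreeing up to commutativity of scalars, yielding
\[
 c(u,v)\,\big(s_1s_2s_1-s_2s_1s_2\big)=L(u,v),
\]
where $c(u,v)$ is a product of the three $s$-components of the $2$-box operators appearing in the equation and $L(u,v)$ is a combination of slot-$(1,2,1)$ and slot-$(2,1,2)$ words containing at most two factors of $s$. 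Because rotation sends $s\mapsto\epsilon s$ with $\epsilon\neq0$ (see \eqref{equ:SQuadSG}) and $r_s(u_0)\neq0$, the factor $c(u,v)$ is nonzero for a suitable choice of spectral parameters, so dividing through gives $s_1s_2s_1=s_2s_1s_2+c^{-1}L$.

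Finally I would promote this identity to a YBR. The term $s_2s_1s_2$ is already a reversed zigzag with $a=b=c=s$; the reversed-zigzag words in $L$ lie in $\mathrm{span}\{a_2b_1c_2\}$ trivially, while the forward-zigzag words in $L$, each having at most two $s$-factors, reduce to that span by exactly the relations used in the second paragraph. Hence $s_1s_2s_1$ lies in the reversed-zigzag span, which is a YBR for $(s,s,s)$; together with the earlier cases, every basis triple --- and therefore every triple --- satisfies a YBR, so the algebra is YBR. I expect the main obstacle to be the passage to the displayed relation: one must verify that the top-$s$-degree coefficients on the two sides of $\mathrm{YBE}_1$ genuinely coincide (so that $c(u,v)\not\equiv0$ and the relation is not a tautology) and that the remainder $L(u,v)$ falls within the reversed-zigzag span. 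Both require careful but finite bookkeeping of the $A_3$-expansion via \eqref{PSrel}, and extra care is needed because in the generalised setting the middle operator of $\mathrm{YBE}_1$ need not be $R$ itself but an independent $Y$-operator whose $s$-component must also be controlled.
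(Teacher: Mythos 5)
Your overall strategy is the same as the paper's: reduce to a quotient of $\mathrm{PS}^{(\epsilon)}(\alpha,\delta)$, expand $\mathrm{YBE}_1$ in the basis $\{\mathds{1}_2,e,s\}$, observe that every basis triple other than $(s,s,s)$ satisfies a YBR by the relations \eqref{PSrel}, and try to extract the missing $(s,s,s)$ relation from the coefficient of $s_1s_2s_1-s_2s_1s_2$ in the expanded equation (which is indeed $r_s(u)r_s(v)y_s(u,v)$, as in the last line of \eqref{0proto}). The gap is in your step where you divide by $c(u,v)=r_s(u)r_s(v)y_s(u,v)$. The encoding hypothesis gives you $r_s\not\equiv 0$, but it says nothing about the $s$-component of the auxiliary $Y$-operator: $Y_1$ is an independent $2$-box, not a rotation of $R$, so the assertion that ``$c(u,v)$ is nonzero for a suitable choice of spectral parameters'' is unjustified. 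If $y_s\equiv 0$, your displayed relation degenerates to $0=L(u,v)$ and yields no expression for $s_1s_2s_1$ in terms of reversed zigzags; you flag this at the end as needing ``extra care,'' but it is not a bookkeeping issue --- it is the substantive case.

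The paper closes exactly this case. If $s_1s_2s_1-s_2s_1s_2$ is linearly independent of the remaining terms in \eqref{0proto} (i.e.\ the quotient is not already YBR), then the vanishing of its coefficient forces $y_s\equiv 0$. The vanishing of the $(s_2e_1-e_2s_1)$-term then reads $\big[r_e(u)r_s(v)-\epsilon\,r_s(u)r_{\mathds{1}}(v)\big]y_e(u,v)=0$; the spatial invertibility of the $Y$-operator (an explicit hypothesis in the paper's proof, which your argument never invokes) forces $y_e\not\equiv 0$, whence $r_e(u)/r_s(u)$ and $r_{\mathds{1}}(v)/r_s(v)$ are equal constants and $R(u)$ collapses to the form \eqref{specious}. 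This contradicts non-speciousness, so the independence assumption fails and the quotient is YBR. Without this branch your argument only proves the proposition for Baxterisations whose $Y$-operator happens to have a nonvanishing $s$-component; you would need to add the $y_s\equiv 0$ analysis (or an equivalent use of spatial invertibility and non-speciousness) to make the proof complete.
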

\begin{proof}
Since singly generated planar algebras that do not admit an unshaded description cannot encode a homogeneous 
Yang--Baxter integrable model within our algebraic integrability framework, 
Proposition \ref{prop:UnSGeqPSG} allows us to focus on the PSG planar algebra and its quotients.
It thus suffices to show that if one does not impose conditions turning the PSG planar algebra into a YBR 
planar algebra, then there exist no $R$- and $Y$-operators such that (i) the YBE
\begin{align}\label{equ:YBRSGsuff}
    \raisebox{-1cm}{\RotMarkRedYBEgoLeftSingleRRotRFramed}\; 
    = \raisebox{-1cm}{\RotMarkRedYBEgoRightSingleRRotRFramed}
\end{align}
is satisfied, and (ii) the $Y$-operator has a spacial inverse.

Relative to the $A_2$-basis $\{\mathds{1}_2, e, s\}$, we introduce
\begin{align}
    R(u) = r_{\mathds{1}}(u) \mathds{1}_2 + r_{e}(u) e + r_s(u) s,\qquad
    Y(u,v) = y_{\mathds{1}}(u,v) \mathds{1}_2 + y_{e}(u,v) e + y_s(u,v)s,
\end{align}
diagrammatically corresponding to
\begin{align}\label{equ:RYOpSGBax}
    \!\!\!\!\!\!\raisebox{-0.325cm}{\RotRopGreenOpBBTLSTAlt}\!\!\! 
    =  r_{\mathds{1}}(u)\!\!\!\raisebox{-0.325cm}{\RotRopTLxIdSTAlt}\!\!\! 
    + r_{e}(u)\!\!\!\raisebox{-0.325cm}{\RotRopTLxESTAlt}\!\!\!
    +r_s(u)\!\!\!\raisebox{-0.325cm}{\RotRopTLxBRSTAlt}\,, \qquad
    \raisebox{-0.345cm}{\RotRopRedOpBBOne}\!\!\! 
    =  y_{\mathds{1}}(u,v)\!\!\!\raisebox{-0.325cm}{\RotRopTLxIdSTAlt}\!\!\! 
    + y_{e}(u,v)\!\!\!\raisebox{-0.325cm}{\RotRopTLxESTAlt}\!\!\! 
    + y_s(u,v)\!\!\!\raisebox{-0.325cm}{\RotRopTLxBRSTAlt}\,,
\end{align}
where $r_{\mathds{1}},r_{e},r_s:\Omega\to\Cbb$ and $y_{\mathds{1}},y_{e},y_s:\Omega\times\Omega\to\Cbb$. 
\medskip

\noindent
\textbf{Remark.}
If $r_s$ is the zero function, then the attempted Baxterisation is, in fact, encoded by the Temperley--Lieb planar 
algebra, and not by a singly generated planar algebra, hence not of relevance here.
\medskip

From (\ref{equ:YBRSGsuff}), we get
\begin{align}
 0=&\raisebox{-1cm}{\RotMarkRedYBEgoRightSingleRRotRFramed}\; 
  - \raisebox{-1cm}{\RotMarkRedYBEgoLeftSingleRRotRFramed}
\nonumber
\end{align}
\vspace{-.3cm}
\begin{align}
 =&\,\Big\{\big[r_{\mathds{1}}(u) r_e(v)+r_e(u) r_{\mathds{1}}(v)+\delta\,r_e(u) r_e(v)
  -\tfrac{1}{\delta}r_s(u) r_s(v)\big]y_{\mathds{1}}(u,v)
\nonumber\\[.2cm]
    &\ -\big[r_{\mathds{1}}(u) r_{\mathds{1}}(v)-r_e(u) r_e(v)\big]y_e(u,v)
    -\tfrac{1}{\delta}\big[\epsilon\,r_e(u) r_s(v)
    +\tfrac{1}{\epsilon}r_s(u) r_e(v)\big]y_s(u,v)\Big\}(e_1-e_2)
\nonumber\\[.2cm]
 +&\,\Big\{\big[r_{\mathds{1}}(u) r_s(v)+r_s(u) r_{\mathds{1}}(v)+\alpha\,r_s(u) r_s(v)\big]y_{\mathds{1}}(u,v)
  -r_{\mathds{1}}(u) r_{\mathds{1}}(v) y_s(u,v)\Big\}
    (s_1-s_2)
\nonumber\\[.2cm]
 +&\,\Big\{\big[r_{\mathds{1}}(u) r_s(v)-\epsilon\,r_s(u)r_e(v)\big]y_e(u,v)-\big[r_{\mathds{1}}(u)r_e(v)
  +\alpha\,r_s(u)r_e(v)\big]y_s(u,v)\Big\}(s_1e_2 - e_1s_2)
\nonumber\\[.2cm]
 +&\,\Big\{\big[\tfrac{1}{\epsilon}r_e(u) r_s(v)-r_s(u) r_{\mathds{1}}(v)\big]y_e(u,v)
  +\big[r_e(u)r_{\mathds{1}}(v)+\alpha\,r_e(u)r_s(v)\big]y_s(u,v)\Big\}(s_2e_1-e_2s_1)
\nonumber\\[.2cm]
 +&\,\big\{r_s(u) r_s(v) y_s(u,v)\big\}(s_1s_2s_1-s_2s_1s_2),
\label{0proto}
\end{align}
where
\be
 e_1=\raisebox{-0.425cm}{\TLvecEI},\quad
 s_1=\raisebox{-0.425cm}{\TLvecSI},\quad
 e_2=\raisebox{-0.425cm}{\TLvecEII},\quad
 s_2=\raisebox{-0.425cm}{\TLvecSII}.
\ee
First, suppose $(s_1s_2s_1-s_2s_1s_2)$ is linearly independent of the other algebra elements in \eqref{0proto}.
Then, for the corresponding term to vanish, $r_s$ or $y_s$ must be zero, with the observation following 
(\ref{equ:RYOpSGBax}) subsequently implying that the function $y_s$ must be zero.
As $r_s\neq0$ and $y_s=0$, the vanishing of the $(s_2e_1-e_2s_1)$-term in (\ref{0proto}) implies that
\be
 \big[r_e(u)r_s(v)-\epsilon\,r_s(u)r_{\mathds{1}}(v)\big]y_e(u,v)=0.
\ee
Since $y_s=0$, the spacial invertibility of the $Y$-operator implies that $y_e\neq0$, so
\be
 \frac{r_e(u)}{r_s(u)}=\epsilon\left(\frac{r_{\mathds{1}}(v)}{r_s(v)}\right),
\ee
amounting to a $u,v$-independent constant. It follows that the $R$-operator is of the form in (\ref{specious}), so 
the Baxterisation is specious. 

Finally, if $(s_1s_2s_1-s_2s_1s_2)$ can be expressed as a linear combination of other terms in \eqref{0proto},
then we are in a quotient of the PSG planar algebra that is YBR. 
\end{proof}
In sections \ref{Sec:FC}--\ref{Sec:Liu}, we present concrete examples of YBR planar 
algebras encoding homogeneous Yang--Baxter integrability, thereby establishing the \textit{existence} of planar 
algebras satisfying the assumptions in Proposition \ref{prop:SGYBIntYBR}.
Indeed, we use Liu's classification in Theorem \ref{thm:LiuYBRClass} below to show that, for \textit{every}
singly generated YBR planar algebra, there exists a corresponding Yang--Baxter integrable model.
\begin{thm}[Liu \cite{Liu15}]\label{thm:LiuYBRClass}
A singly generated YBR planar algebra is isomorphic to a quotient of an FC, BMW, or Liu planar algebra. 
\end{thm}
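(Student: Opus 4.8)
The plan is to leverage the structural results already in place: by Proposition \ref{prop:UnSGeqPSG} a singly generated (YBR) planar algebra is a quotient of some $\mathrm{PS}^{(\epsilon)}(\alpha,\delta)$, so the task reduces to determining which parameter values $(\epsilon,\alpha,\delta)$ and which additional planar ideals are compatible both with the YBR condition (\ref{ybrdec}) and with subfactor positivity. First I would translate the YBR, applied to the generating triple $(s,s,s)$, into a single algebraic relation in $\mathrm{PS}_3$ of the exchange form
\be
 s_1s_2s_1 = \sum_{|w|\le 2} \lambda_w\, w,
\label{plan:exch}
\ee
summed over words $w$ of length at most two in $\{e_1,e_2,s_1,s_2\}$, with $\lambda_w\in\Cbb$. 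The quadratic relations $s_i^2=\mathds{1}-\tfrac{1}{\delta}e_i+\alpha s_i$ and $e_is_i=0$ already collapse any longer word, so (\ref{plan:exch}) is exactly the data promoting the generically infinite-dimensional $\mathrm{PS}_3$ to a finite-dimensional quotient. As noted after (\ref{1es}), the combination $s_1s_2s_1-s_2s_1s_2$ is generically independent of all shorter words, so (\ref{plan:exch}) is a genuine new relation and its coefficients $\lambda_w$ are the unknowns to be pinned down.

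Second I would impose internal consistency. The relation (\ref{plan:exch}) must respect the involution $\cdot^*$, the rotational covariance governed by $\epsilon$ (sphericality), and, most importantly, associativity: reducing the length-four words $s_1s_2s_1s_2$ and $s_2s_1s_2s_1$ by applying (\ref{plan:exch}) in the two available orders must yield identical elements of the quotient. This closure requirement produces an overdetermined polynomial system in the $\lambda_w$ together with $(\alpha,\epsilon,\delta)$, which I would solve by the usual case split on $\epsilon\in\{1,-1,\mathrm{i},-\mathrm{i}\}$, recalling that $\epsilon^2=-1$ forces $\alpha=0$. Each resulting branch should then be matched, by direct comparison of generator relations, to one of the three target presentations (Fuss--Catalan, Birman--Wenzl--Murakami, or Liu), up to quotient.

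Third, to convert this algebraic solution set into a classification of \emph{planar algebras}, I would invoke positivity. Involutivity forces $s^*=s$ in the relevant real-$\alpha$ range, while positive-definiteness of the trace form — equivalently $\Pb'_{\tr_n}(\wj_n^*\wj_n)=U_n(\tfrac{\delta}{2})>0$ from (\ref{Un}), together with Lemma \ref{Lem:ppp} applied to the idempotent decomposition of each $A_n$ — restricts $\delta$ to the admissible Temperley--Lieb values (\ref{TTL}) and discards the spurious algebraic branches, cutting a continuum of formal solutions down to three discrete families (allowing, in the positive semi-definite cases, the further quotients appearing in the theorem statement). For each surviving $(\epsilon,\alpha,\delta)$ one then checks that the candidate Gram matrices are positive (semi-)definite.

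The main obstacle is this last propagation step: solving the exchange relation and positivity at the level of $A_2$ and $A_3$ is finite bookkeeping, but guaranteeing that the relation closes consistently and stays positive-definite for \emph{all} $n$ — that the finite-dimensional quotients assemble into a genuine evaluable, spherical, positive-definite planar algebra rather than terminating or developing negative-norm states — is the substantive content. I would handle it not by unbounded computation at every level but by identifying each surviving branch with an already-constructed planar algebra: the known FC and BMW planar algebras, and an explicit realisation of the Liu algebra. Existence and positivity are then inherited from those models, and the classification is complete.
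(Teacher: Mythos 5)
The first thing to say is that the paper does not prove Theorem \ref{thm:LiuYBRClass} at all: it is imported from Liu \cite{Liu15} as an external classification result, and the present paper only \emph{uses} it (to reduce the existence half of Theorem \ref{thm:1} to exhibiting Baxterisations of the three named algebras). So there is no internal proof to compare your attempt against; the relevant comparison is with Liu's own argument, which your sketch does resemble in broad outline --- reduce to the generic singly generated relations, extract an exchange relation for the $3$-box from the YBR, and pin down its coefficients by consistency and positivity.

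Taken as a proof, however, your proposal has genuine gaps. First, the ansatz $s_1s_2s_1=\sum_{|w|\le 2}\lambda_w w$ misstates what the YBR gives: the right-hand side of (\ref{ybrdec}) applied to $(s,s,s)$ runs over \emph{all} basis triples $(a,b,c)$, so it contains the length-three words $s_2s_1s_2$, $s_2e_1s_2$, $e_2s_1e_2$, and so on; one must first argue that the $s_2s_1s_2$ coefficient can be normalised and the remaining length-three words reduced before an exchange relation of your form is available. Second, and more seriously, the step you yourself flag as ``the substantive content'' --- showing that the finitely many branches surviving the $A_2,A_3$ analysis close into evaluable, spherical, positive-definite planar algebras for all $n$, and that no others arise --- is the entire body of Liu's proof and is not supplied. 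Identifying each branch with an already-constructed planar algebra establishes \emph{existence}, but not \emph{exhaustiveness} at higher $n$: a priori a YBR planar algebra may impose further relations in $A_n$ for $n\ge 4$, which is exactly why the theorem speaks of \emph{quotients} of FC, BMW, and Liu, and your argument never engages with why those are the only possibilities. Third, Proposition \ref{prop:UnSGeqPSG} applies to \emph{unshaded} singly generated planar algebras, whereas a YBR planar algebra is by definition a shaded subfactor planar algebra; the paper's justification for passing to the unshaded setting (Section \ref{Sec:Unshaded}) is specific to homogeneous integrability and is not available for the bare classification, so genuinely shaded examples (such as two-colour Fuss--Catalan with unequal loop weights) fall outside your reduction from the outset.
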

\noindent
The planar algebras listed in Theorem \ref{thm:LiuYBRClass} are recalled in 
Section \ref{Sec:FC}, Section \ref{Sec:BMW}, and Section \ref{Sec:Liu}, respectively.
In Section \ref{Sec:pol}, we show that these models are not only Yang--Baxter integrable but also
\textit{polynomially integrable}. The paragraph including Theorem \ref{thm:1} and Theorem \ref{cor:1}
in Section \ref{Sec:Intro} summarises these key findings.
\medskip

\noindent
\textbf{Remark.}
If the FC, BMW, or Liu planar algebra in Theorem \ref{thm:LiuYBRClass} is positive semi-definite, then its quotient 
by the kernel of the trace norm (that is, the quotient by the ideal generated by all nonzero $v$ for which 
$\Pb_{\tr_n}(v^*v)=0$ for some $n$) is isomorphic to the corresponding YBR planar algebra. 
However, to keep $\dim A_2(S,C)=3$ in (\ref{AAA}),
we will only apply this quotient operation for $n>2$, c.f.~the Remark following (\ref{Un}).

\section{Fuss--Catalan algebra}
\label{Sec:FC}

\subsection{Planar algebra}
\label{subsec:FCDaP}

Let $F_{n,\pm}^{(k)}$ denote the vector space spanned by disks with $2kn$ nodes such that
(i) each node is labelled by one of the $k$ colours $c_1,\ldots,c_k$, 
(ii) clockwise from the marked interval, nodes are assigned colours according to
\begin{align}
\underbrace{(c_1,\ldots,c_k),(c_k,\ldots,c_1),(c_1,\ldots,c_k),\ldots,(c_k,\ldots,c_1)}_{\#(\ldots)\,
 =\,2n} \qquad\ \text{for disks in $F_{n,+}^{(k)}$},
\\[.1cm]
\underbrace{(c_k,\ldots,c_1),(c_1,\ldots,c_k),(c_k,\ldots,c_1),\ldots,(c_1,\ldots,c_k)}_{\#(\ldots)\,
 =\,2n} \qquad\ \text{for disks in $F_{n,-}^{(k)}$},
\end{align}
and (iii) every node is connected to another node with the same colour, using non-intersecting strings 
defined up to ambient isotopy. Examples of such \textit{Fuss--Catalan disks} are
\begin{align}
    \raisebox{-1cm}{\FCExVecI},\raisebox{-1cm}{\FCExVecII} \!\in F_{2,+}^{(3)},\qquad
    \raisebox{-1cm}{\FCExVecIIINeg},\raisebox{-1cm}{\FCExVecIVNeg} \!\in F_{3,-}^{(2)}.
\end{align}
The vector-space dimensions are given by Fuss--Catalan numbers, as
\begin{align}
 \dim F_{n,\pm}^{(k)} =\frac{1}{kn+1}\binom{kn+n}{n}.
\end{align}

The \textit{$k$-coloured Fuss--Catalan planar algebra} $\mathrm{FC}^{(k)}(\gamma_1,\ldots,\gamma_k)$ is the 
vector-space collection $(F_{n,\pm}^{(k)})_{n\in\Nbb_0}$ together with the following action of shaded planar 
tangles \cite{JonesNotes}: 
(i) for each string within a shaded planar tangle, draw $k-1$ parallel strings in the adjacent unshaded region 
and assign each a label $c_k\ldots,c_1$ starting from the original string 
(for $k>1$, the tangle shading is thus encoded in the string labels and can thereafter be omitted), 
(ii) if a loop is formed with the colour $c_l$, then it is removed and replaced by the scalar weight $\gamma_l$, 
and (iii) the output vector is given by the output disk with the given colour labels and ensuing
string connections. To illustrate, we have
\begin{align}
    T =  \raisebox{-1.4275cm}{\FCExTangleAltShaded}\,,\qquad
    \Pb_T\big(\,\raisebox{-0.3cm}{\rotatebox{90}{\RotRopFCxESTBlackAltColoured}}\,,\!
    \raisebox{-0.175cm}{\RotRopFCxXSTBlackAltColoured}\,\big) 
    \,= \raisebox{-1.4275cm}{\FCExTangleInputAltColoured} 
    \,=\, \gamma_1\raisebox{-0.43cm}{\FCExTangleInputIdColoured}\;,
\end{align}
where $c_1$ and $c_2$ correspond to  the colours cyan and black, respectively. 
If $\gamma_1=\cdots=\gamma_m$, then the colours of the strings are immaterial and the 
corresponding planar algebra admits an unshaded description. 

As we are concerned with unshaded singly generated planar algebras, we denote by $\mathrm{FC}(\gamma)$
the unshaded planar algebra corresponding to $\mathrm{FC}^{(2)}(\gamma,\gamma)$, and refer to it simply as 
\textit{Fuss--Catalan} (FC). From \cite{BJ97}, we know that $\mathrm{FC}(\gamma)$ is spherical and involutive, 
with the involution $\cdot^*$ defined as the conjugate linear map that acts by reflecting every disk about a line 
perpendicular to its marked boundary interval.

Let $\mathcal{F}$ denote the set of all $\gamma$ such that $\mathrm{FC}(\gamma)$ is positive semi-definite. 
For each $\gamma\in\mathcal{F}$, the \textit{FC subfactor planar algebra} $\mathsf{FC}(\gamma)$ is then defined 
as the quotient of $\mathrm{FC}(\gamma)$ by the kernel of the trace norm. Details of the set $\mathcal{F}$ are 
presented in \cite{BJ97}, including $\big\{2\cos\tfrac{\pi}{m+2}\,|\, m\in\mathbb{N}\big\} \subset \mathcal{F}$.
For $\gamma$ in that discrete subset, $\mathrm{FC}(\gamma)$ is positive semi-definite, while for $\gamma>2$, 
$\mathrm{FC}(\gamma)$ is positive-definite, in which case $\mathrm{FC}(\gamma) \cong \mathsf{FC}(\gamma)$.

\subsection{Presentation}

For each $n\in\Nbb$, the \textit{FC algebra} $\mathrm{FC}_{n}(\gamma)$ is now defined by endowing the 
(unshaded) vector space $F_{n}^{(2)}$ with the multiplication induced by the unshaded planar tangle $M_n$ 
following from \eqref{equ:MultAndCoMultPlanar}. We note that the FC algebra is both unital (with unit denoted by 
$\mathds{1}$) and associative, and that it is a $^*$-algebra with involution inherited from the FC planar algebra. 
Using a diagrammatic representation similar to the one in (\ref{1es}),
$\mathrm{FC}_{n}(\gamma)$ is generated by the following algebra elements:
\begin{align}
    \mathds{1}\leftrightarrow\raisebox{-0.9cm}{\FCRecIdColoured},\qquad
    P_i\leftrightarrow\raisebox{-0.91cm}{\FCRecXOddColoured},\qquad
    E_i\leftrightarrow\raisebox{-0.91cm}{\FCRecEOddColoured},
\end{align}
where each label below a diagram labels a \textit{pair} of string endpoints.

For $\gamma\neq0$, the FC algebra admits \cite{BJ97} a presentation,
\be
 \mathrm{FC}_{n}(\gamma)\cong\langle E_i,P_i\,|\,i=1,\ldots,n-1\rangle,
\ee
with relations
\be
\begin{array}{rllll}
  & E_i^2=\gamma^2 E_i,\quad\
  & P_iE_i= E_iP_i = \gamma E_i,\quad\
  & P_i^2=\gamma P_i,
  &
\\[.15cm]
  & E_iE_{i\pm1}E_i= E_i,\quad\
  & P_iE_{i\pm1}P_i=P_iP_{i\pm1}=P_{i\pm1}P_i,\quad\
  & E_iP_{i\pm1}E_i=\gamma E_i,
  &
\\[.15cm]
  & E_iE_j=E_jE_i,\quad\
  & E_iP_j=P_jE_i,\quad\
  & P_iP_j=P_jP_i,\qquad\
  & |i-j|>1.
\end{array}
\label{FCrel}
\ee
Following from (\ref{FCrel}), we also have the relations
\be
 P_iP_{i\pm1}E_i=\gamma P_{i\pm1}E_i,\qquad
 E_iP_{i\pm1}P_i=\gamma E_iP_{i\pm1},\qquad
 P_iP_{i\pm1}P_i=\gamma P_iP_{i\pm1},
\label{FCrel1}
\ee
and
\be
 E_iE_{i\pm1}P_i=E_iP_{i\pm1},\qquad
 P_iE_{i\pm1}E_i=P_{i\pm1}E_i.
\label{FCrel2}
\ee
For $\gamma=0$, the relations (\ref{FCrel}), (\ref{FCrel1}), and (\ref{FCrel2}) still hold, but the 
relations in (\ref{FCrel2}) do not all follow from the relations in (\ref{FCrel}), and should be imposed separately.
\medskip

\noindent
\textbf{Remark.}
The Temperley--Lieb subalgebra $\langle E_1,\ldots,E_{n-1}\rangle\subset\mathrm{FC}_{n}(\gamma)$ 
has loop fugacity $\delta=\gamma^2$.
\medskip

We let $\mathcal{F}_n$ denote the set of all $\gamma$ such that the trace form \eqref{form} is positive 
semi-definite on $F_{n}^{(2)}$, noting that $\mathcal{F}\subseteq\mathcal{F}_n$ for all $n\in\mathbb{N}_0$.  
For each $\gamma\in\mathcal{F}_n$, $\mathsf{FC}_{n}(\gamma)$ is then defined 
as the quotient of $\mathrm{FC}_{n}(\gamma)$ by the kernel of the trace norm.

\subsection{Quotient description}
\label{subsec:FCdiag}

\begin{prop}\label{Prop:FCq}
For $\gamma^2 > 1$ and each $\mu\in\{-1,1\}$, we have
\be
 \mathrm{FC}_n(\gamma)\cong\mathrm{PS}_n^{(1)}(\mu(\gamma-\gamma^{-1}),\gamma^2)
  \big/\big\langle\iota_{i,j}\,|\,|i-j|=1;\,i,j\in\{1,\ldots,n-1\}\big\rangle,
\ee
where
\be
 \iota_{i,j}=\sh_ie_j\sh_i-\sh_i\sh_j+\sh_je_i+e_j\sh_i+e_je_i-\sh_i-\sh_j-\mathds{1},
\label{iotash}
\ee
with $\sh_i=\mu\gamma s_i$.
\end{prop}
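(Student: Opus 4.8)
The plan is to exhibit mutually inverse algebra homomorphisms between $\mathrm{FC}_n(\gamma)$ and the quotient $\mathrm{PS}_n^{(1)}(\mu(\gamma-\gamma^{-1}),\gamma^2)/\langle\iota_{i,j}\rangle$, thereby avoiding any dimension count. To fix the correct parameters I would first work at the two-strand level, where $\dim F_2^{(2)}=3=\dim\mathrm{PS}_2$: the Temperley--Lieb subalgebra $\langle E_i\rangle$ carries loop fugacity $\gamma^2$, forcing $\delta=\gamma^2$ and $e_i\mapsto E_i$, while expressing the traceless, suitably normalised generator $s$ of Section \ref{Sec:Gen} in the basis $\{\mathds{1},E_1,P_1\}$ (using $\Pb_{\tr_2}(s)=0$ and the idempotent decomposition) yields $\epsilon=1$, $\alpha=\mu(\gamma-\gamma^{-1})$, and the identification $\sh_i\leftrightarrow(\gamma+\gamma^{-1})P_i-E_i-\mathds{1}$, the two signs of $\mu$ reflecting the freedom $s\mapsto-s$. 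This matching is consistent with Proposition \ref{prop:UnSGeqPSG}; all subsequent steps verify the choices directly and hold for every $\gamma^2>1$, independently of positivity.

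Next I would define $\psi\colon\mathrm{FC}_n(\gamma)\to\mathrm{PS}_n^{(1)}(\mu(\gamma-\gamma^{-1}),\gamma^2)/\langle\iota_{i,j}\rangle$ on generators by $E_i\mapsto e_i$ and $P_i\mapsto\tilde P_i:=\tfrac{1}{\gamma+\gamma^{-1}}(\sh_i+e_i+\mathds{1})$, with $\sh_i=\mu\gamma s_i$, and verify that the images satisfy the presentation \eqref{FCrel}. The organising principle I expect here is that \emph{all but one} of the FC relations already hold in $\mathrm{PS}_n$ before passing to the quotient: a direct computation from \eqref{PSrel} (with $\epsilon=1$ throughout, using $s_ie_i=0$, $e_ie_{i\pm1}e_i=e_i$, $e_is_{i\pm1}e_i=0$) gives $\tilde P_i^2=\gamma\tilde P_i$, $\tilde P_ie_i=e_i\tilde P_i=\gamma e_i$, $e_i\tilde P_{i\pm1}e_i=\gamma e_i$ and the relations \eqref{FCrel2}, while the $|i-j|>1$ commutations are inherited from PSG. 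The sole relation requiring the quotient is $P_iP_{i\pm1}=P_iE_{i\pm1}P_i=P_{i\pm1}P_i$, and the key computation is the identity $\tilde P_i\tilde P_j-\tilde P_ie_j\tilde P_i=-\tfrac{1}{(\gamma+\gamma^{-1})^2}\iota_{i,j}$ in $\mathrm{PS}_n$ for $|i-j|=1$, which shows that $\langle\iota_{i,j}\rangle$ is \emph{exactly} the ideal enforcing $\tilde P_i\tilde P_j=\tilde P_ie_j\tilde P_i$. The remaining symmetry $\tilde P_i\tilde P_j=\tilde P_j\tilde P_i$ then comes for free from the involution: since $\alpha\in\Rbb$ we have $\sh_i^*=\sh_i$, hence $\tilde P_i^*=\tilde P_i$, and conjugating $\tilde P_i\tilde P_j=\tilde P_ie_j\tilde P_i$ gives $\tilde P_j\tilde P_i=\tilde P_ie_j\tilde P_i$.

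Finally I would construct the inverse $\bar\phi$, induced by $\phi\colon\mathrm{PS}_n\to\mathrm{FC}_n(\gamma)$ with $e_i\mapsto E_i$ and $\sh_i\mapsto(\gamma+\gamma^{-1})P_i-E_i-\mathds{1}$, checking that these images satisfy \eqref{PSrel} — this uses \eqref{FCrel} and \eqref{FCrel2}, for instance $E_iE_{i\pm1}\phi(\sh_i)=E_i\phi(\sh_{i\pm1})$ follows from $E_iE_{i\pm1}P_i=E_iP_{i\pm1}$ — and that $\phi(\iota_{i,j})=0$, which is the image under $\phi$ of the central identity above (noting $\phi(\tilde P_i)=P_i$, so $\phi(\iota_{i,j})=-(\gamma+\gamma^{-1})^2(P_iP_j-P_iE_jP_i)=0$ by \eqref{FCrel}). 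Since $\phi$ annihilates $\langle\iota_{i,j}\rangle$ it descends to $\bar\phi$, and comparing the two maps on generators shows $\bar\phi\circ\psi$ and $\psi\circ\bar\phi$ are the respective identities, so $\psi$ is the desired isomorphism. The main obstacle is purely computational: organising the verification that the relations \eqref{FCrel} hold for $(e_i,\tilde P_i)$ in $\mathrm{PS}_n$ and establishing the central identity relating $\tilde P_i\tilde P_j-\tilde P_ie_j\tilde P_i$ to $\iota_{i,j}$, for which one must reduce the mixed cross terms using $s_ie_{i\pm1}e_i=s_{i\pm1}e_i$ and $e_ie_{i\pm1}s_i=e_is_{i\pm1}$ from \eqref{PSrel}.
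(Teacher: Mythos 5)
Your proposal is correct and takes essentially the same route as the paper's proof, which likewise sets $E_i=e_i$, $P_i=\frac{1}{\gamma+\gamma^{-1}}(\mathds{1}+e_i+\sh_i)$ and verifies that the relations on each side (together with the vanishing of $\iota_{i,j}$) imply those on the other; your identity $\tilde P_i\tilde P_j-\tilde P_ie_j\tilde P_i=-\tfrac{1}{(\gamma+\gamma^{-1})^2}\,\iota_{i,j}$ is precisely the computation the paper leaves implicit. The one detail to make explicit is that your involution argument for $\tilde P_i\tilde P_j=\tilde P_j\tilde P_i$ needs the ideal $\langle\iota_{i,j}\rangle$ to be $*$-closed; this holds since $\iota_{i,j}^*=\iota_{j,i}$ by the PSG identity $s_ie_{i+1}s_i=s_{i+1}e_is_{i+1}$, which in fact yields the commutativity directly without invoking $\cdot^*$.
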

\begin{proof}
The proposed algebra isomorphism sets
\be
 E_i=e_i,\qquad 
 P_i=\frac{1}{\gamma+\gamma^{-1}}\big(\mathds{1}+e_i+\sh_i\big),
\ee
or equivalently,
\be
 e_i=E_i,\qquad \sh_i=-\mathds{1}-E_i+(\gamma+\gamma^{-1})P_i.
\ee
With this, one verifies that the relations (\ref{FCrel}) imply the relations (\ref{PSrel}) and the vanishing of 
(\ref{iotash}), and that the relations (\ref{PSrel}) together with the vanishing of (\ref{iotash}) imply the relations 
(\ref{FCrel}).
\end{proof}
\noindent
\textbf{Remark.}
By renormalising the $P_i$ generators, introducing $\Ph_i:=P_i/\gamma$, the relations 
(\ref{FCrel})--(\ref{FCrel2}) only depend on the loop weight through $\delta=\gamma^2$, as we then have
\be
\begin{array}{rllll}
  & E_i^2=\delta E_i,\quad\ 
  & \Ph_iE_i= E_i\Ph_i =E_i,\quad\
  & \Ph_i^2=\Ph_i,
  &
\\[.15cm]
  & E_iE_{i\pm1}E_i= E_i,\quad\
  & \Ph_iE_{i\pm1}\Ph_i=\Ph_i\Ph_{i\pm1}=\Ph_{i\pm1}\Ph_i,\quad\
  & E_i\Ph_{i\pm1}E_i=E_i,\quad\
  &
\\[.15cm]
  & E_iE_j=E_jE_i,\quad\
  & E_i\Ph_j=\Ph_jE_i,\quad\
  & \Ph_i\Ph_j=\Ph_j\Ph_i,\qquad\
  & |i-j|>1,
\end{array}
\ee
and
\be
 \Ph_i\Ph_{i\pm1}E_i=\Ph_{i\pm1}E_i,\qquad
 E_i\Ph_{i\pm1}\Ph_i=E_i\Ph_{i\pm1},\qquad
 \Ph_i\Ph_{i\pm1}\Ph_i=\Ph_i\Ph_{i\pm1},
\ee
\be
 E_iE_{i\pm1}\Ph_i=E_i\Ph_{i\pm1},\qquad
 \Ph_iE_{i\pm1}E_i=\Ph_{i\pm1}E_i.
\ee

\subsection{Baxterisation}

Relative to the canonical $F_{2}^{(2)}$-basis $\{\mathds{1}_2,E,P\}$, 
we introduce the parameterised $R$-operator as
\begin{align}
\label{RKFC}
 R(u) = r_{\mathds{1}}(u)\mathds{1}_2  + r_{E}(u)E + r_P(u)P,\qquad
    \raisebox{-0.3275cm}{\RotRopGreenOpBBFCSTBlack}\,
    =r_{\mathds{1}}(u)\,\raisebox{-0.325cm}{\RotRopFCxIdSTBlack}\,
    + r_{E}(u)\,\raisebox{-0.435cm}{\rotatebox{90}{\RotRopFCxESTBlack}}\,
    + r_{P}(u)\raisebox{-0.325cm}{\RotRopFCxXSTBlack}\,,
\end{align}
with $r_{\mathds{1}},r_E,r_P:\Omega\to\Cbb$. We note that
\be
 \Pb_{r_{2,\pm1}}[R(u)]= r_{E}(u)\mathds{1}_2  +r_{\mathds{1}}(u)E + r_P(u)P,\qquad
 \Pb_{r_{2,2}}[R(u)]=R(u).
\ee

\noindent
\textbf{Remark.}
Although $\langle E_1,\ldots,E_{n-1}\rangle$ and $\langle P_1,\ldots,P_{n-1}\rangle$ are subalgebras
of $\mathrm{FC}_{n}(\gamma)$, the $P$-generators do not form a \textit{planar} subalgebra of 
$\mathrm{FC}(\gamma)$. It follows from the Remark after \eqref{equ:RYOpSGBax} that only $r_P$ is required 
to be nonzero when exploring homogeneous integrability encoded by $\mathrm{FC}_{n}(\gamma)$.
\medskip

It is known \cite{DiF98} that $\mathrm{FC}_{n}(\gamma)$
admits a Baxterisation. In the notation (\ref{RKFC}), $r_{\mathds{1}}$ is nonzero, so $R$ may be normalised 
such that $r_{\mathds{1}}(u)=1$. Using $\Ph=P/\gamma$ and
\be
 f:x\mapsto\frac{x(x-1)}{\delta-1-x}, 
\ee
the homogeneous Baxterisation then reads
\be\label{equ:FCRopParam}
  R(u)=\mathds{1}_2+f(u) E+(u-1)\Ph,
\ee
with
\be
 Y\!_1(u,v) = R(uv), \qquad 
 \overline{Y\!}_1(u,v)=f\big(\tfrac{uv}{\delta-1}\big)f\big(\tfrac{\delta-1}{uv}\big) R\big(\tfrac{(\delta-1)^2}{uv}\big), \label{Y1FC}
\ee
and
\be
  Y\!_2(u,v)=Y\!_3(u,v)=\Pb_{r_{2,\pm1}}\big[R\big(\tfrac{u}{v}\big)\big],\qquad
  \overline{Y\!}_2(u,v)=\overline{Y\!}_3(u,v)=\Pb_{r_{2,\pm1}}\big[R\big(\tfrac{v}{u}\big)\big].
\ee
Using the crossing symmetry
\begin{align}\label{equ:FCCrossSym}
    \Pb_{r_{2,\pm1}}[R(u)]  = f(u)R\big(\tfrac{\delta-1}{u}\big), 
\end{align}
the $Y$-operators can be expressed explicitly in terms of $R$-operators, and the conditions \eqref{Invs} 
and \eqref{YBE} reduce to the following single inversion identity and single (\textit{standard}) YBE:
\begin{align}
  \raisebox{-1.055cm}{\RotMarkRedInvrFilteredregwOrSpecFCFramed}\;
  =\raisebox{-1.055cm}{\MarkRedIdFilteredInvFCFramed}\ , \qquad\quad 
    \raisebox{-1cm}{\RotMarkRedYBEgoLeftSingleRregOrFCFramed}\ \,
    =\raisebox{-1cm}{\RotMarkRedYBEgoRightSingleRregOrFCFramed}\ .
\end{align}
Relative to the involution $\cdot^*$ on the FC algebra, the $R$-operator \eqref{equ:FCRopParam} is self-adjoint 
for all $u\in\mathbb{R}\setminus\{\delta - 1\}$.
\medskip

\noindent
\textbf{Remark.}
We have verified that, up to a normalising factor, the generalised Yang--Baxter framework of 
Proposition \ref{Prop:YBE} does not admit any other non-specious solution of the form (\ref{RKFC}), with $r_P$ 
nonzero, than the one presented above.

\section{Birman--Wenzl--Murakami algebra}
\label{Sec:BMW}

\subsection{Planar algebra}
\label{subsec:BMWDaP}

Let $W_n$ denote the vector space spanned by disks with $2n$ nodes such that, within each disk,
(i) the nodes are connected pairwise by strings, defined up to regular isotopy, 
(ii) strings may intersect but not self-intersect, 
(iii) two strings cannot intersect one another more than once, 
and (iv) strings cannot form loops. 
To illustrate, we present the following examples and non-examples:
\begin{align}\label{equ:BMWExAndNEx}
    \raisebox{-0.45cm}{\BMWExVecAcI},
    \raisebox{-0.45cm}{\BMWExVecAcII},
    \raisebox{-0.45cm}{\BMWExVecAcIII}\qquad\mathrm{and}\qquad
    \raisebox{-0.45cm}{\BMWExVecI},
    \raisebox{-0.45cm}{\BMWExVecII},
    \raisebox{-0.45cm}{\BMWExVecIII}.
\end{align}
The dimension of $W_n$ is given by
\be
 \dim W_n=(2n-1)!!.
\ee

For each pair of scalars $\tau\neq0$ and $q\notin\{-1,0,1\}$, the \textit{Birman--Wenzl--Murakami} (BMW) 
\textit{planar algebra} $\mathrm{BMW}(\tau,q)$ is the collection of vector spaces $(W_n)_{n\in\Nbb_0}$ together 
with the natural diagrammatic action of planar tangles, defined up to regular isotopy and 
subject to the relations
\be
 \raisebox{-0.285cm}{\BMWloopAlt} 
    =\delta\raisebox{-0.285cm}{\BMWemptyAlt},\qquad 
    \raisebox{-0.285cm}{\BMWtwistIIAlt} = \tau\raisebox{-0.285cm}{\BMWidAlt},\qquad 
    \raisebox{-0.285cm}{\BMWtwistIAlt} = \tau^{-1}\raisebox{-0.285cm}{\BMWidAlt}, 
\label{BMWv}
\ee
\be
 \raisebox{-0.285cm}{\BMWbraidAlt} - \raisebox{-0.285cm}{\BMWbraidInvAlt} 
 = Q\left[\!\raisebox{-0.285cm}{\BMWidIIAlt} - \raisebox{-0.285cm}{\BMWeProjAlt}\!\right],
\ee
where
\be
 \delta = 1+\frac{\tau-\tau^{-1}}{Q},\qquad Q=q-q^{-1}.
\label{BMWdelta}
\ee
\noindent
\textbf{Remark.}
Self-intersecting or loop-forming strings may arise as the result of a planar tangle acting on
vectors in $(W_n)_{n\in\Nbb_0}$, hence the relevance of relations like (\ref{BMWv}).
\medskip

The planar algebra $\mathrm{BMW}(\tau,q)$ is spherical and, for $|\tau|=|q|=1$ or 
$\tau,q\in\mathbb{R}$, involutive \cite{BJL17, Wenzl90}. In these cases, 
the involution $\cdot^*$ is defined as the conjugate linear map that acts by `reflecting' respectively 
`flipping' every disk about a line perpendicular to its marked boundary interval, as indicated by
\begin{align}
    \left(\!\!\raisebox{-0.285cm}{\BMWidIIAlt}\!\!\right)^* = \raisebox{-0.285cm}{\BMWidIIAlt}, \qquad\quad 
    \left(\!\!\raisebox{-0.285cm}{\BMWeProjAlt}\!\!\right)^* = \raisebox{-0.285cm}{\BMWeProjAlt}, \qquad\quad 
    \left(\!\!\raisebox{-0.285cm}{\BMWbraidAlt}\!\!\right)^* = 
    \begin{cases} \raisebox{-0.285cm}{\BMWbraidInvAlt},\ \ & |\tau|=|q|=1, \\[.3cm]
       \raisebox{-0.285cm}{\BMWbraidAlt},\ \ & \tau,q\in\Rbb,
    \end{cases}
\end{align}
recalling that $q\neq\pm1$.

We let $\mathcal{B}$ denote the set of all $(\tau,q)$ such that $\mathrm{BMW}(\tau,q)$ is 
positive semi-definite. For each $(\tau,q)\in\mathcal{B}$, the \textit{BMW subfactor planar algebra} 
$\mathsf{BMW}(\tau,q)$ is then defined as the quotient of $\mathrm{BMW}(\tau,q)$ by the kernel of the trace 
norm. Details of the set $\mathcal{B}$ are presented in \cite{BJL17}, including the observation that $\mathcal{B}\neq\emptyset$.

\subsection{Presentation}
\label{Sec:BMWpresentation}

For each $n\in\Nbb$, the \textit{BMW algebra} $\mathrm{BMW}_{n}(\tau,q)$ is defined by 
endowing the vector space $W_n$ with the multiplication induced by the unshaded planar tangle $M_n$ 
following from \eqref{equ:MultAndCoMultPlanar}. We note that the BMW algebra is both unital 
(with unit denoted by $\mathds{1}$) and associative, and that, for $|\tau| = |q| = 1$ or $\tau,q\in\mathbb{R}$, 
it is a $^*$-algebra with involution inherited from the BMW planar algebra. 
As is well-known \cite{Kauffman87,MW89,Kauffman90}, 
the generators of $\mathrm{BMW}_{n}(\tau,q)$ can be represented diagrammatically as
\begin{align}\label{BMWdiag}
    \mathds{1}\leftrightarrow \raisebox{-0.885cm}{\BMWRecIdOdd},\quad 
    g_i\leftrightarrow\raisebox{-0.885cm}{\BMWRecBOdd},\quad
    g_i^{-1}\leftrightarrow\raisebox{-0.885cm}{\BMWRecBInvOdd},\quad
    e_i\leftrightarrow\raisebox{-0.885cm}{\BMWRecEOdd}.
\end{align}

The algebra $\mathrm{BMW}_{n}(\tau,q)$ admits \cite{Wenzl90} a presentation,
\be
 \mathrm{BMW}_{n}(\tau,q)\cong\langle e_i,g_i,g_i^{-1}\,|\,i=1,\ldots,n-1\rangle,
\ee
with relations
\be
\begin{array}{rllll}
  & g_ig_{i\pm1}g_i= g_{i\pm1}g_ig_{i\pm1},\qquad
  & g_i-g_i^{-1}=Q(\mathds{1}-e_i),
  &
\\[.15cm]
  & g_{i}e_{i\pm1}g_{i} = g_{i\pm1}^{-1}e_{i}g_{i\pm1}^{-1},\qquad
  & g_ie_i = e_ig_i = \tau^{-1}e_i,
  &
\\[.15cm]
  & e_ig_{i\pm1}g_i= g_{i\pm1}g_ie_{i\pm1}= e_{i}e_{i\pm1},\qquad
  & g_ig_j=g_jg_i,\qquad
  & |i-j|>1.
\end{array}
\label{BMWrel}
\ee
It follows from these relations that
\be
 e_i^2=\delta e_i,\qquad
 e_ie_{i\pm1}e_i= e_i,\qquad
 g_ie_j=e_jg_i,\qquad
 e_ie_j=e_je_i,\qquad
 |i-j|>1,
\ee
with $\delta$ as in (\ref{BMWdelta}), and that
\be
 g_{i}e_{i\pm 1}e_{i} = g_{i\pm 1}^{-1}e_{i},\qquad
    e_{i}e_{i\pm 1}g_{i} = e_{i}g_{i\pm 1}^{-1},\qquad
    e_{i}g_{i\pm 1}e_{i} = \tau e_{i}.
\ee
We note that it suffices to list one of the two relations 
$g_ie_i=\tau^{-1}e_i$ and $e_ig_i=\tau^{-1}e_i$ in (\ref{BMWrel}).

We let $\mathcal{B}_n$ denote the set of all $(\tau,q)$ such that the trace form \eqref{form} is positive 
semi-definite on $W_{n}$, noting that $\mathcal{B}\subseteq\mathcal{B}_n$ for all $n\in\mathbb{N}_0$. 
For each $(\tau,q)\in\mathcal{B}_n$, $\mathsf{BMW}_n(\tau,q)$ is then defined as the quotient 
of $\mathrm{BMW}_n(\tau,q)$ by the kernel of the trace norm.

\subsection{Quotient description}
\label{Sec:BMWquotient}

Let
\be
 \Gamma:=(\tau^2+Q\tau-1)(\tau^2+Q(Q^2+3)\tau-1)=(\tau+q)(\tau-q^{-1})(\tau+q^3)(\tau-q^{-3}).
\ee
\begin{prop}\label{Prop:BMWq}
For $\delta > 1$, with $\delta$ parameterised as in (\ref{BMWdelta}), and for each $\mu\in\{-1,1\}$, we have
\be
 \mathrm{BMW}_n(\tau,q)\cong\mathrm{PS}_n^{(1)}
  \Big(\frac{\mu Q(\tau^2+1)}{\sqrt{\Gamma}},1+\frac{\tau-\tau^{-1}}{Q}\Big)
  \big/\big\langle\iota_1.\ldots,\iota_{n-2}\big\rangle,
\ee
where
\begin{align}
 \iota_i=\sh_i\sh_{i+1}\sh_i-\sh_{i+1}\sh_i\sh_{i+1}
 +&\,Q^2\tau\big\{(\tau^2+1)(\tau^2+Q(Q^2+3)\tau-1)[e_i-e_{i+1}]
 \nonumber\\[.15cm]
  &-(Q+\tau)(1-Q\tau)[\sh_i-\sh_{i+1}+e_i\sh_{i+1}-e_{i+1}\sh_i+\sh_{i+1}e_i-\sh_ie_{i+1}]\big\}
\label{iotagh}
\end{align}
with $\sh_i=\mu\sqrt{\Gamma}s_i$.
\end{prop}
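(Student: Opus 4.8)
The plan is to establish the isomorphism $\mathrm{BMW}_n(\tau,q)\cong\mathrm{PS}_n^{(1)}(\alpha,\delta)/\langle\iota_1,\ldots,\iota_{n-2}\rangle$ by exhibiting an explicit change of generators and verifying that the defining relations of each algebra follow from those of the other, exactly as in the proof of Proposition \ref{Prop:FCq}. Since the Temperley--Lieb generator $e_i$ is common to both presentations, the essential task is to express the BMW braid generator $g_i$ in terms of the PSG data $\{e_i,s_i\}$. Because $\mathrm{PS}_n^{(1)}$ fixes $\epsilon=1$, the element $s_i$ is \emph{symmetric} under the rotation tangle (c.f.\ the second relation in \eqref{equ:SQuadSG}), matching the $\tau,q\in\Rbb$ reality branch of BMW where $g_i^*=g_i$; this is why $\epsilon=1$ is the correct choice and why $s_i$ should be built from the symmetric combination $\tfrac12(g_i+g_i^{-1})$ rather than from $g_i$ alone. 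Concretely, I would set
\be
 g_i = a\,\mathds{1}+b\,e_i+c\,\sh_i,
\ee
with $\sh_i=\mu\sqrt{\Gamma}\,s_i$ and scalars $a,b,c$ to be determined, invert this to write $\sh_i$ (equivalently $s_i$) as a linear combination of $\mathds{1},e_i,g_i$, and then impose the BMW relations \eqref{BMWrel} to pin down $a,b,c$ together with the parameter identifications $\delta=1+(\tau-\tau^{-1})/Q$ and $\alpha=\mu Q(\tau^2+1)/\sqrt{\Gamma}$.

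The first block of relations to exploit is the single-index data: $g_ie_i=e_ig_i=\tau^{-1}e_i$ together with $e_i^2=\delta e_i$ and the skein relation $g_i-g_i^{-1}=Q(\mathds{1}-e_i)$. Substituting the ansatz into $g_ie_i=\tau^{-1}e_i$ and using $e_is_i=s_ie_i=0$ (from \eqref{PSrel}) forces $a\delta+b\delta^2=\tau^{-1}\delta$, i.e.\ a linear constraint tying $a,b$ to $\tau$. Computing $g_i^{-1}$ from the quadratic relation $s_i^2=\mathds{1}-\tfrac1\delta e_i+\alpha s_i$ (which makes $g_i$ satisfy a cubic, hence invertible, minimal equation) and matching it against the skein relation yields the remaining constraints; solving these simultaneously should reproduce the stated $a,b,c$ and, crucially, the value $\alpha=\mu Q(\tau^2+1)/\sqrt{\Gamma}$, with the square root $\sqrt{\Gamma}$ entering precisely so that the normalisation $p_1p_2=-1$ fixed in Section \ref{Sec:Gen} is respected. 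I would verify the reality/self-adjointness bookkeeping here as a consistency check: for $\alpha\in\Rbb$ one has $s_i^*=s_i$, matching $g_i^*=g_i$.

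With the single-index relations handled, the remaining relations split into two types. The far-commutativity relations ($g_ig_j=g_jg_i$, $g_ie_j=e_jg_i$ for $|i-j|>1$) are immediate, since they translate directly into $s_is_j=s_js_i$ and $e_is_j=s_je_i$, which already hold in $\mathrm{PS}_n^{(1)}$. The genuinely new content lives in the adjacent relations with $|i-j|=1$: the braid relation $g_ig_{i\pm1}g_i=g_{i\pm1}g_ig_{i\pm1}$, the relations $g_ie_{i\pm1}g_i=g_{i\pm1}^{-1}e_ig_{i\pm1}^{-1}$, and $e_ig_{i\pm1}g_i=g_{i\pm1}g_ie_{i\pm1}=e_ie_{i\pm1}$. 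I expect the main obstacle to be the braid relation, which expands into a relation cubic in the $s$-generators. Recall from the end of Section \ref{Sec:PSpresentation} that $\mathrm{PS}_n$ imposes \emph{no} relation on $s_is_{i\pm1}s_i$, so the braid relation is not automatic; rather, it is exactly the content that must be added by hand via the quotient ideal. The plan is therefore to compute $g_ig_{i+1}g_i-g_{i+1}g_ig_{i+1}$ in terms of the PSG generators, collect the coefficient of the irreducible cubic combination $\sh_i\sh_{i+1}\sh_i-\sh_{i+1}\sh_i\sh_{i+1}$ and of the lower-order terms $e_i-e_{i+1}$ and $\sh_i-\sh_{i+1}+e_i\sh_{i+1}-e_{i+1}\sh_i+\sh_{i+1}e_i-\sh_ie_{i+1}$, and show that this difference equals precisely $\iota_i$ up to an overall nonzero scalar; the polynomial factor $\Gamma=(\tau+q)(\tau-q^{-1})(\tau+q^3)(\tau-q^{-3})$ should emerge as the resultant controlling when the braid relation is consistent with the cubic satisfied by $s_i$. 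Once $\iota_i=0$ is shown equivalent to the braid relation (and the remaining adjacent relations are checked to follow, using $\iota_i=0$ where needed), both implications of the isomorphism are established, completing the proof.
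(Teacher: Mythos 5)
Your proposal follows essentially the same route as the paper's proof: the paper simply writes down the explicit affine change of generators, $g_i=\frac{1}{\tau^2+2Q\tau-1}\big[Q\tau(Q+\tau)\mathds{1}+Q(1-Q\tau)e_i-\mu\sqrt{\Gamma}\,s_i\big]$ (equivalently $s_i$ as a combination of $\mathds{1},e_i,g_i$), and asserts that each set of relations implies the other, with the vanishing of $\iota_i$ carrying the content of the braid relation. Your plan of pinning down the coefficients from the single-index relations (the $g_ie_i=\tau^{-1}e_i$ and skein/quadratic constraints) and identifying $\iota_i$ as the braid-relation defect is precisely how that explicit map is obtained, so it is the same argument.
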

\begin{proof}
The proposed algebra isomorphism uses the same notation, $e_i$, for the Temperley--Lieb generators, and sets
\be
 s_i=\frac{\mu}{\sqrt{\Gamma}}\big[Q\tau(Q+\tau)\mathds{1}+Q(1-Q\tau)e_i-(\tau^2+2Q\tau-1)g_i\big],
\label{sto}
\ee
or equivalently,
\be
 g_i=\frac{1}{\tau^2+2Q\tau-1}\big[Q\tau(Q+\tau)\mathds{1}+Q(1-Q\tau)e_i-\mu\sqrt{\Gamma}s_i\big].
\ee
With this, one verifies that the relations (\ref{BMWrel}) imply the relations (\ref{PSrel}) and the vanishing of 
(\ref{iotagh}). Likewise, the relations (\ref{PSrel}) together with the vanishing of (\ref{iotagh}) are seen
to imply the relations (\ref{BMWrel}).
\end{proof}
\noindent
\textbf{Remark.}
For $\epsilon=1$, $s\in W_2$ is invariant under $\Pb_{r_{2,1}}$, 
as becomes evident when rewriting (\ref{sto}) as
\be
 s_i=\frac{\mu Q(\tau^2+1)}{2\sqrt{\Gamma}}\,(\mathds{1}+e_i)
   -\frac{\mu(\tau^2+2Q\tau-1)}{2\sqrt{\Gamma}}\,(g_i+g_i^{-1}),
\ee
since
\be
 \Pb_{r_{2,1}}(\mathds{1}_2)=e,\qquad
 \Pb_{r_{2,1}}(e)=\mathds{1}_2,\qquad
 \Pb_{r_{2,1}}(g^{\pm1})=g^{\mp1}.
\ee

\noindent
\textbf{Remark.}
The algebra $\mathrm{BMW}_{n}(\tau,q)$ is occasionally referred to as \textit{Kauffman's Dubrovnik version}
\cite{Kauffman90}. It differs from the one in \cite{BW89}, which is based on
\be
 g_i+g_i^{-1} = Q_0(\mathds{1}+e_i),\qquad
 e_i^2=\delta_0 e_i,\qquad 
 \delta_0=-1+\frac{\tau+\tau^{-1}}{Q_0},\qquad
 Q_0=q+q^{-1},
\ee
and consequently admits a description as a quotient of $\mathrm{PS}_n^{(\epsilon)}$ similar to the one in 
Proposition \ref{Prop:BMWq}, but for $\epsilon=-1$. The two possible imaginary values, $\epsilon=\mathrm{i}$ and 
$\epsilon=-\mathrm{i}$, enter similar quotient descriptions of the Liu algebra in Section \ref{Sec:Liupresentation}.

\subsection{Baxterisation}

Relative to the canonical $W_2$-basis $\{\mathds{1}_2, e, g\}$, we introduce the parameterised $R$-operator as
\begin{align}\label{RegK}
 R(u) = r_{\mathds{1}}(u)\mathds{1}_2 + r_{e}(u)e + r_g(u)g,\qquad 
 \raisebox{-0.325cm}{\RotRopGreenOpBBTLSTAlt} 
     =  r_{\mathds{1}}(u) \raisebox{-0.325cm}{\RotRopTLxIdSTAlt} 
     + r_{e}(u)\raisebox{-0.325cm}{\RotRopTLxESTAlt} 
     + r_g(u)\raisebox{-0.325cm}{\RotRopTLxBRSTAltG}\;,
\end{align}
with $r_{\mathds{1}},r_e,r_g:\Omega\to\Cbb$. 
We note that
\be
 \Pb_{r_{2,\pm1}}[R(u)]= \big[r_{e}(u)-Qr_g(u)\big]\mathds{1}_2 + \big[r_{\mathds{1}}(u)+Qr_g(u)\big]e + r_g(u)g,\qquad
 \Pb_{r_{2,2}}[R(u)]=R(u).
\ee

\noindent
\textbf{Remark.}
Since $e_i$ is quadratic in $g_i$, the function $r_g$ is required to be nonzero when exploring homogeneous 
integrability encoded by $\mathrm{BMW}_n(\tau,q)$, see Remark following \eqref{equ:RYOpSGBax}.
\medskip

It is known \cite{CGX91}, see also \cite{PRT14}, that $\mathrm{BMW}_n(\tau,q)$ admits a Baxterisation. 
For each $\omega\in\{-\tau q,\tau q^{-1}\}$, using
\be
 b:x\mapsto\frac{\omega(1-x)(q^2-\omega/x)}{x(x-\omega)(q^2-x)},
\ee
the homogeneous Baxterisation then reads
\be\label{equ:BMWRopParam}
  R(u)=\frac{q^2-1}{q^2-u}\Big[\mathds{1}_2+\frac{1-u}{u-\omega}\,e+\frac{1-u}{Qu}\,g\Big],
\ee
with
\be
 Y\!_1(u,v) = R(uv), \qquad 
 \overline{Y\!}_1(u,v)
  =b\big(\tfrac{uv}{\omega}\big)b\big(\tfrac{\omega}{uv}\big)
  R\big(\tfrac{\omega^2}{uv}\big),
\label{Y1BMW}
\ee
and
\be
  Y\!_2(u,v)=Y\!_3(u,v)=\Pb_{r_{2,\pm1}}\big[R\big(\tfrac{u}{v}\big)\big],\qquad
  \overline{Y\!}_2(u,v)=\overline{Y\!}_3(u,v)=\Pb_{r_{2,\pm1}}\big[R\big(\tfrac{v}{u}\big)\big].
\ee
Using the crossing symmetry
\begin{align}\label{equ:BMWCrossSym}
    \Pb_{r_{2,\pm1}}[R(u)]  = b(u)R\big(\tfrac{\omega}{u}\big), 
 \end{align}
the $Y$-operators can be expressed explicitly in terms of $R$-operators, and the conditions \eqref{Invs} 
and \eqref{YBE} reduce to the following single inversion identity and single (\textit{standard}) YBE:
\begin{align}
    \raisebox{-1.05cm}{\RotMarkRedInvrFilteredregwOrGrSpecFramed}\; 
    = \raisebox{-1.05cm}{\FramedInvRHS}\ , \qquad\quad 
    \raisebox{-1cm}{\RotMarkRedYBEgoLeftSingleRregOrGrFramed}\; 
   =\raisebox{-1cm}{\RotMarkRedYBEgoRightSingleRregOrGrFramed}\ .
\end{align}
Relative to the involution $\cdot^*$ on the BMW algebra with $\tau,q\in\mathbb{R}$, the $R$-operator 
\eqref{equ:BMWRopParam} is self-adjoint for all $u\in\mathbb{R}\setminus\{q^2,\omega\}$.
\medskip

\noindent
\textbf{Remark.}
We have verified that, up to a normalising factor, the generalised Yang--Baxter framework of 
Proposition \ref{Prop:YBE} does not admit any other non-specious solution of the form (\ref{RegK}), 
with $r_g$ nonzero, than the one presented above.

\section{Liu algebra}
\label{Sec:Liu}

\subsection{Planar algebra}

The \textit{Liu planar algebra} is constructed as a quotient planar algebra much akin to the PSG planar 
algebra in Section \ref{Sec:YBIntYBR}. With $C^{(\epsilon)}(\alpha, \delta)$ as in \eqref{equ:PSGRels}, we thus 
introduce 
\begin{align}\label{equ:LiuRels}
    C_{\mathrm{L}}^{(\epsilon)}(\delta):= 
    C^{(\epsilon)}(0,\delta)\cup 
    \Big\{\!\scalebox{0.85}{\raisebox{-0.45cm}{\YBETangleLHSLiu}} 
    -\!\scalebox{0.85}{\raisebox{-0.45cm}{\YBETangleRHSLiu}}  
    -\frac{1}{\delta^2}\Big[\scalebox{0.85}{\raisebox{-0.425cm}{\TLvecSI}}\! 
    -\!\scalebox{0.85}{\raisebox{-0.425cm}{\TLvecSII}}\!
     - \epsilon\,\Big(\!\scalebox{0.85}{\raisebox{-0.425cm}{\ESLiuRelI}}\! 
     -\!\scalebox{0.85}{\raisebox{-0.425cm}{\SELiuRelI}}\!
     +\!\scalebox{0.85}{\raisebox{-0.425cm}{\ESLiuRel}}\! 
     -\!\scalebox{0.85}{\raisebox{-0.425cm}{\SELiuRel}} \!\Big)\Big]\Big\},
\end{align}
where $\delta\neq0$ and $\epsilon\in\{-\mathrm{i},\mathrm{i}\}$. Following \cite{Liu15}, the 
\textit{Liu planar algebra} $\mathrm{L}^{(\epsilon)}(\delta)$ is then defined as the quotient planar algebra 
$(A_n(S,C_{\mathrm{L}}^{(\epsilon)}(\delta)))_{n\in\mathbb{N}_0}$, where $S$ is as in \eqref{equ:PSGLabs}.
\medskip

\noindent
\textbf{Remark.}
In the PSG planar algebra in Section \ref{Sec:YBIntYBR}, for $\mathrm{PS}_1$ and $\mathrm{PS}_2$ to be 
positive-definite, we have $\delta>1$, see (\ref{d1}). Here, in our definition of the Liu planar algebra, we relax this 
condition to $\delta\neq0$.
\medskip

\noindent
From here onward, we opt for the abridged notation $L_n\equiv A_n(S,C_{\mathrm{L}}^{(\epsilon)}(\delta))$, 
$n\in\Nbb_0$. Imposing the relations $C_{\mathrm{L}}^{(\epsilon)}(\delta)$ tames the dimensionality of the 
universal planar algebra, resulting in the dimension formula
\be
 \dim L_n=(2n-1)!!.
\ee

The Liu planar algebra $\mathrm{L}^{(\epsilon)}(\delta)$ is spherical and involutive \cite{Liu15}, with the involution 
$\cdot^*$ defined as the conjugate linear map that acts by reflecting every disk about a line perpendicular to its 
marked boundary interval, as
\begin{align}
    \left(\!\!\raisebox{-0.285cm}{\BMWidIIAlt}\!\!\right)^* = \raisebox{-0.285cm}{\BMWidIIAlt}, \qquad\quad 
    \left(\!\!\raisebox{-0.285cm}{\BMWeProjAlt}\!\!\right)^* = \raisebox{-0.285cm}{\BMWeProjAlt}, \qquad\quad 
    \left(\!\!\raisebox{-0.285cm}{\SOpSG}\!\!\right)^* = \raisebox{-0.285cm}{\SOpSG},
\end{align}
recalling that $s^*=s$ for $\alpha=0$, see comment following (\ref{conj}).

We let $\Lc^{(\epsilon)}$ denote the set of all $\delta$ such that $\mathrm{L}^{(\epsilon)}(\delta)$ is positive 
semi-definite. For each $\delta\in\Lc^{(\epsilon)}$ and $\epsilon\in\{-\mathrm{i},\mathrm{i}\}$, 
the \textit{Liu subfactor planar algebra} $\mathsf{L}^{(\epsilon)}(\delta)$ is then defined as the quotient of 
$\mathrm{L}^{(\epsilon)}(\delta)$ by the kernel of the trace norm. Details of the set $\Lc^{(\epsilon)}$ are presented 
in \cite{Liu15}, including 
$\big\{\mathrm{i}\,\frac{q_m+q_m^{-1}}{q_m-q_m^{-1}}\,|\,m\in\mathbb{N}\big\}\subseteq\Lc^{(\epsilon)}$, 
where $q_m=q^{\frac{\mathrm{i}\pi}{2m+2}}$.

\subsection{Presentation and quotient description}
\label{Sec:Liupresentation}

For each $n\in\Nbb$, $\epsilon\in\{-\mathrm{i},\mathrm{i}\}$ and 
$\delta\neq0$, the \textit{Liu algebra} $\mathrm{L}_n^{(\epsilon)}(\delta)$ is defined by endowing the vector space 
$L_n$ with the multiplication induced by the unshaded planar tangle $M_n$ following 
from \eqref{equ:MultAndCoMultPlanar}, subject to the relations \eqref{equ:LiuRels}. 
We note that the Liu algebra is both unital (with unit denoted by 
$\mathds{1}$) and associative, and that it is a $^*$-algebra with involution inherited from the Liu planar algebra. 
The generators can be represented diagrammatically as in (\ref{1es}), and the algebra admits a presentation 
$\langle e_i,s_i\,|\,i=1,\ldots,n-1\rangle$ subject to the relations
\be
\begin{array}{rllll}
  & s_i^2=\mathds{1}-\tfrac{1}{\delta}e_i,\qquad 
  & e_is_i= s_ie_i = 0,\qquad
  & s_is_j=s_js_i,\qquad
  &  |i-j|>1,
\\[.15cm]
  & e_ie_{i\pm1}e_i= e_i,\qquad
  & e_ie_{i\pm1}s_i=\epsilon^{\pm1}e_is_{i\pm1},\qquad
  & s_ie_{i\pm1}e_i=\epsilon^{\mp1}s_{i\pm1}e_i,
  &
\end{array}
\label{Lrel}
\ee
and
\be
 s_{i}s_{i+1}s_{i} - s_{i+1}s_{i}s_{i+1} = \tfrac{1}{\delta^2}\big[s_i - s_{i+1} 
  - \epsilon\,(e_i s_{i+1} - s_{i+1}e_i + e_{i+1}s_{i} - s_{i}e_{i+1})\big],
  \label{LrelTrip}
\ee
with $\mathds{1}$ denoting the unit. Following from (\ref{Lrel}), we also have the relations
\be
 e_i^2=\delta e_i,\qquad 
 e_ie_j=e_je_i,\qquad 
 e_is_j=s_je_i,\qquad
 |i-j|>1,
\ee
\be
  e_is_{i\pm1}s_i=\epsilon^{\mp1}\big(e_ie_{i\pm1}-\tfrac{1}{\delta}\,e_i\big),\qquad
  s_is_{i\pm1}e_i=\epsilon^{\pm1}\big(e_{i\pm1}e_i-\tfrac{1}{\delta}\,e_i\big),
\ee
and
\be
  e_is_{i\pm1}e_i=0,\qquad s_ie_{i+1}s_i=s_{i+1}e_is_{i+1}.
  \label{LrelF}
\ee
We note that it suffices to list one of the two relations $e_is_i= 0$ and $s_ie_i = 0$ in (\ref{Lrel}). 

\medskip

\noindent
\textbf{Remark.}
It follows from the presentation above that the minimal vanishing polynomial in $s_i$ is $s_i^3-s_i$,
so $s_i$ is not invertible.
\medskip

The Liu algebra differs from the FC and BMW algebras in that there is no known basis for 
$\mathrm{L}_n^{(\epsilon)}(\delta)$ in terms of which the relations \eqref{Lrel}--\eqref{LrelF} admit 
a natural diagrammatic representation. In Section \ref{Sec:Braid}, we consider a basis which includes 
a braid \cite{Liu15}, and where \eqref{LrelTrip} can be interpreted as a type-III Reidemeister move. 
However, in this basis, some of the other relations fail to have a natural diagrammatic interpretation.

By comparing the presentation above with the one of $\mathrm{PS}_n^{(\epsilon)}(\alpha,\delta)$ 
in Section \ref{Sec:PSpresentation}, we obtain the following result, recalling that $\alpha=0$ 
for $\epsilon\in\{-\mathrm{i},\mathrm{i}\}$.
\begin{prop}\label{Prop:Liuq}
For each $n\in\Nbb$, $\epsilon\in\{-\mathrm{i},\mathrm{i}\}$, and $\delta>1$, we have
\be
 \mathrm{L}_n^{(\epsilon)}(\delta)\cong
  \mathrm{PS}_n^{(\epsilon)}(0,\delta)
  \big/\big\langle\iota_1,\ldots,\iota_{n-2}\big\rangle,
\ee
where
\be
 \iota_i=s_{i}s_{i+1}s_{i} - s_{i+1}s_{i}s_{i+1} -\tfrac{1}{\delta^2}\big[s_i - s_{i+1} 
 - \epsilon\,(e_i s_{i+1} - s_{i+1}e_i + e_{i+1}s_{i} - s_{i}e_{i+1})\big].
\ee
\end{prop}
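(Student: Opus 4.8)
The plan is to establish the isomorphism $\mathrm{L}_n^{(\epsilon)}(\delta)\cong \mathrm{PS}_n^{(\epsilon)}(0,\delta)/\langle\iota_1,\ldots,\iota_{n-2}\rangle$ by directly comparing the two presentations, exactly in the spirit of Proposition \ref{Prop:FCq} and Proposition \ref{Prop:BMWq}. First I would recall that setting $\alpha=0$ in the defining relations (\ref{PSrel}) of $\mathrm{PS}_n^{(\epsilon)}(\alpha,\delta)$ collapses them termwise onto the relations (\ref{Lrel}): the quadratic relation $s_i^2=\mathds{1}-\tfrac1\delta e_i+\alpha s_i$ becomes $s_i^2=\mathds{1}-\tfrac1\delta e_i$, while $e_is_i=s_ie_i=0$, the far-commutativity $s_is_j=s_js_i$ for $|i-j|>1$, the Temperley--Lieb relation $e_ie_{i\pm1}e_i=e_i$, and the mixed relations $e_ie_{i\pm1}s_i=\epsilon^{\pm1}e_is_{i\pm1}$ and $s_ie_{i\pm1}e_i=\epsilon^{\mp1}s_{i\pm1}e_i$ are identical in both presentations. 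Since $\epsilon\in\{-\mathrm{i},\mathrm{i}\}$ forces $\alpha=0$ in $\mathrm{PS}_n^{(\epsilon)}$ (by the constraint on parameters recorded after (\ref{equ:PSGRels})), the only generator-level relation present in the Liu algebra but absent from $\mathrm{PS}_n^{(\epsilon)}(0,\delta)$ is the cubic braid-type relation (\ref{LrelTrip}).

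Next I would observe that the ideal generator $\iota_i$ in the statement is, by inspection, precisely the difference between the left- and right-hand sides of (\ref{LrelTrip}); that is,
\be
 \iota_i = s_{i}s_{i+1}s_{i}-s_{i+1}s_{i}s_{i+1}-\tfrac{1}{\delta^2}\big[s_i-s_{i+1}-\epsilon\,(e_is_{i+1}-s_{i+1}e_i+e_{i+1}s_i-s_ie_{i+1})\big].
\ee
Thus imposing $\iota_i=0$ for $i=1,\ldots,n-2$ is exactly imposing (\ref{LrelTrip}) for all admissible $i$. The forward direction of the isomorphism is therefore immediate: every relation of $\mathrm{L}_n^{(\epsilon)}(\delta)$ either already holds in $\mathrm{PS}_n^{(\epsilon)}(0,\delta)$ or is the vanishing of some $\iota_i$, so the identity map on generators descends to a well-defined surjective homomorphism $\mathrm{PS}_n^{(\epsilon)}(0,\delta)/\langle\iota_1,\ldots,\iota_{n-2}\rangle \to \mathrm{L}_n^{(\epsilon)}(\delta)$. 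Conversely, since the generators and all the relations (\ref{PSrel}) with $\alpha=0$ hold in $\mathrm{L}_n^{(\epsilon)}(\delta)$ and the $\iota_i$ vanish there by (\ref{LrelTrip}), the same assignment induces the inverse homomorphism. These two maps are mutually inverse on generators and hence give the claimed isomorphism.

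The only genuine subtlety — and the step I would treat most carefully — is confirming that no \emph{further} relations are silently needed, i.e.\ that the presentation (\ref{Lrel})--(\ref{LrelF}) of $\mathrm{L}_n^{(\epsilon)}(\delta)$ is generated by (\ref{Lrel}) and (\ref{LrelTrip}) alone. This is handled by noting, as the excerpt already does, that the auxiliary relations (\ref{LrelF}) and the displayed consequences $e_is_{i\pm1}s_i=\epsilon^{\mp1}(e_ie_{i\pm1}-\tfrac1\delta e_i)$ etc.\ are stated to \emph{follow from} (\ref{Lrel}); with $\alpha=0$ these are the $\alpha=0$ specialisations of the corresponding consequences in $\mathrm{PS}_n^{(\epsilon)}(\alpha,\delta)$ recorded after (\ref{PSrel}), and in particular $s_ie_{i+1}s_i=s_{i+1}e_is_{i+1}$ holds in $\mathrm{PS}_n^{(\epsilon)}(0,\delta)$ independently of the cubic relation. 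Hence the comparison of the two presentations is complete once the identification $\iota_i \leftrightarrow (\ref{LrelTrip})$ is made, and since both algebras have the stated dimension $(2n-1)!!$ the surjection is forced to be an isomorphism, providing an independent check. I expect the main obstacle to be purely bookkeeping: verifying that the sign conventions and the $\epsilon^{\pm1}$ powers in $\iota_i$ match (\ref{LrelTrip}) exactly for both values $\epsilon=\pm\mathrm{i}$, which requires tracking $\epsilon^{-1}=-\epsilon$ through the mixed terms.
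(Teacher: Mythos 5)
Your proposal is correct and takes essentially the same route as the paper, which justifies Proposition \ref{Prop:Liuq} purely by comparing the presentation \eqref{Lrel}--\eqref{LrelTrip} of $\mathrm{L}_n^{(\epsilon)}(\delta)$ with the presentation \eqref{PSrel} of $\mathrm{PS}_n^{(\epsilon)}(\alpha,\delta)$, recalling that $\alpha=0$ is forced for $\epsilon\in\{-\mathrm{i},\mathrm{i}\}$ and observing that $\iota_i$ is exactly the difference of the two sides of \eqref{LrelTrip}. Your additional remarks (that the consequences \eqref{LrelF} need not be imposed separately, and the $(2n-1)!!$ dimension count as a sanity check) go slightly beyond what the paper records but do not change the argument.
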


We let $\Lc^{(\epsilon)}_n$ denote the set of all $\delta$ such that the trace form \eqref{form} is positive 
semi-definite on $L_{n}$, noting that $\Lc^{(\epsilon)}\subseteq\Lc^{(\epsilon)}_n$ for all $n\in\mathbb{N}_0$. 
For each $\delta\in\Lc^{(\epsilon)}_n$, $\mathsf{L}_n^{(\epsilon)}(\delta)$ is defined as 
the \textit{Liu subfactor algebra} constructed as the quotient of $\mathrm{L}_n^{(\epsilon)}(\delta)$ by the kernel 
of the trace norm.

\subsection{Baxterisation}

Relative to the canonical $L_2$-basis $\{\mathds{1}_2,e,s\}$, 
we introduce the parameterised $R$-operator as
\begin{align}\label{ResK}
 R(u) = r_{\mathds{1}}(u)\mathds{1}_2 + r_{e}(u)e + r_s(u)s,\qquad
 \raisebox{-0.325cm}{\RotRopGreenOpBBTLSTAlt} 
     =  r_{\mathds{1}}(u)\raisebox{-0.325cm}{\RotRopTLxIdSTAlt} 
     + r_{e}(u)\raisebox{-0.325cm}{\RotRopTLxESTAlt} 
     + r_s(u)\raisebox{-0.325cm}{\RotRopTLxBRSTAlt}\;,
\end{align}
with $r_{\mathds{1}},r_{e},r_s:\Omega\to\Cbb$.
\medskip

\noindent
\textbf{Remark.}
Unlike the $R$-operators in the FC and BMW algebras, the above $R$-operator is not invariant 
under rotation by $\pi$ (unless $r_s(u)=0$),
\begin{align}
    \raisebox{-0.325cm}{\RotRopGreenOpBBTLSTAltPiRot}\;
   \neq\raisebox{-0.325cm}{\RotRopGreenOpBBTLSTAlt},
\end{align}
since
\be
 \Pb_{r_{2,2}}[R(u)]=r_{\mathds{1}}(u)\mathds{1}_2 + r_{e}(u)e - r_s(u)s.
\label{Pr2R}
\ee
We also note that
\be
 \Pb_{r_{2,\pm1}}[R(u)]=r_{e}(u)\mathds{1}_2 + r_{\mathds{1}}(u)e +\epsilon^{\pm 1} r_s(u)s.
\label{Pr1R}
\ee

\noindent
\textbf{Remark.}
Since $e_i$ is quadratic in $s_i$, the function $r_s$ is required to be nonzero when exploring homogeneous 
integrability encoded by $\mathrm{L}_n^{(\epsilon)}(\delta)$, see Remark following \eqref{equ:RYOpSGBax}.
\begin{lem}\label{Lem:Liu}
Let
\be
 \raisebox{-0.345cm}{\RotRopGreenOpBBTLSTAltOrNOu}
   = \raisebox{-0.345cm}{\RotRopTLxIdSTAlt} 
   + \rho\raisebox{-0.345cm}{\RotRopTLxESTAlt} 
   + \sigma\raisebox{-0.345cm}{\RotRopTLxBRSTAlt},\qquad
  \raisebox{-0.345cm}{\RotRopDarkRedOpBBiNOi} 
  = \raisebox{-0.345cm}{\RotRopTLxIdSTAlt} 
   + \frac{\delta\rho(\delta+\rho)}{\sigma^2-\delta\rho}\raisebox{-0.345cm}{\RotRopTLxESTAlt} 
   -\frac{\delta\sigma(\delta+\rho)}{\sigma^2-\delta\rho}\raisebox{-0.345cm}{\RotRopTLxBRSTAlt},
\ee
where $\rho,\sigma\in\Cbb$ such that $\sigma^2\neq\delta\rho$. Then,
\be
 \raisebox{-1.05cm}{\OrangeTealFramed}
 \ =\raisebox{-1.05cm}{\TealOrangeFramed}
 \ = \frac{\delta(\delta+\rho)(\rho^2+\sigma^2)}{\sigma^2-\delta\rho}\raisebox{-1.05cm}{\FramedInvRHS}\,.
\ee
\end{lem}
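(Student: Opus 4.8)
The plan is to translate the diagrammatic inversion identity into an algebraic identity in the three-dimensional commutative algebra $L_2$, whose multiplication is governed by $e^2=\delta e$, $es=se=0$, and $s^2=\mathds{1}_2-\tfrac1\delta e$ (the $\alpha=0$ specialisation of \eqref{equ:QuadRelsSG}). Writing the two operators in the basis $\{\mathds{1}_2,e,s\}$ as $\mathds{1}_2+\rho e+\sigma s$ and, after introducing the common factor $\lambda:=\tfrac{\delta(\delta+\rho)}{\sigma^2-\delta\rho}$, as $\mathds{1}_2+\lambda(\rho e-\sigma s)$, the first step is to read off the planar composition encoded by \OrangeTealFramed and \TealOrangeFramed. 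This spatial composition is not the bare algebra product but the product of the two factors after each has been acted on by an elementary rotation; concretely, I would evaluate $\Pb_{r_{2,1}}[\overline{Y}]\,\Pb_{r_{2,1}}[Y]$ (and, for the reversed diagram, the $\Pb_{r_{2,-1}}$ variant), using the rotation action recorded in \eqref{Pr1R}, namely $\Pb_{r_{2,\pm1}}[\mathds{1}_2]=e$, $\Pb_{r_{2,\pm1}}[e]=\mathds{1}_2$, and $\Pb_{r_{2,\pm1}}[s]=\epsilon^{\pm1}s$.

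Carrying out the product is then a short calculation. After rotation the two factors read $\lambda\rho\,\mathds{1}_2+e-\lambda\sigma\epsilon\,s$ and $\rho\,\mathds{1}_2+e+\sigma\epsilon\,s$; multiplying them out using $e^2=\delta e$, $es=se=0$ and $s^2=\mathds{1}_2-\tfrac1\delta e$, the coefficient of $s$ is $\lambda\rho\sigma\epsilon-\lambda\rho\sigma\epsilon=0$, the cancellation being forced by the opposite signs of the $s$-parts of $Y$ and $\overline{Y}$ together with $\epsilon^2=-1$ (the defining feature $\epsilon\in\{-\mathrm{i},\mathrm{i}\}$ of the Liu algebra). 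The coefficient of $\mathds{1}_2$ collects to $\lambda\rho^2+\lambda\sigma^2$, while the coefficient of $e$ is $\lambda\rho+\rho+\delta-\tfrac{\lambda\sigma^2}{\delta}$. The essential simplification is $\lambda\rho-\tfrac{\lambda\sigma^2}{\delta}=\lambda\cdot\tfrac{\delta\rho-\sigma^2}{\delta}=-(\delta+\rho)$, which is precisely what the value of $\lambda$ is engineered to produce, so the $e$-coefficient vanishes and the product equals $\lambda(\rho^2+\sigma^2)\,\mathds{1}_2=\tfrac{\delta(\delta+\rho)(\rho^2+\sigma^2)}{\sigma^2-\delta\rho}\,\mathds{1}_2$, as claimed.

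Finally I would check that the reversed ordering yields the same value: the two factors commute in $L_2$, and replacing $\Pb_{r_{2,1}}$ by $\Pb_{r_{2,-1}}$ only sends $\epsilon\mapsto\epsilon^{-1}$, which again squares to $-1$, so the identical cancellations occur and both \OrangeTealFramed and \TealOrangeFramed evaluate to the stated scalar multiple of the identity. I expect the only genuine obstacle to be the bookkeeping step of recognising that the inversion-identity tangle corresponds to the rotated product $\Pb_{r_{2,\pm1}}[\overline{Y}]\,\Pb_{r_{2,\pm1}}[Y]$ rather than the bare product $\overline{Y}\,Y$: for the bare product the coefficient of $s$ is $\sigma(1-\lambda)\neq0$ and no cancellation occurs, so correctly accounting for the rotation, and hence for the factors of $\epsilon$, is the crux; once that is in place the rest is the routine computation above.
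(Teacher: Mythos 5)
Your proposal is correct and takes the same route as the paper: the paper's proof of Lemma \ref{Lem:Liu} is literally ``the result follows by direct computation'', and your argument is that computation carried out explicitly, correctly identifying the two points that make it work --- reading the side-by-side composition as the product of the rotated operators via \eqref{Pr1R}, and using $\epsilon^2=-1$ so that the $s$-terms cancel and the $s^2$-contribution changes sign. Your resulting scalar $\lambda(\rho^2+\sigma^2)=\delta(\delta+\rho)(\rho^2+\sigma^2)/(\sigma^2-\delta\rho)$, together with the vanishing of the $e$-coefficient forced by the choice of $\lambda$, reproduces the stated identity exactly.
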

\begin{proof}
The result follows by direct computation.
\end{proof}

\noindent
Using this result together with Proposition \ref{Prop:YBE} and Proposition \ref{Prop:YR}, as well as (\ref{Pr2R}) 
and (\ref{Pr1R}), we obtain the homogeneous Baxterisation of the Liu algebra in Proposition \ref{prop:BaxAlgL} 
below. To describe it, we find it useful to introduce the function
\be
 \phi:x\mapsto\mathrm{i}\,\frac{1+x}{1-x}
\ee
and the parameter
\be
 \Delta:=\frac{\mathrm{i}-\delta}{\mathrm{i}+\delta}=-\mathrm{i}\,\phi(\mathrm{i}\,\delta),
\ee
noting that $\phi\big(\frac{1}{x}\big)=-\phi(x)$.
\begin{prop}\label{prop:BaxAlgL}
For each $\mu\in\{-1,1\}$,
\be
 R(u)=
 \frac{1}{\phi(u)-\delta}
 \big[\phi(u)\mathds{1}_2+e+\mu\delta s\big]
\label{RLiu}
\ee
provides a homogeneous Baxterisation of $L_n^{(\epsilon)}(\delta)$, with
\be
\begin{array}{rll}
 &Y\!_1(u,v)=R(uv),\qquad &
 \overline{Y\!}_1(u,v)=\phi(uv)\phi\big(\tfrac{\Delta}{uv}\big)\Pb_{r_{2,2}}\big[R\big(\tfrac{\Delta}{uv}\big)\big],
 \\[.25cm]
 &Y\!_2(u,v)=\Pb_{r_{2,-1}}\big[R\big(\tfrac{u}{v}\big)\big],\qquad & 
 \overline{Y\!}_2(u,v)=\Pb_{r_{2,-1}}\big[R\big(\tfrac{v}{u}\big)\big],
 \\[.25cm]
 &Y\!_3(u,v)=\Pb_{r_{2,1}}\big[R\big(\tfrac{u}{v}\big)\big],\qquad & 
 \overline{Y\!}_3(u,v)=\Pb_{r_{2,1}}\big[R\big(\tfrac{v}{u}\big)\big].
\end{array}
\ee
\end{prop}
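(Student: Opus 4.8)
The plan is to invoke Proposition \ref{Prop:YBE}. Since the $K$-operators are normalised to units, c.f.~\eqref{KK0}, showing that \eqref{RLiu} furnishes a homogeneous Baxterisation of $\mathrm{L}_n^{(\epsilon)}(\delta)$ reduces to exhibiting $Y$-operators together with spatial inverses that satisfy the inversion identities \eqref{Invs}, the generalised Yang--Baxter equations \eqref{YBE}, and the boundary Yang--Baxter equations \eqref{BYBEs}, the last of which collapse to \eqref{bYBER}. The operators listed in the statement are the proposed candidates, so the task is to verify these three families of relations for the $R$-operator \eqref{RLiu}, recalling that $\alpha=0$ here so that $s$ obeys the relations \eqref{Lrel} with no $\alpha s$ term.

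First I would dispose of the Yang--Baxter equations. With $Y_1(u,v)=R(uv)$, Proposition \ref{Prop:YR} reduces the three equations in \eqref{YBE} to $\mathrm{YBE}_1$ alone: once $\mathrm{YBE}_1$ holds, $Y_2(u,v)=\Pb_{r_{2,-1}}[R(\tfrac{u}{v})]$ and $Y_3(u,v)=\Pb_{r_{2,1}}[R(\tfrac{u}{v})]$ automatically solve $\mathrm{YBE}_2$ and $\mathrm{YBE}_3$, and by \eqref{Pr1R} these coincide with the operators claimed. Verifying $\mathrm{YBE}_1$ is a computation in $L_3$: I would expand the relevant triple product of $R$-operators on three strands using \eqref{RLiu}, reduce via the defining relations \eqref{Lrel}--\eqref{LrelF}, and match coefficients relative to a spanning set of $L_3$. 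The decisive input is the cubic relation \eqref{LrelTrip}, which rewrites the $s_1s_2s_1-s_2s_1s_2$ contribution in terms of lower words and supplies precisely the $\tfrac{1}{\delta^2}$-corrections required for cancellation; matching the surviving coefficients is what forces both the M\"obius form $\phi(x)=\mathrm{i}\tfrac{1+x}{1-x}$ and the constraint $\epsilon\in\{-\mathrm{i},\mathrm{i}\}$.

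Next I would treat the inversion identities via Lemma \ref{Lem:Liu}. Normalising the leading coefficient of \eqref{RLiu}, the operator $Y_1(u,v)=R(uv)$ takes the form $\mathds{1}_2+\rho\,e+\sigma\,s$ with $\rho=\phi(uv)^{-1}$ and $\sigma=\mu\delta\,\phi(uv)^{-1}$, whereupon Lemma \ref{Lem:Liu} produces both the spatial inverse and the accompanying scalar. Using \eqref{Pr2R} together with the functional identities $\phi(\tfrac{1}{x})=-\phi(x)$ and $\Delta=-\mathrm{i}\,\phi(\mathrm{i}\delta)$, I would then check that this spatial inverse agrees, up to the overall scalar, with $\overline{Y}_1(u,v)=\phi(uv)\phi(\tfrac{\Delta}{uv})\Pb_{r_{2,2}}[R(\tfrac{\Delta}{uv})]$; the sign reversal of the $s$-coefficient under $\Pb_{r_{2,2}}$ matches the minus sign in the inverse supplied by the Lemma. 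The inversion identities for $(Y_2,\overline{Y}_2)$ and $(Y_3,\overline{Y}_3)$ then follow by applying the invertible rotation maps $\Pb_{r_{2,\pm1}}$ to the $i=1$ identity, reproducing the stated $\overline{Y}_2$ and $\overline{Y}_3$. The boundary equations \eqref{bYBER} are relations in $L_2$; expanding both sides in the basis $\{\mathds{1}_2,e,s\}$ via \eqref{Pr2R}--\eqref{Pr1R} reduces them to scalar functional identities among $\phi$-values, which I would confirm directly.

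I expect the main obstacle to be $\mathrm{YBE}_1$. Unlike the Fuss--Catalan and BMW cases, where the distinguished generator satisfies a quadratic, braid-type relation and the Yang--Baxter equation reduces to a standard one, the Liu generator $s$ obeys only the cubic relation \eqref{LrelTrip}. Consequently the three-strand identity genuinely exercises that relation, the coefficient bookkeeping is delicate, and it is precisely this step that pins down $\phi$ and $\epsilon$; the inversion and boundary verifications are comparatively routine once the Yang--Baxter equation is in hand.
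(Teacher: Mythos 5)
Your proposal follows the paper's own (largely implicit) route exactly: the paper obtains Proposition \ref{prop:BaxAlgL} by combining Lemma \ref{Lem:Liu} for the spatial inverses and inversion identities, Proposition \ref{Prop:YR} to reduce the three generalised YBEs to the single standard $\mathrm{YBE}_1$ (verified directly in $L_3$ via the cubic relation \eqref{LrelTrip}), and the rotation formulas \eqref{Pr2R}--\eqref{Pr1R}, with the boundary YBEs collapsing to \eqref{bYBER} because the $K$-operators are units. Your write-up matches this in all essentials, including the identification $\rho=\phi(uv)^{-1}$, $\sigma=\mu\delta\,\phi(uv)^{-1}$ in Lemma \ref{Lem:Liu} and the role of $\phi(1/x)=-\phi(x)$ and $\Delta$ in recognising the inverse as $\Pb_{r_{2,2}}[R(\Delta/uv)]$, so no further comment is needed.
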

\noindent
\textbf{Remark.}
As expressed in (\ref{RLiu}), the $R$-operator appears ill-defined at $u=1$. However, following a simple
rewriting, it may be evaluated at $u=1$, yielding $R(1)=\mathds{1}_2$. The Baxterisation is thus well-defined for 
all $u\in\mathbb{C}$ for which $\phi(u)\neq\delta$, that is, for all $u\in\Cbb\setminus\{-\Delta\}$.
\medskip

Unlike the $R$-operators in the FC and BMW algebras, the $R$-operator \eqref{RLiu} is not crossing symmetric, 
that is, there do not exist scalar functions $\tilde{c}_R$ and $c_R$ such that
\begin{align}
    \Pb_{r_{2,1}}[R(u)] = \tilde{c}_R(u) R(c_R(u)).
\end{align}
This explains why the $Y$-operators are not expressible in terms of the $R$-operator itself, as is the
situation in the FC and BMW cases, c.f.~(\ref{Y1FC})--(\ref{equ:FCCrossSym}) 
and (\ref{Y1BMW})--(\ref{equ:BMWCrossSym}), respectively. In the Liu case, the conditions \eqref{Invs} and 
\eqref{YBE} reduce to the inversion identities
\begin{align}
    \raisebox{-1.05cm}{\RotMarkRedInvrFilteredregwOrGrSpecFramed}\; 
    = \raisebox{-1.05cm}{\FramedInvRHS} \,
    =\phi(w)\phi\big(\tfrac{\Delta}{w}\big)\raisebox{-1.05cm}{\FramedInvLHSLiu}\ , 
\end{align}
and a single (\textit{standard}) YBE 
\begin{align}
    \raisebox{-1cm}{\RotMarkRedYBEgoLeftSingleRregOrGrFramed}\; 
   =\raisebox{-1cm}{\RotMarkRedYBEgoRightSingleRregOrGrFramed}\ .
\end{align}

Relative to the involution $\cdot^*$ on the Liu algebra with $\delta\in\Rbb$, the $R$-operator \eqref{RLiu} 
is self-adjoint for all $u\in\mathbb{C}$ such that $|u|=1$ and $u\neq-\Delta$.
\medskip

\noindent
\textbf{Remark.}
We have verified that, up to a normalising factor, the generalised Yang--Baxter framework of 
Proposition \ref{Prop:YBE} does not admit any other non-specious solution of the form (\ref{ResK}), 
with $r_s$ nonzero, than the one presented in Proposition \ref{prop:BaxAlgL}.

\subsection{Braid limits}
\label{Sec:Braid}

For each $\mu\in\{-1,1\}$, the $R$-operator (\ref{RLiu}) yields well-defined $L_2$-elements 
under the specialisation $u=0$ and in the limit $|u|\to\infty$,
\be
 (1+\mathrm{i}\,\delta)R(0)=\mathds{1}_2-\mathrm{i}\,e-\mathrm{i}\,\mu\delta s, \qquad
 \lim_{|u|\to\infty}(1-\mathrm{i}\,\delta)R(u)=\mathds{1}_2+\mathrm{i}\,e+\mathrm{i}\,\mu\delta s,
\label{braidlimits}
\ee
and we collect the ensuing four elements in the set
\be
 \mathcal{H}=\{\mathds{1}_2+\epsilon_1e+\epsilon_2\delta s\,|\,\epsilon_1,\epsilon_2\in\{-\mathrm{i},\mathrm{i}\}\}.
\ee
For each $b\in\mathcal{H}$, we have 
\be
 b^2=2b-(\delta^2+1)\mathds{1}_2,
\ee
so, for $\delta\neq\pm\mathrm{i}$, $b$ is invertible, with inverse given by
\be
 b^{-1}=\tfrac{1}{\delta^2+1}(2\mathds{1}_2-b).
\ee
Although not obtained as limits of our $R$-operator, the elements of $\mathcal{H}$ also feature in \cite{Liu15},
where it is shown that for each $b\in\mathcal{H}$, the generators $\{b_1,\ldots,b_{n-1}\}\subset L_n$ satisfy
\be
 b_ib_{i\pm1}b_i= b_{i\pm1}b_ib_{i\pm1},\qquad
 b_ib_j= b_jb_i,\qquad  |i-j|>1,
\ee
and under a specialisation of $\delta$, this set generates an Iwahori-Hecke algebra. 
Here, $b_i\in L_n$ denotes the element `acting' as $b$ on the $i^{\mathrm{th}}$ and $(i+1)^{\mathrm{th}}$ nodes 
and as the identity elsewhere. This justifies referring to (\ref{braidlimits}) as \textit{braid limits}, noting that the 
particular limits and pre-factors in (\ref{braidlimits}) are a consequence of the normalisation choice in (\ref{RLiu}).

\section{Polynomial integrability}
\label{Sec:pol}

Following \cite{PR22}, let $\Ac$ be a finite-dimensional unital associative semisimple algebra, and 
suppose a model is described by a transfer operator $T(u)\in\Ac$, where $u\in\Omega$ with 
$\Omega\subseteq\Cbb$ a suitable domain. We then say that the model is \textit{polynomially integrable} 
(on $\Omega$) if there exists $b\in\Ac$ such that
\begin{align}\label{equ:PolInt}
    T(u)\in\Cbb[b], \qquad \forall\,u\in\Omega,
\end{align}
in which case $T(u)$ is said to be \textit{polynomialisable}.
Indeed, the integrability of such a model readily follows from \eqref{equ:PolInt}.
For the existence of $b$, we have the following corollary to results in \cite{PR22}, 
wherein an element of $\Ac$ is said to be \textit{diagonalisable} if the regular representation of the element is.
\begin{cor}\label{cor:Cxdiag}
Let $\Ac$ be a finite-dimensional unital associative semisimple algebra, and suppose that 
$\{C(u)\in\Ac\,|\,u\in\Omega\}$ is a one-parameter family of commuting and diagonalisable operators, with 
$\Omega\subseteq\Cbb$ a suitable domain. Then, there exists $b\in\Ac$ such that $C(u)\in\Cbb[b]$ for all 
$u\in\Omega$.
\end{cor}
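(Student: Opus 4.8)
The plan is to show that the commutative subalgebra of $\Ac$ generated by the family $\{C(u)\}$ is itself semisimple --- hence isomorphic to a product of copies of $\Cbb$ --- and then to invoke the fact that such an algebra is monogenic, i.e.\ generated by a single element.

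First I would let $\Bc\subseteq\Ac$ denote the unital subalgebra generated by $\{C(u)\mid u\in\Omega\}$. Since the $C(u)$ pairwise commute, $\Bc$ is commutative. As $\Ac$ is finite-dimensional, so is $\Bc$, and therefore $\Bc$ is already generated by finitely many of the operators, say $C(u_1),\dots,C(u_k)$. Writing $\rho$ for the faithful regular representation of $\Ac$, the matrices $\rho(C(u_j))$ pairwise commute --- because $\rho(C(u_i))\rho(C(u_j))=\rho(C(u_i)C(u_j))=\rho(C(u_j)C(u_i))=\rho(C(u_j))\rho(C(u_i))$ --- and each is diagonalisable by hypothesis. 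A finite commuting family of diagonalisable operators on the finite-dimensional space $\Ac$ is simultaneously diagonalisable, so there is a basis of $\Ac$ in which every $\rho(C(u_j))$ is diagonal.

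Next I would observe that every element of $\Bc$ is a polynomial in $C(u_1),\dots,C(u_k)$, so its image under $\rho$ is a polynomial in the simultaneously diagonalised matrices and is therefore itself diagonal in the common basis. Hence $\rho(\Bc)$ lies in the algebra $\Dc$ of operators diagonal in that basis, with $\Dc\cong\Cbb^{\dim\Ac}$. Since $\rho$ is faithful, $\Bc\cong\rho(\Bc)$ is isomorphic to a unital subalgebra of $\Cbb^{\dim\Ac}$; any such subalgebra is commutative and contains no nonzero nilpotents, hence is a finite-dimensional reduced commutative $\Cbb$-algebra, so $\Bc\cong\Cbb^m$ for some $m$.

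Finally, I would use that $\Cbb^m$ is monogenic: an element $b\in\Bc$ whose image in $\Cbb^m$ has pairwise distinct coordinates $\lambda_1,\dots,\lambda_m$ has minimal polynomial $\prod_i(t-\lambda_i)$ of degree $m=\dim\Bc$ (the primitive idempotents being recovered from $b$ by Lagrange interpolation), so $\dim\Cbb[b]=m$ and hence $\Cbb[b]=\Bc$. Since each $C(u)\in\Bc=\Cbb[b]$, the claim follows. The main obstacle is the middle step --- confirming that $\Bc$ is \emph{semisimple} rather than merely commutative --- which hinges on simultaneous diagonalisability; the reduction to finitely many generators, legitimate because $\dim\Ac<\infty$, is what lets the standard finite-dimensional diagonalisation theorem apply even when $\Omega$ is infinite.
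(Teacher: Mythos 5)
Your argument is correct, and it is essentially the standard route: the paper itself gives no proof here, deferring to results in \cite{PR22}, but the intended argument is exactly what you write --- the commuting diagonalisable family is simultaneously diagonalisable in the faithful regular representation, so the unital commutative subalgebra $\Bc$ it generates is reduced, hence isomorphic to $\Cbb^m$, and any element of $\Cbb^m$ with pairwise distinct coordinates generates it as a polynomial algebra. All the individual steps check out: faithfulness of $\rho$ follows from unitality of $\Ac$, the reduction to finitely many generators is legitimate (and in fact optional, since an arbitrary commuting family of diagonalisable operators on a finite-dimensional space is already simultaneously diagonalisable), and the Lagrange-interpolation argument correctly recovers the primitive idempotents from $b$. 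One small observation: your proof never actually uses the semisimplicity of $\Ac$, only its finite dimension and unitality together with the diagonalisability hypothesis, so your version is marginally more general than the statement as given.
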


\noindent
\textbf{Remark.}
The algebraic structure underlying the polynomial integrability of a model is, in a sense, 
as simple as possible, owing to its characterisation as a polynomial ring in a \textit{single} (parameter-independent) variable. 
We stress that this result applies in the familiar setting where $\mathcal{A}$ is a matrix algebra. 
In this case, polynomial integrability arises if the transfer matrix belongs to a commuting family 
of diagonalisable matrices -- a common situation for such models.
\medskip

Corollary \ref{cor:Cxdiag} indicates that YBR planar algebras offer prototypical examples of algebras 
underlying polynomially integrable models, as they 
(i) have a natural Yang--Baxter integrable structure, 
(ii) have an in-built inner product (the positive-definite trace form), and 
(iii) are semisimple. 
For such models, the polynomial integrability thus follows from the local properties 
\eqref{Invs}--\eqref{BYBEs} of Yang--Baxter integrability, provided the $R$- and $K$-operators \eqref{RKK} are 
self-adjoint, c.f.~Corollary \ref{cor:genDiagFaith}. Proposition \ref{prop:pol} below is a consequence of this.

\subsection{Singly generated algebras}
\label{Sec:Singly}

For each of the singly generated planar algebras FC, BMW, and Liu, we refer to the homogeneous transfer 
operator built using the $R$-operator parameterised in \eqref{equ:FCRopParam}, \eqref{equ:BMWRopParam}, 
and \eqref{RLiu}, respectively, as the \textit{canonical transfer operator} on $\Omega$, 
with $\Omega\subseteq\Cbb$ a suitable domain (that depends on the underlying algebra).
In each case, this transfer operator is the unique (up to renormalisations and reparameterisations) algebra 
element encoding homogeneous Yang--Baxter integrability. 

Using Liu's Theorem \ref{thm:LiuYBRClass}, the comments following Corollary \ref{cor:Cxdiag}, and results 
obtained in the previous three sections, we can now account for the \textit{polynomialisability} of the transfer 
operator in Theorem \ref{cor:1}. With notation as in Section \ref{Sec:FC}, Section \ref{Sec:BMW}, 
and Section \ref{Sec:Liu}, the following result thus gives conditions on the various algebra-defining 
and spectral parameters, ensuring that the respective canonical transfer operator is polynomialisable.
\begin{prop}\label{prop:pol}
\be
\begin{array}{rl}
(\mathrm{FC}):&
\text{Let $n\in\Nbb$ and $\gamma\in\mathcal{F}_{n}$, and suppose
$T_n(u)\in\mathsf{FC}_n(\gamma)$ is the corresponding canonical}
\\
 &\text{transfer operator, with $u\in\Rbb\setminus\{\gamma^2-1\}$. Then, $T_n(u)$ is polynomialisable.}
\\[.25cm]
(\mathrm{BMW}):&
\text{Let $n\in\Nbb$ and $(\tau,q)\in\mathcal{B}_n\cap\Rbb^2$, and suppose
$T_n(u)\in\mathsf{BMW}_n(\tau,q)$ is the corresponding}
\\
 &\text{canonical transfer operator, with $u\in\Rbb\setminus\{q^2,\omega\}$. Then, $T_n(u)$ is polynomialisable.}
\\[.25cm]
(\mathrm{Liu}):&
\text{Let $n\in\Nbb$, $\epsilon\in\{-\mathrm{i},\mathrm{i}\}$, and $\delta\in\Lc^{(\epsilon)}_{n}$, 
and suppose $T_n(u)\in\mathsf{L}_n^{(\epsilon)}(\delta)$ is the corresponding}
\\
 &\text{canonical transfer operator, with $|u|=1$ and $u\neq-\Delta$.
 Then, $T_n(u)$ is polynomialisable.}
\end{array}
\nonumber
\ee
\end{prop}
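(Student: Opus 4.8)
The plan is to apply Corollary~\ref{cor:Cxdiag} uniformly in all three cases, taking $\Ac$ to be the relevant subfactor algebra --- $\mathsf{FC}_n(\gamma)$, $\mathsf{BMW}_n(\tau,q)$, or $\mathsf{L}_n^{(\epsilon)}(\delta)$ --- and $C(u)=T_n(u)$ the corresponding canonical transfer operator. To invoke that corollary, three properties must be verified: (i) $\Ac$ is finite-dimensional, unital, associative and semisimple; (ii) the family $\{T_n(u)\,|\,u\in\Omega\}$ is commuting; and (iii) each $T_n(u)$ is diagonalisable, meaning that its regular representation $\rho_n(T_n(u))$ is.

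For (i), each sans-serif algebra is, by construction, the algebra $A_n$ of a subfactor planar algebra, obtained where necessary by quotienting out the kernel of the trace norm \eqref{norm} for $n>2$, c.f.~the Remark following Theorem~\ref{thm:LiuYBRClass}. As recalled after \eqref{norm}, any such algebra is finite-dimensional and semisimple. For (ii), the canonical transfer operator in each case is built from the $R$-operator \eqref{equ:FCRopParam}, \eqref{equ:BMWRopParam}, or \eqref{RLiu}, together with the $K$-operators fixed by \eqref{KK0}. The Baxterisations exhibited in Sections~\ref{Sec:FC}--\ref{Sec:Liu} were shown to satisfy the inversion identities \eqref{Invs}, the Yang--Baxter equations \eqref{YBE}, and the boundary Yang--Baxter equations \eqref{BYBEs}, so Proposition~\ref{Prop:YBE} yields $[T_n(u),T_n(v)]=0$ for all $u,v\in\Omega$.

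For (iii), I would appeal to Corollary~\ref{cor:genDiagFaith}: the operator $\rho_n(T_n(u))$ is diagonalisable whenever the $R$- and $K$-operators are self-adjoint with respect to the trace form. The $K$-operators equal the unit by \eqref{KK0} and are therefore self-adjoint by \eqref{equ:IdnConj}, so it remains only to restrict $u$ to the domain on which the respective $R$-operator is self-adjoint. These domains were recorded at the close of each Baxterisation subsection and coincide precisely with the hypotheses of the proposition: $u\in\Rbb\setminus\{\gamma^2-1\}$ for \eqref{equ:FCRopParam} (recalling $\delta=\gamma^2$); $u\in\Rbb\setminus\{q^2,\omega\}$ for \eqref{equ:BMWRopParam} with the real-parameter involution $\tau,q\in\Rbb$; and $|u|=1$ with $u\neq-\Delta$ for \eqref{RLiu} with $\delta\in\Rbb$. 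On each such domain, all three hypotheses of Corollary~\ref{cor:Cxdiag} hold, producing $b\in\Ac$ with $T_n(u)\in\Cbb[b]$, which is exactly \eqref{equ:PolInt} and hence the asserted polynomialisability.

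The proof is essentially an assembly of facts already in hand --- semisimplicity, commutativity, and $R$-operator self-adjointness --- so no single step is a genuine obstacle. The only care required is to align the self-adjointness domains with the parameter restrictions in the statement, and to confirm that the involution itself is well-defined: this forces $\tau,q\in\Rbb$ in the BMW case and $\delta\in\Rbb$ in the Liu case, the latter holding on the listed values of $\Lc^{(\epsilon)}_n$. The mildest subtlety, rather than an obstacle, is that the Liu $R$-operator is self-adjoint only on the unit circle and not on a real interval --- a reflection of its lack of crossing symmetry --- so one must take $\Omega$ to meet $\{u\in\Cbb\,|\,|u|=1\}$ rather than $\Rbb$ in that case.
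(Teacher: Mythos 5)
Your proposal is correct and is essentially the argument the paper itself intends: it leaves Proposition \ref{prop:pol} without a formal proof environment, deriving it from Corollary \ref{cor:Cxdiag} (semisimplicity and finite-dimensionality of the subfactor algebras), commutativity via Proposition \ref{Prop:YBE} applied to the Baxterisations of Sections \ref{Sec:FC}--\ref{Sec:Liu}, and diagonalisability via Corollary \ref{cor:genDiagFaith} on exactly the self-adjointness domains you list. Your alignment of those domains with the parameter restrictions in the statement, including the observation that $\Lc^{(\epsilon)}_n\subset\Rbb$ in the Liu case, matches the paper's remarks verbatim in substance.
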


\noindent
\textbf{Remark.}
As the TL subfactor planar algebra is a planar subalgebra of every subfactor planar algebra, 
we have $\delta\in\mathcal{T}\subset\Rbb$ for any subfactor planar algebra, see (\ref{TTL}).
In the Liu case, in particular, it thus holds that $\Lc^{(\epsilon)}_{n}\cap\Rbb = \Lc^{(\epsilon)}_{n}$
for all $n\in\Nbb$ and $\epsilon\in\{-\mathrm{i},\mathrm{i}\}$.
\medskip

We are yet to determine a polynomial integrability generator (i.e.~$b$ in \eqref{equ:PolInt}) for any of the FC, BMW, and Liu models. 
In \cite{PR22}, we do it for the Temperley-Lieb model (see e.g.~\cite{PRZ06}), 
for small $n$, and find that the transfer 
operator is polynomialisable outside the domain specified by the corresponding version of Proposition \ref{prop:pol}. 
We anticipate similar results for the FC, BMW, and Liu models, that is, we expect that
the domains of polynomial integrability stated in Proposition \ref{prop:pol} are not maximal.

\section{Discussion} 
\label{Sec:Discussion}

Based on Liu's classification \cite{Liu15} and the algebraic integrability framework developed in \cite{PR22}, 
we have shown that the only singly generated (subfactor) planar algebras
that can encode homogeneous Yang--Baxter integrability are the YBR planar algebras. In the process,
we have provided a Baxterisation of the Liu planar algebra, and characterised the FC, BMW, and Liu
algebras as quotients of the proto-singly-generated algebra developed in \cite{BJ00,BJ03,BJL17,Liu15}.

In a forthcoming paper, we show how the present work offers a natural framework for defining and describing 
a Yang--Baxter integrable model with an underlying Iwahori-Hecke algebraic structure, see e.g.~\cite{Jones87} 
for an exposition of the Iwahori-Hecke algebra. 
Continuing \cite{PRT14}, we also intend to make precise how the $(2\times2)$-fused Temperley--Lieb algebra 
\cite{BR89,FR02,ZinnJustin07,MDPR14} arises as a quotient of the BMW algebra.

In lattice-model language, our transfer operators are constructed on the strip. When extending to the cylinder or 
annulus, transfer operators may be constructed from \textit{affine tangles}, in which case 
the operators are morphisms of the affine category of a given planar algebra \cite{Jones01,Ghosh11},
see also \cite{PS90,Levy91,MS94} on the so-called periodic Temperley--Lieb algebra.
We hope to return elsewhere with a discussion of how our findings extend to the annular setting; preliminary 
results can be found in \cite{Poncini23}.

A natural continuation of the present work would be to examine \textit{doubly generated} planar algebras,
for example as \textit{singly generated extensions} of the FC, BMW, or Liu algebras.
We would also find it interesting to study models with \textit{inhomogeneous} Yang--Baxter integrability, with or 
without shading, and to explore the physical properties of the various models, including integrals of motion and the 
continuum scaling limit.

\subsection*{Acknowledgements}

This work was supported by the Australian Research Council under the Discovery Project scheme, 
project number DP200102316.
The work of XP was also supported by an Australian Postgraduate Award from the Australian Government.
The authors thank Zhengwei Liu and Paul Pearce for discussions and comments.
They also thank the anonymous reviewer for their thoughtful comments and suggestions.

\appendix

\section{Planar algebras}
\label{app:pa}

\subsection{Naturality}
\label{app:Nat}

Planar tangles can be {\em glued} or composed as follows. Let $T$ and $S$ be planar tangles and suppose there 
exists $D\in\Dc_T$ satisfying $\eta(D)=\eta(D_0^S)$ and $\zeta(D)=\zeta(D_0^S)$.
It is then possible to isotopically deform $S$ such that it can take the place of $D$, as illustrated by
\begin{align}
 T =\raisebox{-1.175cm}{\CompTangTShaded}, \qquad\quad 
 S = \raisebox{-1.175cm}{\CompTangSShaded}, \qquad\quad 
 T\circ_D S = \raisebox{-1.43cm}{\CompTangTSShaded}.
\label{equ:CompositionTS}
\end{align}
Consistency between the composition of planar tangles and the action of the tangles as multilinear 
maps \eqref{PT} is often referred to as {\em naturality} and corresponds to 
\begin{align}\label{equ:NatSimpExII}
    \Pb_{T\circ_D S}(v_t, v_s)=\Pb_T(v_t, \Pb_S(v_s) ), \qquad
    v_t\in\!\bigtimes_{D_t\in\Dc_T\setminus\{D\}}\!A_{\eta(D_t)/2,\, \zeta(D_t)}, \quad 
    v_s\in\!\bigtimes_{D_s\in\Dc_S}\!A_{\eta(D_s)/2,\, \zeta(D_s)},
\end{align}
see \cite{JonesNotes, Poncini23} for more details.

\subsection{Unitality}
\label{app:Unit}

Let $(A_{n,\pm})_{n\in\mathbb{N}_0}$ be a shaded planar algebra.
\begin{lem}\label{lem:id_and_rot_invertible}
    If $A_{n,\pm}$ has no null vectors, then $\Pb_{\mathrm{id}_{n,\pm}}$ is the identity operator.
\end{lem}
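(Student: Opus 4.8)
The plan is to exploit the behaviour of the identity tangle under gluing, together with the naturality axiom \eqref{equ:NatSimpExII}, and then to invoke the hypothesis that $A_{n,\pm}$ has no null vectors. The central observation is that gluing $\mathrm{id}_{n,\pm}$ into a matching input slot leaves a tangle unchanged: if $T$ is any planar tangle possessing a single input disk $D$ with $\eta(D)=2n$ and $\zeta(D)=\pm$, then $T\circ_D\mathrm{id}_{n,\pm}=T$ up to ambient isotopy. This is visible diagrammatically, since the $2n$ radial spokes of $\mathrm{id}_{n,\pm}$ merely extend the strings of $T$ without altering the connectivity, the marking, or the shading.

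First I would fix such a single-input tangle $T$, so that $\Pb_T$ has domain $A_{n,\pm}$, and apply naturality. Because $\Dc_T\setminus\{D\}=\emptyset$, equation \eqref{equ:NatSimpExII} gives, for every $v\in A_{n,\pm}$,
\be
 \Pb_T\big(\Pb_{\mathrm{id}_{n,\pm}}(v)\big)=\Pb_{T\circ_D\mathrm{id}_{n,\pm}}(v)=\Pb_T(v),
\ee
where the last equality uses $T\circ_D\mathrm{id}_{n,\pm}=T$. By linearity of $\Pb_T$, it follows that
\be
 \Pb_T\big(v-\Pb_{\mathrm{id}_{n,\pm}}(v)\big)=0.
\ee

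Next I would note that the tangle $T$ above was arbitrary among those whose associated map has domain $A_{n,\pm}$. Hence, setting $w:=v-\Pb_{\mathrm{id}_{n,\pm}}(v)$, we obtain $\Pb_T(w)=0$ for every such $T$. By the definition of null vectors recalled after \eqref{idnpm}, $w$ is therefore either the zero vector or a null vector in $A_{n,\pm}$. Since, by hypothesis, $A_{n,\pm}$ has no null vectors, we conclude $w=0$, that is, $\Pb_{\mathrm{id}_{n,\pm}}(v)=v$. As $v\in A_{n,\pm}$ was arbitrary, $\Pb_{\mathrm{id}_{n,\pm}}$ is the identity operator.

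The only delicate point is the tangle identity $T\circ_D\mathrm{id}_{n,\pm}=T$: one must check that the normalisation of $\mathrm{id}_{n,\pm}$ fixed in \eqref{idnpm} --- in particular the placement of its marked boundary interval and the $\pm$ checkerboard shading of its regions --- is precisely the one rendering the glued diagram isotopic to $T$ for \emph{both} shadings. Once this bookkeeping is settled, the remainder is a direct application of naturality and the no-null-vectors hypothesis, so I expect no further obstacle; indeed, the same scheme underlies the companion statements about unitality and the invertibility of the rotation tangles discussed in this appendix.
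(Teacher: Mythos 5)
Your proposal is correct and follows essentially the same route as the paper's own proof: apply naturality to $T\circ_D\mathrm{id}_{n,\pm}=T$ for an arbitrary single-input tangle $T$ with domain $A_{n,\pm}$, deduce $\Pb_T(v-\Pb_{\mathrm{id}_{n,\pm}}(v))=0$, and invoke the absence of null vectors. Your extra remark about checking the marking and shading conventions in \eqref{idnpm} is a reasonable point of care but is left implicit in the paper.
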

\begin{proof}
Let $v\in A_{n,\pm}$ and $T$ be a planar tangle for which $\Pb_T$ has domain $A_{n,\pm}$. 
By naturality, we then have
\be
 \Pb_{T\circ_D\,\mathrm{id}_{n,\pm}}(v) = \Pb_{T}(\Pb_{\mathrm{id}_{n,\pm}}(v)),
\ee
hence
\be
   \Pb_{T}(v - \Pb_{\mathrm{id}_{n,\pm}}(v)) = 0,
\ee
so $v - \Pb_{\mathrm{id}_{n,\pm}}(v)\in \mathrm{ker}(\Pb_T)$.
Since $A_{n,\pm}$ has no null vectors, it follows that $\Pb_{\mathrm{id}_{n,\pm}}(v) = v$ for all $v\in A_{n,\pm}$.
\end{proof}
\noindent
Together with naturality, Lemma \ref{lem:id_and_rot_invertible} implies the following result.
\begin{cor}\label{cor:ElRot}
 If $A_{n,+}$ and $A_{n,-}$ have no null vectors, 
 then $\Pb_{r_{n,-\ell,\mp}}\circ\Pb_{r_{n,\ell,\pm}}=\Pb_{\mathrm{id}_{n,\pm}}$ 
 for each $\ell\in\{-1,1\}$.
\end{cor}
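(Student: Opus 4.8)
The plan is to imitate the proof of Lemma \ref{lem:id_and_rot_invertible} and reduce the claim to that lemma by invoking naturality. The essential diagrammatic input is that the elementary rotation tangles of \eqref{equ:ElRot} undo one another: rotating the marked boundary interval by $\ell$ positions and then by $-\ell$ positions restores the original configuration, so that, up to ambient isotopy, the composite of planar tangles satisfies
\be
 r_{n,-\ell,\mp}\circ_D r_{n,\ell,\pm}=\mathrm{id}_{n,\pm},
\ee
where $D$ is the unique input disk of $r_{n,-\ell,\mp}$, whose shading $\mp$ matches the output shading of $r_{n,\ell,\pm}$ so that the gluing is admissible.

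With this tangle identity in hand, I would apply naturality \eqref{equ:NatSimpExII} with $T=r_{n,-\ell,\mp}$ and $S=r_{n,\ell,\pm}$, noting that $T$ has a single input disk (so the $v_t$-slot is empty) and $S$ takes a single argument $v\in A_{n,\pm}$. For each such $v$ this yields
\be
 \Pb_{\mathrm{id}_{n,\pm}}(v)=\Pb_{r_{n,-\ell,\mp}\circ_D r_{n,\ell,\pm}}(v)
 =\Pb_{r_{n,-\ell,\mp}}\big(\Pb_{r_{n,\ell,\pm}}(v)\big),
\ee
so the operator $\Pb_{r_{n,-\ell,\mp}}\circ\Pb_{r_{n,\ell,\pm}}$ agrees with $\Pb_{\mathrm{id}_{n,\pm}}$ on all of $A_{n,\pm}$. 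Since $A_{n,\pm}$ has no null vectors, Lemma \ref{lem:id_and_rot_invertible} identifies $\Pb_{\mathrm{id}_{n,\pm}}$ with the identity operator, and the displayed equality delivers the corollary for that output shading. Because the statement ranges over both output shadings, the hypothesis that neither $A_{n,+}$ nor $A_{n,-}$ has null vectors is exactly what is needed so that Lemma \ref{lem:id_and_rot_invertible} is available in each case.

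I expect the only genuine subtlety to be the first step, namely the picture-level verification that the two elementary rotations compose (up to isotopy) to $\mathrm{id}_{n,\pm}$ with consistent shading along the gluing interface; this is a purely diagrammatic fact about the tangles in \eqref{equ:ElRot}. Once it is granted, naturality and Lemma \ref{lem:id_and_rot_invertible} finish the argument with no further computation, and in particular the claimed identity holds as an equality of maps, not merely up to null vectors.
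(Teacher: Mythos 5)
Your argument is correct and matches the paper's own (one-line) proof: the composite rotation tangles are isotopic to $\mathrm{id}_{n,\pm}$, naturality then gives $\Pb_{r_{n,-\ell,\mp}}\circ\Pb_{r_{n,\ell,\pm}}=\Pb_{\mathrm{id}_{n,\pm}}$, and Lemma \ref{lem:id_and_rot_invertible} upgrades this to invertibility of the rotations under the no-null-vector hypothesis. No changes needed.
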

\begin{prop}\label{prop:AssocUnitAlg}
Let $A_{n,\pm}$ be endowed with the multiplication induced by $M_{n,\pm}$, and suppose
$A_{n,\pm}$ has no null vectors. Then, $A_{n,\pm}$ is unital, with unit $\mathds{1}_{n,\pm}$.
\end{prop}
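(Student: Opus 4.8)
The plan is to reduce the statement to the identity-operator result of Lemma \ref{lem:id_and_rot_invertible} by means of naturality. Recall from \eqref{Aunit} that the proposed unit is $\mathds{1}_{n,\pm}=\Pb_{\mathrm{Id}_{n,\pm}}()$, the image of the zero-input identity tangle $\mathrm{Id}_{n,\pm}$, and that the multiplication is $vw=\Pb_{M_{n,\pm}}(v,w)$ with $v$ filling the lower input disk $D_\ell$ and $w$ the upper input disk $D_u$ of $M_{n,\pm}$. First I would rewrite the left and right products in terms of gluing: $\mathds{1}_{n,\pm}v=\Pb_{M_{n,\pm}}\big(\Pb_{\mathrm{Id}_{n,\pm}}(),v\big)$ and $v\,\mathds{1}_{n,\pm}=\Pb_{M_{n,\pm}}\big(v,\Pb_{\mathrm{Id}_{n,\pm}}()\big)$.

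The crux is a purely diagrammatic observation: gluing $\mathrm{Id}_{n,\pm}$ into the lower (respectively upper) input disk of $M_{n,\pm}$ yields, up to ambient isotopy, the single-input identity tangle $\mathrm{id}_{n,\pm}$ of \eqref{idnpm}. Indeed, $\mathrm{Id}_{n,\pm}$ consists of $n$ straight through-strands with no internal structure, so filling one of the two stacked disks of $M_{n,\pm}$ with it simply absorbs that disk and leaves the remaining input disk joined to the output by through-strands, which is exactly $\mathrm{id}_{n,\pm}$. In the notation of \eqref{equ:CompositionTS}, I would record this as $M_{n,\pm}\circ_{D_\ell}\mathrm{Id}_{n,\pm}=\mathrm{id}_{n,\pm}=M_{n,\pm}\circ_{D_u}\mathrm{Id}_{n,\pm}$ as planar tangles, noting that $\eta(D_\ell)=\eta(D_u)=\eta(D_0^{\mathrm{Id}_{n,\pm}})=2n$ and the shadings agree, so the gluings are admissible.

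With this identification in hand, naturality \eqref{equ:NatSimpExII} (applied with the no-input tangle $S=\mathrm{Id}_{n,\pm}$, so that $\Pb_S()=\mathds{1}_{n,\pm}$) gives, for all $v\in A_{n,\pm}$,
\[
\mathds{1}_{n,\pm}v=\Pb_{M_{n,\pm}\circ_{D_\ell}\mathrm{Id}_{n,\pm}}(v)=\Pb_{\mathrm{id}_{n,\pm}}(v),\qquad v\,\mathds{1}_{n,\pm}=\Pb_{M_{n,\pm}\circ_{D_u}\mathrm{Id}_{n,\pm}}(v)=\Pb_{\mathrm{id}_{n,\pm}}(v).
\]
Since $A_{n,\pm}$ has no null vectors, Lemma \ref{lem:id_and_rot_invertible} asserts that $\Pb_{\mathrm{id}_{n,\pm}}$ is the identity operator, so $\mathds{1}_{n,\pm}v=v=v\,\mathds{1}_{n,\pm}$, which is precisely the claim that $\mathds{1}_{n,\pm}$ is a two-sided unit.

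The main obstacle will be making the diagrammatic step $M_{n,\pm}\circ_{D}\mathrm{Id}_{n,\pm}=\mathrm{id}_{n,\pm}$ fully rigorous: although visually transparent, it requires verifying that the marked boundary intervals align and that the checkerboard shading stays consistent under the gluing, so that the composite is literally the tangle $\mathrm{id}_{n,\pm}$ and not a rotated or shade-reversed variant. Once this isotopy is confirmed, the remainder is a direct application of naturality together with the no-null-vector hypothesis, and no further calculation is needed.
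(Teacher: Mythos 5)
Your proposal is correct and follows essentially the same route as the paper's own proof: both glue $\mathrm{Id}_{n,\pm}$ into each input disk of $M_{n,\pm}$, identify the composite with $\mathrm{id}_{n,\pm}$, invoke naturality, and then apply Lemma \ref{lem:id_and_rot_invertible} under the no-null-vector hypothesis. The only difference is that you flag the diagrammatic identification $M_{n,\pm}\circ_{D}\mathrm{Id}_{n,\pm}=\mathrm{id}_{n,\pm}$ as a step requiring care, whereas the paper treats it as immediate.
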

\begin{proof}
    Let $D_l$ and $D_u$ denote the lower respectively upper disk in the planar tangle $M_{n,\pm}$, 
    and let $v\in A_{n,\pm}$. Then, naturality implies that
    \begin{align}
  v\mathds{1}_{n,\pm}
  =\Pb_{M_{n,\pm}}(v, \mathds{1}_{n,\pm})
  =\Pb_{M_{n,\pm}\circ_{D_u} \mathrm{Id}_{n,\pm}}(v)
  =\Pb_{\mathrm{id}_{n,\pm}}(v)
  =\Pb_{M_{n,\pm}\circ_{D_l} \mathrm{Id}_{n,\pm}}(v)
  =\Pb_{M_{n,\pm}}(\mathds{1}_{n,\pm},v)
  =\mathds{1}_{n,\pm}v.
    \end{align}
By Lemma \ref{lem:id_and_rot_invertible}, $\Pb_{\mathrm{id}_{n,\pm}}(v)=v$, 
hence $v\mathds{1}_{n,\pm}=v=\mathds{1}_{n,\pm}v$.
\end{proof}

\section{Temperley--Lieb algebra}
\label{app:TL}

The Temperley--Lieb planar algebra $(\mathrm{T}_{n,\pm})_{n\in\mathbb{N}_0}$ discussed in 
Section \ref{subsec:TLPA} admits an \textit{unshaded} version: the \textit{(unshaded) TL planar algebra} 
$(\mathrm{T}_{n})_{n\in\mathbb{N}_0}$, where the shading of planar tangles and disks in 
$(\mathrm{T}_{n,\pm})_{n\in\mathbb{N}_0}$ is omitted. For each $n\in\mathbb{N}$, the corresponding 
\textit{TL algebra} $\mathrm{TL}_{n}(\delta)$ is defined by endowing the vector space $\mathrm{T}_n$ with the 
multiplication induced by the unshaded companion to \eqref{equ:MultAndCoMultPlanar}. 
We note that $\mathrm{TL}_{n}(\delta)$ is both unital (with unit denoted by $\mathds{1}$) and associative, and that 
it is a $^*$-algebra with involution inherited from $(\mathrm{T}_{n})_{n\in\mathbb{N}_0}$. As is well-known 
\cite{Kauffman87}, the generators of $\mathrm{TL}_{n}(\delta)$ can be expressed diagrammatically as
\begin{align}
    \mathds{1}\ \leftrightarrow \raisebox{-0.885cm}{\BMWRecIdOdd},\qquad 
    e_i\ \leftrightarrow\raisebox{-0.885cm}{\BMWRecEOdd}\qquad (i=1,\ldots,n-1).
\end{align}

The TL algebra $\mathrm{TL}_{n}(\delta)$ admits a presentation:
\begin{align}
    \mathrm{TL}_n(\delta) \cong \langle e_1,\ldots,e_{n-1}\rangle,
\end{align}
subject to the relations
\begin{align}
        e_i^2 &= \delta e_i,\\
        e_je_{i}e_j &= e_{j}, \quad\quad\quad |i-j|=1,\\
        e_ie_j &= e_je_i, \quad\quad\: |i-j|>1.
\end{align}
For each $n\in\Nbb$, there exists unique $\wj_n\in\mathrm{TL}_{n}(\delta)$ such that
\be
 \wj_n^2=\wj_n,\qquad e_i\wj_n=\wj_ne_i=0,\qquad i=1,\ldots,n-1.
\label{wj}
\ee
These {\em Jones--Wenzl idempotents} \cite{Wenzl87} are used in Section \ref{Sec:Proto}.

\section{Braid-semigroup algebra quotient}
\label{app:BS}

For each $n\in\Nbb$, let the \textit{braid-semigroup algebra} $\mathrm{BS}_n$ be the associative unital algebra 
$\langle b_1,\ldots,b_{n-1}\rangle$ subject to
\begin{align}
 b_ib_jb_i&= b_jb_{i}b_j,\qquad  |i-j|=1,
\label{bbb}
 \\[.15cm]
 b_ib_j&= b_jb_i,\qquad \ \ \, |i-j|>1,
\label{bb}
\end{align}
with unit denoted by $\mathds{1}$. Since $\{b_1^k\,|\,k\in\Nbb\}$ is linearly independent, 
$\mathrm{BS}_n$ is infinite-dimensional for $n>1$. This may be remedied by quotienting out the ideal 
generated by an appropriate set of polynomials. In the following, we demonstrate how the BMW algebra 
$\mathrm{BMW}_n(\tau,q)$ can be obtained as such a quotient. Although this is likely known to experts, we have 
not been able to find it in the literature and include it for the interested reader.

Let $n\in\Nbb$, $\tau,Q\in\Cbb$ and
\begin{align}\label{equ:IotaI}
 \iota_i&:=(b_i^2-Qb_i-\mathds{1})(\tau b_i-\mathds{1}),
\\[.15cm]
 \iota_{i,j}^{(1)}&:=Q(b_ib_ib_jb_i-Qb_ib_jb_i-b_jb_i)+\tau(b_i^2-Qb_i-\mathds{1})(b_j^2-Qb_j-\mathds{1}), 
\\[.15cm]
 \iota_{i,j}^{(2)}&:=b_jb_i(b_j^2-Qb_j-\mathds{1})b_ib_j-(b_i^2-Qb_i-\mathds{1}),
\end{align}
for all $i,j\in\{1,\ldots,n-1\}$.
\begin{prop}
With $Q=q-q^{-1}$, we have
\be
 \mathrm{BMW}_n(\tau,q)
 \cong\mathrm{BS}_n/\big\langle \iota_i, \iota_{i,j}^{(1)}, \iota_{i,j}^{(2)}\,|\,|i-j|=1;\,i,j\in\{1,\ldots,n-1\}\big\rangle.
\ee
\end{prop}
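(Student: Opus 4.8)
The plan is to establish the isomorphism by exhibiting an explicit map and verifying that the given ideal is precisely the kernel, realising $\mathrm{BMW}_n(\tau,q)$ as the quotient presented. First I would define an algebra homomorphism $\Phi\colon\mathrm{BS}_n\to\mathrm{BMW}_n(\tau,q)$ by sending each braid generator $b_i$ to $g_i$, the invertible BMW generator. This is well-defined because the braid relations \eqref{bbb}--\eqref{bb} are among the defining relations \eqref{BMWrel} of $\mathrm{BMW}_n(\tau,q)$ (the far-commutativity $g_ig_j=g_jg_i$ for $|i-j|>1$ and the braid relation $g_ig_{i\pm1}g_i=g_{i\pm1}g_ig_{i\pm1}$). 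The key observation is that in the BMW algebra, $e_i$ can be recovered as a polynomial in $g_i$: from $g_i-g_i^{-1}=Q(\mathds{1}-e_i)$ together with $g_ie_i=\tau^{-1}e_i$, one finds that $g_i$ satisfies the cubic $(g_i^2-Qg_i-\mathds{1})(\tau g_i-\mathds{1})=0$, so that $\Phi(\iota_i)=0$.

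The main work is then to show that each of the three families of ideal generators maps to zero and conversely that they generate the full kernel. For $\iota_i$: I would derive the cubic minimal-type relation for $g_i$ directly from \eqref{BMWrel}, using $g_i^2=Qg_i+\mathds{1}+$ (a term proportional to $e_i$) and $\tau g_i e_i=e_i$; multiplying the quadratic factor by $(\tau g_i-\mathds{1})$ should annihilate it, giving $\Phi(\iota_i)=0$. For $\iota_{i,j}^{(1)}$ and $\iota_{i,j}^{(2)}$ with $|i-j|=1$, the strategy is to express everything in terms of $e_i,e_j,g_i,g_j$ and reduce using the mixed relations $g_ie_{j}g_i=g_{j}^{-1}e_ig_{j}^{-1}$, $e_ig_{j}g_i=e_ie_{j}$, and $g_{i}g_{j}e_i=e_jg_{i}^{-1}$ from \eqref{BMWrel}, checking that each combination collapses to zero. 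These are finite computations in the two-generator subalgebra and are the kind of verification I would carry out by substituting $e_i=\mathds{1}-Q^{-1}(g_i-g_i^{-1})$.

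For the reverse containment — that the quotient map is injective, equivalently that the listed relations suffice — I would argue by dimension or by constructing an inverse. The cleanest route is to show that in $\mathrm{BS}_n/\langle\iota_i,\iota_{i,j}^{(1)},\iota_{i,j}^{(2)}\rangle$ one can \emph{define} elements $\hat e_i:=\mathds{1}-Q^{-1}(b_i-b_i^{-1})$ (noting that $b_i$ is invertible in the quotient, since $\iota_i$ forces $b_i$ to satisfy a cubic with nonzero constant term, giving $b_i^{-1}$ as a polynomial in $b_i$), and then verify that $\hat e_i$ and $b_i$ satisfy all the BMW relations \eqref{BMWrel}. The relation $\iota_i$ supplies the cubic, hence $g_ie_i=\tau^{-1}e_i$ and $e_i^2=\delta e_i$; the relations $\iota_{i,j}^{(1)}$ and $\iota_{i,j}^{(2)}$ are precisely engineered to yield the mixed relations $e_ig_{i\pm1}g_i=e_ie_{i\pm1}$ and $g_ie_{i\pm1}g_i=g_{i\pm1}^{-1}e_ig_{i\pm1}^{-1}$ respectively. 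This produces a homomorphism $\mathrm{BMW}_n(\tau,q)\to\mathrm{BS}_n/\langle\cdots\rangle$ inverse to $\Phi$, completing the isomorphism.

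The hard part will be the explicit verification that $\iota_{i,j}^{(1)}$ and $\iota_{i,j}^{(2)}$ correspond exactly to the mixed BMW relations and that no further relations are needed — in particular confirming that the single cubic $\iota_i$ together with these two mixed-relation generators is enough to recover $e_ie_{i\pm1}e_i=e_i$ and all consequences, rather than merely a larger quotient. Establishing the dimension count $\dim=(2n-1)!!$ on the $\mathrm{BS}_n$-quotient side, or equivalently ruling out extra relations, is where care is required; I would lean on Wenzl's presentation result cited after \eqref{BMWrel} to identify the target's dimension and then check surjectivity of the inverse map to force the isomorphism.
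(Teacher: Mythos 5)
Your proposal is correct and follows essentially the same route as the paper: the paper likewise works with the quotient of $\mathrm{BS}_n$, defines $e_i:=-\tfrac{\tau}{Q}(b_i^2-Qb_i-\mathds{1})$ (which coincides with your $\hat e_i=\mathds{1}-Q^{-1}(b_i-b_i^{-1})$ via the explicit inverse $b_i^{-1}=-\tau b_i^2+(\tau Q+1)b_i+(\tau-Q)\mathds{1}$), and checks that the vanishing of $\iota_i$, $\iota_{i,j}^{(1)}$, $\iota_{i,j}^{(2)}$ reproduces exactly Wenzl's presentation \eqref{BMWrel}, with $e_i^2=\delta e_i$ and $e_ie_{i\pm1}e_i=e_i$ then following as consequences. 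The only cosmetic difference is that you phrase the argument as a pair of mutually inverse homomorphisms while the paper verifies the correspondence of relations directly inside the quotient.
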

\begin{proof}
In the quotient algebra, $b_i$ is invertible, with inverse given by
\be
 b_i^{-1}=-\tau b_i^2+(\tau Q+1)b_i+(\tau-Q)\mathds{1},
\ee
even if $\tau=0$ or $Q=0$.
For $\tau,Q\neq0$, it is convenient to describe the quotient algebra using
\be
 e_i:=-\frac{\tau}{Q}(b_i^2-Qb_i-\mathds{1}),
\ee
noting that
\be
 b_i-b_i^{-1}=Q(\mathds{1}-e_i)
\label{ggm1}
\ee
and that
\be
 b_ie_j=e_jb_i,\qquad e_ie_j=e_je_i,\qquad |i-j|>1.
\ee
The vanishing of $\iota_i$ and $\iota_{i,j}^{(\ell)}$, $\ell=1,2$, now corresponds to
\be
 b_ie_i = e_ib_i = \tau^{-1}e_i
\ee
and
\be
 e_ib_{i\pm1}b_i= b_{i\pm1}b_ie_{i\pm1}= e_{i}e_{i\pm1},\qquad 
 b_{i}e_{i\pm1}b_{i} = b_{i\pm1}^{-1}e_{i}b_{i\pm1}^{-1},
\label{iotaij}
\ee
respectively, where the first equality in (\ref{iotaij}) is a simple consequence of the braid relations (\ref{bbb}).
It also follows that
\be
 b_{i}e_{i\pm 1}e_{i} = b_{i\pm 1}^{-1}e_{i},\qquad
    e_{i}e_{i\pm 1}b_{i} = e_{i}b_{i\pm 1}^{-1},\qquad
    e_{i}b_{i\pm 1}e_{i} = \tau e_{i},
\ee
and
\be
 e_i^2=\delta e_i,\qquad e_ie_{i\pm1}e_i=e_i,
\label{ee}
\ee
where $\delta$ is given as in (\ref{BMWdelta}).
\end{proof}

We note that the BMW algebra can similarly be expressed as a quotient of the braid-group algebra
but find the semigroup description above more appealing. In particular, the invertibility of the basic generators 
in $\mathrm{BMW}_n$ follows from \eqref{equ:IotaI}, while it is `built-in' if starting with the braid-group algebra.

\end{document}